\documentclass[runningheads]{llncs}
%\documentclass[a4paper,UKenglish,cleveref, autoref, thm-restate,authorcolumns,final]{lipics-v2019} % hide all comments / colour lighlighting

% \newif\ifdraft\drafttrue
\newif\ifdraft\draftfalse

\newif\ifproceedings\proceedingsfalse

%This is a template for producing LIPIcs articles.
%See lipics-manual.pdf for further information.
%for A4 paper format use option "a4paper", for US-letter use option "letterpaper"
%for british hyphenation rules use option "UKenglish", for american hyphenation rules use option "USenglish"
%for section-numbered lemmas etc., use "numberwithinsect"
%for enabling cleveref support, use "cleveref"
%for enabling autoref support, use "autoref"
%for anonymousing the authors (e.g. for double-blind review), add "anonymous"
%for enabling thm-restate support, use "thm-restate"

%\graphicspath{{./graphics/}}%helpful if your graphic files are in another directory

%\bibliographystyle{plainurl}% the mandatory bibstyle for LIPIcs
\bibliographystyle{plain}
%\usepackage[square]{natbib}

%\category{} %optional, e.g. invited paper

%\relatedversion{} %optional, e.g. full version hosted on arXiv, HAL, or other respository/website
%\relatedversion{A full version of the paper is available at \url{...}.}

%\supplement{}%optional, e.g. related research data, source code, ... hosted on a repository like zenodo, figshare, GitHub, ...

%\funding{(Optional) general funding statement \dots}%optional, to capture a funding statement, which applies to all authors. Please enter author specific funding statements as fifth argument of the \author macro.

%\acknowledgements{I want to thank \dots}%optional

% !TEX root = ./main.tex

% %%%%%%%%%%%%%%%%%%%%%%%%%%%%%%%%%%%%%%%%%%%%%%%
% %%% BEGIN - Macros by Luke
% % Switch between draft and public version
%\savesymbol{comment}

%\usepackage[UTF8]{ctex}
%\usepackage[utf8]{inputenc}
%\usepackage[T1]{fontenc}

\usepackage{amsthm} % advanced maths symbols

\ifdraft
\usepackage[draft]{commenting}
\else
\usepackage[nompar]{commenting}
\fi

\declareauthor{lo}{lo}{purple}
\authorcommand{lo}{comment}

\declareauthor{hp}{hp}{blue}
\authorcommand{hp}{comment}

\declareauthor{cm}{cm}{green!50!black}
\authorcommand{cm}{comment}
\setdefaultauthor{cm}

\declareauthor{dw}{dw}{orange}
\authorcommand{dw}{comment}

\usepackage[normalem]{ulem}

\DeclareMathAlphabet{\mathpzc}{OT1}{pzc}{m}{it} %make \mathpzc typeset its argument in Zapf Chancery

\usepackage[bookmarks,unicode,colorlinks=true]{hyperref}

\iffalse
\hypersetup{%
    % bookmarks,
    bookmarksopen,
    % bookmarksnumbered,
    colorlinks,
    linkcolor=blue,
    citecolor=brown,
    urlcolor=blue,
    pdfstartview=FitH}
\fi
%\usepackage{mathbbol}
\usepackage[bbgreekl]{mathbbol}
%\usepackage{dsfont} %blackboard bold font for \mathds{\lambda}

% \DeclareSymbolFontAlphabet{\mathbb}{AMSb}
\DeclareSymbolFontAlphabet{\mathbbl}{bbold}
\DeclareMathSymbol\bbDelta  \mathord{bbold}{"01}

\usepackage{wrapfig}

\usepackage{url}
%\newcommand\scontr{\bbDelta\!\!{\cdot}}

% %%% END - Macros by Luke
% %%%%%%%%%%%%%%%%%%%%%%%%%%%%%%%%%%%%%%%%%%%%%%%

% always include
% \usepackage[a4paper,margin=1.2in]{geometry} %margin=2cm
% \usepackage[utf8]{inputenc} % utf8 support
% \usepackage[T1]{fontenc} % code for pdf file

% bibliography
% \usepackage[backend=biber,url=false,doi=false,style=alphabetic]{biblatex} % bibliography using biber

%\usepackage[style=authoryear-comp,firstinits=true,citestyle=authoryear,natbib=true,backend=bibtex]{biblatex}

\usepackage{etex}
% maths tools
\usepackage{amssymb}
\usepackage{stmaryrd}
\usepackage{mathtools}
\usepackage{dsfont}
\usepackage[inline]{enumitem}

\usepackage{relsize} % allow resizing of math symbols
\usepackage{bussproofs} % proof trees
\usepackage{proof}
\usepackage{algorithm} % algorithms
\usepackage{listings} % listings for writing code
\usepackage{tabularx,tabulary,multirow} % table tools
\usepackage{tikz}
\usepackage{etoolbox} % Less space before and after tabularx
\usepackage{graphicx}
\usepackage{subcaption}
\captionsetup{compatibility=false}

%for Letter/envelop
\usepackage[misc]{ifsym}

\usepackage{xcolor}
\newcommand{\mathhighlight}[1]{\colorbox{black!13}{$\displaystyle #1$}}

\definecolor{codegreen}{rgb}{0,0.6,0}
\definecolor{codegray}{rgb}{0.4,0.4,0.4}
\definecolor{codepurple}{rgb}{0.58,0,0.82}

\lstset{
language= caml,
 %columns=[c]fixed,
 basicstyle=\footnotesize,%\ttfamily,
 keywordstyle={\bfseries\color{blue}}, %\color{codegreen},
 backgroundcolor=\color{white},
 commentstyle=\itshape\color{codegray},
 breaklines=true,
 showstringspaces=false}

\usetikzlibrary{cd,trees,shapes,patterns}

\newtheorem{assumption}{Assumption}
\usepackage{cleveref}
% for package cleveref
% With abbreviated names
% \Crefname{theorem}{Thm.}{Theorems}
% \Crefname{corollary}{Cor.}{Corollary}
% \crefname{proposition}{Prop.}{Propositions}
% \Crefname{claim}{Claim}{Claims}
% \Crefname{definition}{Def.}{Definitions}
% \Crefname{fact}{Fact}{Facts}
% \Crefname{conj}{Conjecture}{Conjectures}
% \Crefname{example}{Ex.}{Examples}
% \Crefname{remark}{Rem.}{Remarks}
% \Crefname{convention}{Convention}{Conventions}
% \Crefname{lemma}{Lemma}{Lemmas}
% \Crefname{assumption}{Ass.}{Assumptions}
% \Crefname{section}{Sec.}{Sec.}
% \Crefname{appendix}{App.}{App.}
% \Crefname{figure}{Fig.}{Fig.}

% No abbreviations
\Crefname{theorem}{Thm.}{Theorems}
\Crefname{corollary}{Cor.}{Corollary}
\Crefname{proposition}{Prop.}{Propositions}
\Crefname{claim}{Claim}{Claims}
\Crefname{definition}{Def.}{Definitions}
\Crefname{fact}{Fact}{Facts}
\Crefname{conj}{Conjecture}{Conjectures}
\Crefname{example}{Ex.}{Examples}
\Crefname{remark}{Remark}{Remarks}
\Crefname{convention}{Convention}{Conventions}
\Crefname{lemma}{Lem.}{Lemmas}
\Crefname{assumption}{Assumption}{Assumptions}
\Crefname{section}{Sec.}{Sections}
\Crefname{appendix}{Appendix}{Appendices}
\Crefname{figure}{Fig.}{Figures}
\crefname{theorem}{Thm.}{Theorems}
\crefname{corollary}{Cor.}{Corollary}
\crefname{proposition}{Prop.}{Propositions}
\crefname{claim}{Claim}{Claims}
\crefname{definition}{Def.}{Definitions}
\crefname{fact}{Fact}{Facts}
\crefname{conj}{Conjecture}{Conjectures}
\crefname{example}{Ex.}{Examples}
\crefname{remark}{Remark}{Remarks}
\crefname{convention}{Convention}{Conventions}
\crefname{lemma}{Lem.}{Lemmas}
\crefname{assumption}{Assumption}{Assumptions}
\crefname{section}{Sec.}{Sections}
\crefname{appendix}{Appendix}{Appendices}
\crefname{figure}{Fig.}{Figures}
\crefname{equation}{Eq.}{Equations}

% \addtotheorempostheadhook[theorem]{\crefalias{thmlisti}{theorem}}
% \addtotheorempostheadhook[lemma]{\crefalias{thmlisti}{lemma}}
% \addtotheorempostheadhook[proposition]{\crefalias{thmlisti}{proposition}}
% \addtotheorempostheadhook[corollary]{\crefalias{thmlisti}{corollary}}
% \addtotheorempostheadhook[claim]{\crefalias{thmlisti}{claim}}
% \addtotheorempostheadhook[fact]{\crefalias{thmlisti}{fact}}
% \addtotheorempostheadhook[example]{\crefalias{thmlisti}{example}}
% \addtotheorempostheadhook[definition]{\crefalias{thmlisti}{definition}}
% \addtotheorempostheadhook[remark]{\crefalias{thmlisti}{remark}}
% \addtotheorempostheadhook[assumption]{\crefalias{thmlisti}{assumption}}

%\usepackage{thmtools}
\usepackage{thm-restate}

\usepackage{filecontents}
\usepackage{pgfplots, pgfplotstable}
\usepgfplotslibrary{statistics}
\pgfplotsset{compat=1.16}

% !TEX root = ./main.tex

% maths marcos

% maths symbols
\newcommand{\Nat}{\mathbb{N}}
\newcommand{\Real}{\mathbb{R}}
\newcommand{\Rational}{\mathbb{Q}}

\newcommand{\Borel}{\mathcal{B}}
\newcommand{\Function}[5]{
  \begin{align*}
    #1: \quad #2 & \;\longrightarrow\; #3 \\
              #4 & \;\longmapsto\;     #5
  \end{align*}
}
\newcommand{\Case}[3]{
  \begin{cases}
    #1 & \text{if } #2,\\
    #3 & \text{otherwise.}
  \end{cases}
}
\newcommand{\inv}[1]{{#1}^{-1}}
\newcommand{\charfn}[1]{\mathbbl{1}_{#1}}
\newcommand{\interior}[1]{\mathring{#1}}
\newcommand{\closure}[1]{\overline{#1}}

\newcommand{\dom}[1]{\mathsf{dom}(#1)}
\newcommand{\domnp}{\mathsf{dom}}
\newcommand{\salgebra}{\Sigma}

% syntax
\newcommand{\PCFReal}{\mathsf{R}}
\newcommand{\tyarrow}{\Rightarrow}
\newcommand{\PCF}[1]{\underline{#1}}
\newcommand{\If}[3]{\mathsf{if}\big(#1, #2, #3\big)}
\newcommand{\Sample}{\mathsf{sample}}
\newcommand{\Score}[1]{\mathsf{score}(#1)}
\newcommand{\Y}[1]{\mathsf{Y}{#1}}
\newcommand{\Fail}{\mathsf{fail}}

\newcommand{\terms}{\Lambda}
\newcommand{\closedterms}{\Lambda^0}
\newcommand{\values}{\Lambda_v}
\newcommand{\closedvalues}{\Lambda^0_v}

\newcommand{\primitives}{\mathcal{F}}

\newcommand{\walk}{\mathsf{walk}}
\newcommand{\passZero}{\walk}
\newcommand{\letin}[1]{\mathsf{let}\ {#1}\ \mathsf{in}\ }
\newcommand{\Normal}[2]{\mathcal{N}(#1,#2)}

\newcommand{\dist}{s}
\newcommand{\sco}{z}

% meta-variables
\newcommand{\terma}{M}
\newcommand{\termb}{N}
\newcommand{\termc}{L}
\newcommand{\typea}{\sigma}
\newcommand{\typeb}{\tau}

% operational semantics
\newcommand{\traces}{\mathbb{S}}
\newcommand{\trace}{\boldsymbol{s}}
\newcommand{\traceb}{\boldsymbol{s'}}
\newcommand{\traceseq}[1]{[#1]}
\newcommand{\trmeasure}{\mu_{\traces}}
\newcommand{\emptytrace}{{[]}}
\newcommand{\evalcon}{E}
\newcommand{\valuea}{V}
\newcommand{\redexa}{R}

\makeatletter

\RequirePackage[bookmarks,unicode,colorlinks=true]{hyperref}%
   \def\@citecolor{blue}%
   \def\@urlcolor{blue}%
   \def\@linkcolor{blue}%

\def\orcidID#1{\smash{\href{http://orcid.org/#1}{\protect\raisebox{-1.25pt}{\protect\includegraphics{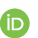}}}}}

\newcommand{\dotDelta}{{\vphantom{\Delta}\mathpalette\d@tD@lta\relax}}
\newcommand{\d@tD@lta}[2]{%
  \ooalign{\hidewidth$\m@th#1\mkern-1mu\cdot$\hidewidth\cr$\m@th#1\Delta$\cr}%
}

\DeclareFontFamily{OMX}{MnSymbolE}{}
\DeclareSymbolFont{MnLargeSymbols}{OMX}{MnSymbolE}{m}{n}
\SetSymbolFont{MnLargeSymbols}{bold}{OMX}{MnSymbolE}{b}{n}
\DeclareFontShape{OMX}{MnSymbolE}{m}{n}{
    <-6>  MnSymbolE5
   <6-7>  MnSymbolE6
   <7-8>  MnSymbolE7
   <8-9>  MnSymbolE8
   <9-10> MnSymbolE9
  <10-12> MnSymbolE10
  <12->   MnSymbolE12
}{}
\DeclareFontShape{OMX}{MnSymbolE}{b}{n}{
    <-6>  MnSymbolE-Bold5
   <6-7>  MnSymbolE-Bold6
   <7-8>  MnSymbolE-Bold7
   <8-9>  MnSymbolE-Bold8
   <9-10> MnSymbolE-Bold9
  <10-12> MnSymbolE-Bold10
  <12->   MnSymbolE-Bold12
}{}

\let\llangle\@undefined
\let\rrangle\@undefined
\DeclareMathDelimiter{\llangle}{\mathopen}%
                     {MnLargeSymbols}{'164}{MnLargeSymbols}{'164}
\DeclareMathDelimiter{\rrangle}{\mathclose}%
                     {MnLargeSymbols}{'171}{MnLargeSymbols}{'171}

%thmlist
\newcommand{\p@thmlisti}{\perh@ps{\thetheorem}}
\protected\def\perh@ps#1#2{\textup{#1.#2}}

% %use (i) instead of 1. as default
% \renewcommand\labelenumi{%
%   \textcolor{lipicsGray}{\sffamily\bfseries\upshape\mathversion{bold}\theenumi}}

\makeatother
\newcommand{\contra}{R'}
\newcommand{\config}[3]{\left\langle{#1,\allowbreak#2,\allowbreak#3}\right\rangle}
\newcommand{\skterms}{\mathsf{SK}}

% semantics
\newcommand{\red}{\to}
\newcommand{\redstar}{\to^*}
\newcommand{\leb}{\mathrm{Leb}}
\newcommand{\weightfn}{\mathsf{weight}}
\newcommand{\valuefn}{\mathsf{value}}
\newcommand{\oper}[1]{\llbracket #1 \rrbracket}

%Hugo's macros
\newcommand{\defn}[1]{\textbf{\em #1}}
\newcommand{\diff}{\mathrm{d}}
\newcommand{\pdf}{\mathrm{pdf}}

\newcommand{\reals}{\mathbb{R}}

% Dominik's macros
\newcommand{\boundary}{\partial}
\newcommand{\sred}{\Rightarrow}

\newcommand{\sconfig}[3]{\llangle #1,#2,#3 \rrangle} % change back to llangle-rrangle

\newcommand{\defeq}{\coloneqq}
\newcommand{\termi}{\mathbb T}
\newcommand{\termii}{\mathbb T^{\mathsf{int}}}
\newcommand{\pReal}{\Real_{\geq 0}}

\newcommand{\from}{:}
\newcommand{\pto}{\rightharpoonup}
\newcommand{\scon}{\mathpzc E}
\newcommand{\sval}{\mathpzc V}

\newcommand{\sredex}{\mathpzc R}
\newcommand{\scontra}{\mathpzc R'}

\newcommand{\sterm}{\mathpzc M}
\newcommand{\stermb}{\mathpzc N}
\newcommand{\stermc}{\mathpzc L}
\newcommand{\Ifleq}[3]{\mathsf{if}\big(#1\leq 0, #2, #3\big)}
\newcommand{\const}{\mathpzc 1}
\newcommand{\sweight}{\mathpzc w}
\newcommand{\pop}{\mathcal {F}}

\newcommand{\tand}{\text{ and }}

\newcommand{\tup}[3]{\boldsymbol {#1}_{#2:#3}}
\newcommand{\tupf}[1]{\boldsymbol {#1}}
\newcommand{\seva}[1]{\left\|#1\right\|}

\newcommand{\reptn}[2]{\vartriangleright_{(\tupf #1,\tupf #2)}}

\newcommand{\tow}{\text{otherwise }}
\newcommand\concat{\ensuremath{\mathbin{+\mkern-10mu+}}}

\newcommand{\termss}[2]{\terms_{(#1, #2)}}
\newcommand{\delay}[1]{\setlength{\fboxsep}{.5\fboxsep}\boxed{#1}}
\newcommand{\delaynew}[1]{\,\underline{\underline{\,#1\,}}\,}

\newcommand{\conc}[1]{\left\lfloor #1\right\rfloor}
\newcommand{\trdiv}{\traces_{\mathsf{div}}}
\newcommand{\defeqs}{\equiv}

\newcommand{\args}{(\tupf r,\tupf s)}
\newcommand{\argsb}{(\tupf r,\tupf {s'})}
\newcommand{\argsc}{(\tupf r,\trace\concat\traceb)}

\newcommand{\tr}{\mathbb T}
\newcommand{\trb}{U}
\newcommand{\trmax}{\tr_{\mathsf{max}}}
\newcommand{\trterm}{\tr_{\mathsf{term}}}
\newcommand{\trtermM}{\tr_{M,\mathsf{term}}}

\newcommand{\trtermi}{\tr_{\mathsf{term}}^{\mathsf{int}}}
\newcommand{\trext}{\tr_{\mathsf{stuck}}}
\newcommand{\trexti}{\tr_{\mathsf{stuck}}^{\mathsf{int}}}
\newcommand{\trprefi}{\tr_{\mathsf{pref}}^{\mathsf{int}}}
\newcommand{\trpref}{\tr_{\mathsf{pref}}}
\newcommand{\trmaxr}{\traces_{\tupf r,\mathsf{max}}}
\newcommand{\trtermr}{\traces_{\tupf r,\mathsf{term}}}

\newcommand{\calM}{\mathcal{M}}

\begin{document}

%
%\titlerunning{Abbreviated paper title}
% If the paper title is too long for the running head, you can set
% an abbreviated paper title here
%
% First names are abbreviated in the running head.
% If there are more than two authors, 'et al.' is used.
%
%
%
%

\title{Densities of Almost Surely Terminating Probabilistic Programs are Differentiable Almost Everywhere}

\titlerunning{Densities of A.S.~Terminating Programs are Differentiable A.E.} %TODO optional, please use if title is longer than one line

%\author{Carol Mak}{Dummy University Computing Laboratory, [optional: Address], Country \and My second affiliation, Country \and \url{http://www.myhomepage.edu} }{johnqpublic@dummyuni.org}{https://orcid.org/0000-0002-1825-0097}{}%{(Optional) author-specific funding acknowledgements}%TODO mandatory, please use full name; only 1 author per \author macro; first two parameters are mandatory, other parameters can be empty. Please provide at least the name of the affiliation and the country. The full address is optional

\author{Carol Mak \orcidID{0000-0002-6512-2864} \and
C.-H.~Luke Ong \orcidID{0000-0001-7509-680X} \and
Hugo Paquet \orcidID{0000-0002-8192-0321} \and
Dominik Wagner$^{\text{(\Letter)}}$ \orcidID{0000-0002-2807-8462}}

\authorrunning{C.~Mak, C.-H.~L.~Ong, H.~Paquet and D.~Wagner}
%\authorrunning{J.\,Q. Public and J.\,R. Public} %TODO mandatory. First: Use abbreviated first/middle names. Second (only in severe cases): Use first author plus 'et al.'

%\Copyright{John Q. Public and Joan R. Public} %TODO mandatory, please use full first names. LIPIcs license is "CC-BY";  http://creativecommons.org/licenses/by/3.0/

%\ccsdesc[100]{\textcolor{red}{Replace ccsdesc macro with valid one}} %TODO mandatory: Please choose ACM 2012 classifications from https://dl.acm.org/ccs/ccs_flat.cfm
\institute{Department of Computer Science, University of Oxford, Oxford, UK
\email{\{pui.mak,luke.ong,hugo.paquet,dominik.wagner\}@cs.ox.ac.uk}}

\maketitle

\begin{abstract}

% We study the differential properties of higher-order statistical probabilistic programs with recursion and conditioning.
% We address an open problem posed by Hongseok Yang: what class of statistical probabilistic programs have densities that are differentiable almost everywhere?
% Our main result is that this holds for almost surely terminating SPCF\cm{Explain what SPCF is?} programs generated from a set of primitive functions satisfying a mild closure property (e.g.~the set of analytic functions).

% This is of practical interest, as this property is required to hold for the correctness of major gradient-based inference algorithms.

% To formalise the problem, we consider Statistical PCF (SPCF), an extension of call-by-value PCF with real numbers, and primitives for sampling and conditioning.
% We give SPCF
% \changed[cm]{an operational semantics similar to the sampling-style of Borgström et al.,}
% % a sampling-style operational semantics in the style of Borgström et al.,
% and study the associated weight \changed[dw]{and value} function \changed[cm]{(which is commonly refer to as the density)} on the set of possible execution traces.
% We use a \changed[lo]{stochastic} form of symbolic execution to reason about almost everywhere differentiability.
% As a by-product of this work we obtain that almost surely terminating \changed[dw]{\emph{deterministic}} SPCF programs with real parameters denote %\changed[dw]{\sout{density}}
% functions that are almost everywhere \changed[cm]{differentiable}.

We study the differential properties of higher-order statistical probabilistic programs with recursion and conditioning.
Our starting point is an open problem posed by Hongseok Yang: what class of statistical probabilistic programs have densities that are differentiable almost everywhere?
To formalise the problem, we consider Statistical PCF (SPCF), an extension of call-by-value PCF with real numbers, and constructs for sampling and conditioning.
We give SPCF a sampling-style operational semantics \`a la Borgstr\"om et al.,
% a sampling-style operational semantics in the style of Borgström et al.,
and study the associated weight (commonly referred to as the density) function and value function on the set of possible execution traces.

Our main result is that almost surely terminating SPCF programs, generated from a set of primitive functions (e.g.~the set of analytic functions) satisfying mild closure properties, have weight and value functions that are almost everywhere differentiable.
We use a stochastic form of symbolic execution to reason about almost everywhere differentiability.
A by-product of this work is that almost surely terminating \emph{deterministic} (S)PCF programs with real parameters denote functions that are almost everywhere differentiable.

Our result is of practical interest, as almost everywhere differentiability of the density function is required to hold for the correctness of major gradient-based inference algorithms.

%\keywords{Statistical probabilistic programming \and Statistical PCF \and almost sure termination \and almost everywhere differentiability \and density functions} %TODO mandatory; please add comma-separated list of keywords

\end{abstract}

\section{Introduction}
\label{sec:intro}
% !TEX root = ./main.tex

\emph{Probabilistic programming} refers to a set of tools and techniques for the systematic use of programming languages in Bayesian statistical modelling.
Users of probabilistic programming --- those wishing to make \changed[lo]{inferences or predictions} --- %\lo{Minor point: yes users make predictions, but they mostly perform Bayesian inferences.} are expected to
% this is a bit hacky but enumitem is not compatible with lipics
\textbf{(i)} encode their domain knowledge in program form;
\textbf{(ii)} \emph{condition} certain program variables based on observed data; and
\textbf{(iii)} make a query.
The resulting code is then passed to an \emph{inference engine} which performs the necessary computation to answer the query, usually following a generic approximate Bayesian inference algorithm. (In some recent systems \cite{pyro,DBLP:conf/pldi/Cusumano-Towner19}, users may also write their own inference code.) The Programming Language community has contributed to the field by developing formal methods for probabilistic programming languages (PPLs), seen as usual languages enriched with primitives for \textbf{(i)} sampling and \textbf{(ii)} conditioning.
(The query \textbf{(iii)} can usually be encoded as the return value of the program.)

It is crucial to have access to reasoning principles in this context.
The combination of these new primitives with the traditional constructs of programming languages leads to a variety of new computational phenomena,
and a major concern is the \emph{correctness of inference}:
given a query, will the algorithm converge, in some appropriate sense, to a correct answer?
In a \emph{universal} PPL (i.e.~one whose underlying language is Turing-complete),
this is not obvious: the inference engine must account for a wide class of programs,
\changed[dw]{going beyond the more well-behaved models found in many of the current statistical applications}.
\iffalse
\changed[dw]{encompassing a wider class of models more than the usual class of models found in statistical applications}. \dw{I'm not sure about the second half of the sentence. This renders our work pointless.}
\lo{@Dominik: I am not sure if I follow. Can you highlight the phrase that you are unsure about?}
\dw{Done. My point is: why should people care (apart from limited theoretical interest) if it is of no use in applications. I would formulate it more positive. It may be the case that people simply don't use more complicated models because tools currently don't support them (well).
I would replace the highlighted part by something like the following:
``going beyond the more well-behaved models found in many of the current statistical applications.''}
\lo{OK - replaced. I think the difference between the two phrases is bigger for you than I can see.}
\fi
Thus the design of inference algorithms, and the associated correctness proofs, are quite delicate.
It is well-known, for instance, that in its original version the popular lightweight Metropolis-Hastings algorithm \cite{DBLP:journals/jmlr/WingateSG11} contained a bug affecting the result of inference \cite{hur2015provably,Kiselyov2016a}.

Fortunately, research in this area benefits from decades of work on the semantics of programs with random features,
starting with pioneering work by Kozen \cite{DBLP:conf/focs/Kozen79} and Saheb-Djahromi \cite{saheb1978probabilistic}.
Both operational and denotational models have recently been applied to the validation of inference algorithms:
see e.g.~\cite{hur2015provably,DBLP:conf/icfp/BorgstromLGS16} for the former and \cite{scibior2017denotational,castellan2019probabilistic} for the latter.
There are other approaches, e.g.~using refined type systems \cite{lew2019trace}.

Inference algorithms in probabilistic programming are often based on the concept of \emph{program trace}, because the operational behaviour of a program is parametrised by the sequence of random numbers it draws along the way. \changed[hp]{Accordingly a probabilistic program has an associated \emph{value function} which maps traces to output values.
But the inference procedure relies on another function on traces, commonly called the \emph{density}\footnote{For some readers this terminology may be ambiguous; see \cref{remarkdensities} for clarification.} of the program, which records a cumulative likelihood for the samples in a given trace.
Approximating a normalised version of the density is the main challenge that inference algorithms aim to tackle.
%; it is usually enough to give an approximate answer to the programmer's query.
We will formalise these notions: in \cref{sec:semantics} we demonstrate how the value function and density of a program are defined in terms of its operational semantics.
}

\subsubsection{Contributions.}

The main result of this paper is that both the density and value function are \emph{differentiable almost everywhere} (that is, everywhere but on a set of measure zero),
provided the program is \emph{almost surely terminating} in a suitable sense.
Our result holds for a universal language with recursion and higher-order functions.
We emphasise that it follows immediately that \emph{purely deterministic programs with real parameters} denote functions that are almost everywhere differentiable.
%We emphasise that it holds, in particular, for \emph{purely deterministic and open programs} that denote functions $\Real^m \to \Real$.
%\dw{This is for the value function (the value function is constant 1), which we haven't mentioned so far.} \lo{Done}
This class of programs is important, because they can express
% gradient descent solutions to a variety of machine learning problems.
machine learning models which rely on gradient descent \cite{DBLP:journals/corr/abs-2006-06903}.

\dw{I think it's better to swap the paragraphs like this:}
This result is of practical interest, because many modern inference algorithms are ``gradient-based'':
they exploit the derivative of the density function in order to optimise the approximation process.
This includes the well-known methods of Hamiltonian Monte-Carlo \cite{DuaneKPR87,neal2011mcmc} and stochastic variational inference \cite{DBLP:journals/jmlr/HoffmanBWP13,RGB14,blei2017variational,KRGB15}.
But these techniques can only be applied when the derivative exists ``often enough'', and thus, in the context of probabilistic programming, almost everywhere differentiability is often cited as a requirement for correctness \cite{DBLP:conf/aistats/ZhouGKRYW19,DBLP:journals/pacmpl/LeeYRY20}.
The question of which probabilistic programs satisfy
this property was selected by Hongseok Yang in his FSCD 2019 invited lecture {\cite{DBLP:conf/rta/Yang19} as one of three open problems in the field of semantics for probabilistic programs.

\changed[hp]{

Points of non-differentiability exist largely because of \emph{branching}, which typically arises in a program when the control flow reaches a conditional statement. Hence our work is a study of the connections between the traces of a probabilistic program and its branching structure.
To achieve this we introduce \emph{stochastic symbolic execution}, a form of operational semantics for probabilistic programs, designed to identify sets of traces corresponding to the same control-flow branch.
Roughly, a reduction sequence in this semantics corresponds to a control flow branch, and the rules additionally provide for every branch a symbolic expression of the trace density, parametrised by the outcome of the random draws that the branch contains. % To prove our main theorem, we combine this with a careful analysis of the branching structure of almost surely terminating programs.
}
\changed[dw]{We obtain our main result in conjunction with a careful analysis of the branching structure of almost surely terminating programs.}

\subsubsection{Outline.}

We devote \cref{sec:probprog} to a more detailed introduction to the problem of trace-based inference in probabilistic programming, and the issue of differentiability in this context.
In \cref{sec:semantics}, we present a trace-based operational semantics to Statistical PCF, a prototypical higher-order functional language previously studied in the literature.
This is followed by a discussion of differentiability and almost sure termination of programs (\cref{sec:diff}).
In \cref{sec:symbolic} we define the ``symbolic'' operational semantics required for the proof of our main result, which we present in \cref{subsec:main result}.
We discuss related work and further directions in \cref{sec:conclusion}.

\ifproceedings
For the extended version of the paper refer to \cite{MOPW20}.
\fi

\section{Probabilistic Programming and Trace-Based Inference}
\label{sec:probprog}
% !TEX root = ./main.tex

In this section we give a short introduction to probabilistic programs and the densities they denote, and we motivate the need for gradient-based inference methods. Our account relies on classical notions from measure theory, so we start with a short recap.

\subsection{Measures and Densities}
\label{subsec:measures and densities}
A \defn{measurable space} is a pair $(X, \Sigma_X)$ consisting of a set together with a \defn{$\sigma$-algebra} of subsets, i.e.~$\Sigma_X \subseteq \mathcal{P}(X)$ contains $\emptyset$ and is closed under complements and countable unions and intersections. Elements of $\Sigma_X$ are called \defn{measurable \changed[cm]{sets}}.
A \defn{measure} on $(X, \Sigma_X)$ is a function $\mu : \Sigma_X \to [0, \infty]$ satisfying $\mu(\emptyset) = 0$, and $\mu(\bigcup_{i \in I} U_i) = \sum_{i \in I} \mu(U_i)$
%\cm{displaystyle or not displaystyle?}
% \dw{different notation for disjoint union} \lo{Standardised.}
for every countable family $\{ U_i \}_{i\in I}$ of pairwise disjoint measurable subsets. A (possibly partial) function $X \pto Y$ is \defn{measurable} if for every $U \in \Sigma_Y$ we have $f^{-1}(U) \in \Sigma_X$.

The space $\reals$ of real numbers is an important example.
The \changed[cm]{(Borel)} $\sigma$-algebra $\Sigma_\reals$ is the smallest one containing all intervals $[a, b)$, and the \defn{Lebesgue measure} $\leb$ is the unique measure on $(\reals, \Sigma_\reals)$ satisfying $\leb([a, b)) = b -a$.
For measurable spaces $(X, \Sigma_X)$ and $(Y, \Sigma_Y)$, the \defn{product $\sigma$-algebra} $\Sigma_{X \times Y}$ is the smallest one containing all $U \times V$, where $U \in \Sigma_X$ and $V \in \Sigma_Y$. So in particular we get for each $n \in \Nat$ a space $(\reals^n, \Sigma_{\reals^n})$, and additionally there is a unique measure $\leb_n$ on $\reals^n$ satisfying $\leb_n(\prod_{i} U_i) = \prod_i{\leb(U_i)}$.

When a function $f : X \to \reals$ is measurable and non-negative and $\mu$ is a measure on $X$, for each $U \in \Sigma_X$ we can define the \defn{integral} $\int_U (\diff\mu) f \in [0, \infty]$. Common families of probability distributions on the reals (Uniform, Normal, etc.) are examples of measures on $(\reals, \Sigma_\reals)$.
%denoted $\mu_{\mathcal{U}(a, b)}, \mu_{\mathcal{N}(m, \sigma)}$, and so on.
%\dw{Can we omit the notation because we don't use it?} \lo{Done}
Most often these are defined in terms of \emph{probability density functions} with respect to the Lebesgue measure, meaning that for each $\mu_D$ there is a measurable function $\pdf_D : \reals \to \pReal$ which determines it: $\mu_D(U) = \int_U (\diff \, \leb)\ \pdf_D.$ As we will see,  density functions such as $\pdf_D$ have a central place in Bayesian inference.

Formally, if $\mu$ is a measure on a measurable space $X$, a \defn{density} for $\mu$ with respect to another measure $\nu$ on $X$ (most often $\nu$ is the Lebesgue measure) is a measurable function $f : X \to \reals$ such that
$\mu(U) = \int_U (\diff \nu) f$ for every $U \in \Sigma_X$. In the context of the present work, an \emph{inference algorithm} can be understood as a method for approximating a distribution of which we only know the density up to a normalising constant. In other words, if the algorithm is fed a (measurable) function $g : X \to \reals$, it should produce samples approximating the  probability measure $U \mapsto \frac{\int_U (\diff \nu) g}{\int_X (\diff \nu) g}$ on $X$.

We will make use of some basic notions from topology: given a topological space $X$ and an set $A \subseteq X$, the \defn{interior} of $A$ is the largest open set $\interior A$ contained in $A$. Dually the \defn{closure} of $A$ is the smallest closed set $\closure A$ containing $A$, and the \defn{boundary} of $A$ is defined as $\boundary A\defeq \closure A\setminus \interior A$. Note that for all $U\subseteq\Real^n$, all of $\interior U$, $\closure U$ and $\boundary U$ are measurable (in $\Sigma_{\Real^n}$).
% \cm{Why not use the more standard definition of boundary, namely $\boundary A := \closure{A} \setminus \interior{A}$, which is more intuitive as well? }
% \dw{We can do it. The reason why I changed it is that the present definition makes it easier to see that the boundary is measurable.}
% \lo{Let's use the standard (and more intuitive) definition, and mention $\boundary A = \closure A \cap\closure{X\setminus A}$.}

\subsection{Probabilistic Programming: a (Running) Example}

Our running example is based on a random walk in $\pReal$.

The story is as follows: a pedestrian has gone on a walk on % up and down
a certain semi-infinite street
(i.e.~extending infinitely on one side), where she may periodically change directions.
Upon reaching the end of the street she has forgotten her starting point,
only remembering that she started no more than 3km away.
Thanks to an odometer, she knows the total distance she has walked is 1.1km, although there is a small margin of error.
Her starting point can be inferred using probabilistic programming, via the program in \cref{fig:example_code}.

% \dw{We should probably say somewhere that our result is applicable to that program as it is a.s.\ terminating (?) and the primitives satisfy \cref{ass:pop}} \lo{Done -- see end of this section.} \dw{thanks}

% \iffalse
% %% two column version (BEGIN)
% \begin{figure}
% \begin{subfigure}[t]{0.53\textwidth}
% \begin{lstlisting}
% let rec walk start =
%   if (start < 0)
%   then 0
%   else let step = Uniform(0, 1) in
%        if (flip ())  (*left or right*)
%        then step + walk(start + step)
%        else step + walk(start - step)
% \end{lstlisting}
% %\caption{\label{subfig:}}
% \end{subfigure}%
% \begin{subfigure}[t]{0.5\textwidth}
%   \begin{lstlisting}
%   let start = Uniform(0, 3) in      (*prior*)
%   let distance = walk(start) in
%   score(pdfN(distance, 0.1)(1.1));  (*likelihood*)
%   start    (*query*)
%   \end{lstlisting}
% %\caption{}
% \end{subfigure}
% \caption{\label{fig:example_code} Resulting histogram.}
% \end{figure}
% %% two column version (END)
% \fi

% \begin{figure}
% \begin{subfigure}{\textwidth}
% \begin{lstlisting}
% let rec walk start =    (*returns total distance travelled*)
%   if (start <= 0)
%   then 0
%   else let step = Uniform(0, 1) in      (*each leg < 1km*)
%           if (flip ())
%           then step + walk (start+step) (*go towards +infty*)
%           else step + walk (start-step) (*go towards 0*)
% \end{lstlisting}
% \begin{lstlisting}
% let start = Uniform(0, 3) in            (*prior*)
%   let distance = walk start in
%     score ((pdfN distance 0.1) 1.1);    (*likelihood*)
%     start                               (*query*)
% \end{lstlisting}
% \subcaption{\label{fig:example_code} Running example in pseudo code}
% \end{subfigure}\\
% \begin{subfigure}{\textwidth}
%     \includegraphics[width=0.95\linewidth]{}
%     \caption{\label{fig:example_hist} Resulting histogram.}
% \end{subfigure}
% \caption{\label{fig:example} Probabilistic program inferring the starting point of a random walk in $\pReal$.}
% \end{figure}

\begin{figure}[ht]
\begin{subfigure}[b]{0.6\textwidth}
\begin{lstlisting}
(*returns total distance travelled*)
let rec walk start =
  if (start <= 0) then
    0
  else
    (*each leg < 1km*)
    let step = Uniform(0, 1) in
    if (flip ()) then
      (*go towards +infty*)
      step + walk (start+step)
    else
      (*go towards 0*)
      step + walk (start-step)
in
(*prior*)
let start = Uniform(0, 3) in
let distance = walk start in
(*likelihood*)
score ((pdfN distance 0.1) 1.1);
(*query*)
start
\end{lstlisting}
\subcaption{\label{fig:example_code} Running example in pseudo-code.}
\end{subfigure}
\begin{subfigure}[b]{0.35\textwidth}
  \centering
  \iffalse
    \begin{tikzpicture}
      \begin{axis}[
        width=\linewidth,
        ybar,
        ymin=0,
        ymax=18500,
        xlabel=Starting Location,
        ylabel=Frequency,
        enlargelimits=false
        ]
        \addplot +[
        hist={
          bins=50,
          data min=0,
          data max=3
        }
        ] table [y index=0] {10000_data.csv};
      \end{axis}
    \end{tikzpicture}
    \fi
   \includegraphics{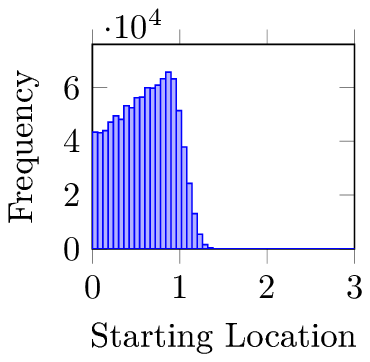}
    \subcaption{\label{fig:example_hist} Resulting histogram.}
  %\end{figure}

\end{subfigure}
\caption{\label{fig:example} Inferring the starting point of a random walk on $\pReal$, in a PPL.}
\end{figure}

The function \lstinline{walk} in \cref{fig:example_code} is a recursive simulation of the random walk: note that in this model a new direction is sampled after at most 1km.
Once the pedestrian has travelled past 0 the function returns the total distance travelled.
The rest of the program first specifies a \emph{prior distribution} for the starting point,
representing the
pedestrian's
belief --- uniform distribution on $[0, 3]$ --- before observing the distance measured by the odometer.
After drawing a value for \lstinline{start} the program simulates a random walk,
and the execution is weighted (via \lstinline{score}) according to how close \lstinline{distance} is to the observed value of $1.1$.
The return value is our query: it indicates that we are interested in the \emph{posterior} distribution on the starting point.

The histogram in \cref{fig:example_hist} is obtained by sampling repeatedly from the posterior of a \changed[lo]{Python} model of our running example.
It shows the mode of the pedestrian's starting point to be around the \changed[lo]{0.8km} mark.

To approximate the posterior, inference engines for probabilistic programs often proceed indirectly and operate on the space of \emph{program traces}, rather than on the space of possible return values.
By \emph{trace}, we mean the sequence of samples drawn in the course of a particular run, one for each random primitive encountered.
Because each random primitive (qua probability distribution) in the language comes with a density, given a particular trace we can compute a coefficient as the appropriate product.
We can then multiply this coefficient by all \lstinline{score}s encountered in the execution, and this yields a (\emph{weight}) function, mapping traces to the non-negative % positive
reals, over which the chosen inference algorithm may operate.
This indirect approach is more practical, and enough to answer the query, since every trace unambiguously induces a return value.

\begin{remark}
  \label{remarkdensities}
  In much of the probabilistic programming literature (e.g.~\cite{DBLP:journals/pacmpl/LeeYRY20,DBLP:conf/aistats/ZhouGKRYW19,DBLP:conf/rta/Yang19}, including this paper), the above-mentioned weight function on traces is called the \emph{density} of the probabilistic program.
  This may be confusing: as we have seen, a probabilistic program induces a posterior probability distribution on return values, and it is natural to ask whether this distribution admits a probability density function (Radon-Nikodym derivative) w.r.t.~some base measure.
  This problem is of current interest \cite{DBLP:conf/popl/BhatAVG12,DBLP:journals/lmcs/BhatBGR17,DBLP:conf/icfp/IsmailS16} but unrelated to the present work.
\end{remark}

\subsection{Gradient-Based Approximate Inference}

\changed[lo]{Some of the most influential and practically important inference algorithms make use of the gradient of the density functions they operate on, when these are differentiable.
Generally the use of gradient-based techniques allow for much greater efficiency in inference.} \lo{* References needed here. *}

%\changed[lo]{This is significant, because a.e.~differentiability of the density function (of the probabilistic program) is a precondition for the correctness of some of the most (if not the most) important (and scalable) inference algorithms (namely, reparameterised gradient estimator \cite{DBLP:conf/nips/0001YY18}, and Hamiltonian Monte Carlo \cite{DBLP:conf/aistats/ZhouGKRYW19,Nishimura2020a}) in each of the two main categories of Bayesian inference algorithms, stochastic variational inference and Monte Carlo methods.}

\changed[lo]{A popular example is the Markov Chain Monte Carlo algorithm known as Hamiltonian Monte Carlo (HMC) \cite{DuaneKPR87,neal2011mcmc}.
Given a density function $g : X \to \reals$,
HMC samples are obtained as the states of a Markov chain by (approximately)}
\changed[cm]{simulating Hamilton's equations}
% solving a system of differential equations (viz.~Hamilton's equations), and the time evolution is obtained
\changed[lo]{via an integrator that uses the gradient $\nabla_x \, g(x)$.
Another important example is (stochastic) variational inference \cite{DBLP:journals/jmlr/HoffmanBWP13,RGB14,blei2017variational,KRGB15}, which transforms the posterior inference problem to an optimisation problem.
This method takes two inputs: the posterior density function of interest $g : X \to \reals$, and a function $h : \Theta \times X \to \reals$; typically, the latter function is a member of an expressive and mathematically well-behaved family of densities that are parameterised in $\Theta$.
The idea is to use stochastic gradient descent to find the parameter $\theta \in \Theta$ that minimises the ``distance'' (typically the Kullback–Leibler divergence) between $h(\theta, -)$ and $g$, relying on a suitable estimate of the gradient of the objective function.
%There, the algorithm takes two inputs: the posterior density function of interest $g : X \to \reals$, and a parameterised function $h : \Theta \times X \to \reals$.
%Using the gradient $\nabla_\theta \, h(\theta, -)$, the algorithm attempts to find $\theta \in \Theta$ such that the ``distance'' between $h(\theta, -)$ and $g$ is minimal.
%Using stochastic gradient descent, the algorithm attempts to find the parameter $\theta \in \Theta$ that minimises the ``distance'' (typically Kullback–Leibler divergence) between $h(\theta, -)$ and $g$ by suitably estimating the gradient of the objective function.
When $g$ is the density of a probabilistic program (the \emph{model}), $h$ can be specified as the density of a second program (the \emph{guide}) whose traces have additional $\theta$-parameters.
The gradient of the objective function is then estimated in one approach (score function \cite{DBLP:conf/aistats/RanganathGB14}) by computing the gradient $\nabla_\theta \, h(\theta, x)$, and in another (reparameterised gradient \cite{DBLP:journals/corr/KingmaW13,DBLP:conf/icml/RezendeMW14,DBLP:conf/icml/TitsiasL14}) by computing the gradient $\nabla_x \, g(x)$.}

In probabilistic programming, the above inference methods must be adapted to deal with the fact that in a universal PPL, the set of random primitives encountered can vary between executions, and traces can have arbitrary and unbounded dimension; \changed[lo]{moreover, the density function of a probabilistic program is generally not (everywhere) differentiable.
Crucially} these adapted algorithms are only valid when the input densities are \emph{almost everywhere} differentiable \cite{DBLP:conf/aistats/ZhouGKRYW19,Nishimura2020a,DBLP:conf/nips/0001YY18}; this is the subject of this paper.

\medskip

Our main result (\cref{thm:mainres}) states that the weight function and value function of almost surely terminating SPCF programs are almost everywhere differentiable.
%\dw{Again, we should also mention the value function.} \lo{Done}
This applies to our running example: the program in \cref{fig:example_code} (expressible in SPCF using primitive functions that satisfy \cref{ass:pop} -- see \cref{example: SPCF syntax}) is almost surely terminating.

\section{Sampling Semantics for Statistical PCF}
\label{sec:semantics}

% !TEX root = ./main.tex

% outline
In this section, we present a simply-typed %higher-order
statistical probabilistic programming language with recursion
%in \cref{subsec:stat PCF}
and its operational semantics.
%in \cref{subsec:operational semantics}.
%This sets the ground for presenting our main result in \cref{sec:symbolic}.

\subsection{Statistical PCF}
\label{subsec:stat PCF}

\begin{figure}[t]
  \begin{align*}
    % types
    \typea,\typeb & ::=
    \PCFReal \mid \typea \tyarrow \typeb \\
    % terms
    \terma,\termb,\termc & ::=
    y \mid
    \PCF{r} \mid
    \PCF{f}(\terma_1,\dots,\terma_\ell) \mid
    \lambda y\ldotp\terma \mid
    \terma\,\termb \mid
    \Y{\terma} \mid
    \If{\termc \leq 0}{\terma}{\termb} \\
    &\mid
    \mathhighlight{\Sample} \mid
    \mathhighlight{\Score{\terma}}
  \end{align*}
  $$
    \AxiomC{\vphantom{$4$}}
    \UnaryInfC{$\Gamma \vdash \Sample:\PCFReal$}
    \DisplayProof
    \qquad
    \AxiomC{$\Gamma \vdash \terma:\PCFReal$}
    \UnaryInfC{$\Gamma \vdash \Score{\terma}:\PCFReal$}
    \DisplayProof
    \qquad
    \AxiomC{$
      \Gamma \vdash \terma : (\typea \tyarrow \typeb) \tyarrow (\typea \tyarrow \typeb)
    $}
    \UnaryInfC{$
      \Gamma \vdash \Y{\terma} : \typea \tyarrow \typeb
    $}
    \DisplayProof
  $$
  \caption{
    Syntax of SPCF, where $r \in \Real$, $x,y$ are variables, and
   $f:\Real^n \pto
 \Real$ ranges over a set $\pop$ of partial, measurable
   \defn{primitive functions}
   (see \cref{par:primitive}).
    %(terms without any free variables).
  }
  \label{fig:SPCF syntax}
\end{figure}

\defn{Statistical PCF} (SPCF) is higher-order probabilistic programming with recursion in purified form.
The terms and part of the (standard) typing system of SPCF are presented in \cref{fig:SPCF syntax} \footnote{In \cref{fig:SPCF syntax} and in other figures, we highlight the elements that are new or otherwise noteworthy.}.
%\dw{Note the comment about highlighting in the caption.}
%\hp{In the description of the syntax, I have changed $f$ to have type $f:\Real^n \pto \Real$, since we say it is partial. If we want to write $f : U \to \Real$ we should say that $U \subseteq \Real^n$ and $f$ is total.}
%\lo{Agreed.}
In the rest of the paper we write $\tupf{x}$ to represent a sequence of variables $x_1,\dots,x_n$,
$\terms$ for the set of SPCF terms, and
$\closedterms$ for the set of closed SPCF terms.
In the interest of readability, we sometimes use pseudo code (e.g.~\cref{fig:example_code}) in the style of Core ML to express SPCF terms.

\iffalse
Most interestingly, SPCF has
the sampling construct $\Sample$, and
the scoring construct $\Score{\terma}$, which
allow us to perform Bayesian inference.
The sampling construct $\Sample$
introduces probability to SPCF
by drawing from the standard uniform distribution with end points $0$ and $1$
whereas
the scoring construct $\Score{\terma}$
allows for conditioning on observed data
by multiplying the current execution with the (non-negative) real number represented by $\terma$.

Note SPCF has been considered in
\cite{DBLP:journals/pacmpl/VakarKS19},
and can be viewed as a typed version of
\cite{DBLP:conf/icfp/BorgstromLGS16,DBLP:journals/pacmpl/WandCGC18}.
It has become the standard for a
simply-typed higher-order probabilistic programming language with recursion.
\fi

SPCF is a statistical probabilistic version of call-by-value
PCF \cite{DBLP:journals/tcs/Scott93,DBLP:conf/fsttcs/Sieber90} with reals as the ground type.
The probabilistic constructs of SPCF are relatively standard (see for example \cite{DBLP:conf/esop/Staton17}): the sampling construct $\Sample$ draws from $\mathcal{U}(0, 1)$, the standard uniform distribution with end points $0$ and $1$;
the scoring construct $\Score{\terma}$ enables conditioning on observed data by multiplying the weight of the current execution with the (non-negative) real number denoted by $\terma$.
\changed[lo]{Sampling from other real-valued distributions can be obtained from $\mathcal{U}(0, 1)$ by applying the inverse of the distribution's cumulative distribution function.}

Our SPCF is an (inconsequential) variant of CBV SPCF \cite{DBLP:journals/pacmpl/VakarKS19} and a (CBV) extension of PPCF \cite{DBLP:journals/pacmpl/EhrhardPT18} with scoring;
\iffalse
\hp{in that paper, PPCF is CBN and doesn't have score.}
\cm{sounds better?} \lo{Yes - thanks.}
\fi
it may
be viewed as a simply-typed version of the untyped probabilistic languages
of \cite{DBLP:conf/icfp/BorgstromLGS16,DBLP:conf/esop/CulpepperC17,DBLP:journals/pacmpl/WandCGC18}.

\begin{example}[Running Example ${\sf Ped}$]
%\lo{A better / more suggestive name than plain $M$?}
  \label{example: SPCF syntax}
  We express in SPCF the example in \cref{fig:example_code}.
  \begin{align*}
    {\sf Ped} &\equiv
    \begin{pmatrix*}[l]
      \letin{x = \Sample \cdot \PCF{3}} \\
      \letin{d = \walk\,x} \\
      \letin{w = \Score{\PCF{\pdf_{\Normal{1.1}{0.1}}}(d)}} x
    \end{pmatrix*} \qquad\text{where} \\
    \walk & \equiv
    \Y{}
    \begin{pmatrix*}[l]
      \lambda f x.& \mathsf{if}\ x \leq \PCF{0}\ \mathsf{then}\ \PCF{0} \\
                  & \mathsf{else}\
                    \begin{pmatrix*}[l]
                      \letin{s = \Sample} \\
                      \If{(\Sample \leq \PCF{0.5})}
                      {\big(s + f(x+s)\big)}{\big(s + f(x-s)\big)}
                    \end{pmatrix*}
    \end{pmatrix*}
  \end{align*}
  The let construct, $\letin{x = \termb}{\terma}$, is syntactic sugar for the term $(\lambda x.\terma)\,\termb$; and $\pdf_{\Normal{1.1}{0.1}}$, the density function of the normal distribution with mean $1.1$ and variance $0.1$, is a primitive function.
  To enhance readability we use infix notation and omit the underline for standard functions such as addition.
\end{example}
% \dw{what do you think of using $\mathcal N$ for the normal? this is a
  % bit more compact}

\subsection{Operational Semantics}
\label{subsec:operational semantics}

\iffalse
As explained in Section \cm{}, in practice
running a probabilistic program generates a trace
that records all the sampling results for this particular run.
This is subtly different to the treatment in \cite{DBLP:conf/icfp/BorgstromLGS16},
which consider a probabilistic program as a deterministic program given a particular trace.
We now make this precise by
presenting the operational semantics.
\fi

The execution of a probabilistic program generates a \emph{trace}: a sequence containing the values sampled during a run.
Our operational semantics captures this dynamic perspective.
This is closely related %(and a kind of dual)
to the treatment in \cite{DBLP:conf/icfp/BorgstromLGS16} which, following \cite{DBLP:conf/focs/Kozen79}, views a probabilistic program as a deterministic program parametrized by the sequence of random draws made during the evaluation.

\subsubsection{Traces.}

%\lo{TODO: Define $\traces_n$.}

Recall that in our language, $\Sample$ produces a random value in the open unit interval; accordingly a \defn{trace}
%, which is a record all sampling outcome thus far,
is a finite sequence of elements of $(0, 1)$.
We define a \defn{measure space $\traces$ of traces} to be the set $\bigcup_{n\in\Nat} (0, 1)^n$, equipped with the standard disjoint union $\sigma$-algebra, and the sum of the respective (higher-dimensional) Lebesgue measures. Formally, writing $\traces_n \defeq (0,1)^n$, we define:
$$
  \traces \defeq\left(
    \displaystyle \bigcup_{n\in\Nat} \traces_n,
    \left\{\displaystyle \bigcup_{n\in\Nat} U_n \mid U_n \in \salgebra_{\traces_n}\right\},
    \trmeasure
  \right)
  \ \text{and} \
  \trmeasure\left(\displaystyle \bigcup_{n\in\Nat} U_n\right) \defeq\sum_{n\in\Nat} \leb_n(U_n).
$$
Henceforth we write traces as lists,
such as $[0.5, 0.999, 0.12]$;
the empty trace as $\emptytrace$; and
the concatenation of traces $\trace, \trace' \in \traces$ as $\trace \concat \trace'$.
\iffalse
\hp{here we should define the various notations for lists (++ etc.)} \dw{and bold notation for tuples}
\lo{NOTATION.
As traces are now denoted as lists, {\tt \textbackslash emptytrace} is therefore be set to $[]$.}
\fi

More generally, to account for open terms, we define, for each $m \in \Nat $, the measure space
\[
\Real^m \times \traces \defeq\left(
    \displaystyle \bigcup_{n\in\Nat} \Real^m \times \traces_n,
    \left\{\displaystyle \bigcup_{n\in\Nat} V_n \mid V_n \in \salgebra_{\Real^m \times \traces_n}\right\},
    \mu_{\Real^m \times \traces}
  \right)
\]
where $\mu_{\Real^m \times \traces}\Big(\bigcup_{n\in\Nat} V_n\Big) \defeq \sum_{n\in\Nat} \leb_{m+n}(V_n).$
% We write an $m$-trace as a pair $\args$ with ${\tupf r} \in \Real^m$ and $\trace \in \traces$, and identify $\Real^0 \times \traces$ with $\traces$.
To avoid clutter, we will elide the subscript from $\mu_{\Real^m \times \traces}$ whenever it is clear from the context.

%\lo{Replaced {\tt \textbackslash biguplus} by {\tt \{\textbackslash displaystyle \textbackslash bigcup\}}.}

%\lo{TODO: Global-replace $\mu$ by $\trmeasure$ (where appropriate).}

\subsubsection{Small-Step Reduction.}
Next, we define the
\defn{values} (typically denoted $\valuea$),
\defn{redexes} (typically $\redexa$) and
\defn{evaluation contexts} (typically $\evalcon$):
% \lo{NOTATION. I propose we write a typical redex as $\redexa$ and its contractum as $\contra$, following Barendregt's book.}
\begin{align*}
  \valuea & ::=
  \PCF{r} \mid \lambda y.M \\
  \redexa & ::=
            (\lambda y\ldotp\terma)\,\valuea \mid
            \PCF{f}(\PCF{r_1},\dots,\PCF{r_\ell}) \mid
            \Y{(\lambda y\ldotp\terma)} \mid
            \Ifleq{\PCF{r}}{\terma}{\termb} \mid
            \Sample \mid
            \Score{\PCF{r}}\\
  \evalcon & ::=
             [] \mid
             \evalcon\,\terma \mid
             (\lambda y.\terma)\,\evalcon \mid
             \PCF{f}(\PCF{r_1}, \dots, \PCF{r_{i-1}},\evalcon,\terma_{i+1},\dots,\terma_\ell) \mid
             \Y{\evalcon} \\
          & \mid
            \Ifleq{\evalcon}{\terma}{\termb} \mid
            \Score{\evalcon}
\end{align*}
We write $\values$ for the set of SPCF values, and $\closedvalues$ for the set of closed SPCF values.
% \dw{Do we want $y$ to be a value? In the end we only care about closed values and (at the moment) there is not symbolic value corresponding to that.}
% \lo{We may need this when reasoning about open terms. If we don't, then I would remove it.}

%\lo{TODO: Do not use subscript $1:n$.}
% \changed[dw]{By $M(x_1,\ldots,x_m)$, we mean $M$ is a term with free variables amongst $x_1,\ldots,x_m$.}
% \cm{
%   Can we say something like $\mathsf{FV}(M) = \{x_1,\dots,x_n\}$, rather than this?
%   The notation might be confused with application.
% }

%It is easy to see that the following subject construction lemma holds.

It is easy to see that every closed SPCF term $\terma$ is
either a value,
or there exists a unique pair of context $\evalcon$ and redex $\redexa$ such that
$\terma \equiv \evalcon[\redexa]$.
\iffalse
\lo{@ALL: We really should not call terms such as
$\PCF{\mathsf{log}}(\PCF{0})$ and $\Score{\PCF{-1}}$ redexes, because redex means something to be reduced.}
\lo{We could make such terms reduce to $\mathsf{fail}$ (which should obviously not be a redex). What do people think?}
\lo{TODO: Introduce rules to reduce such terms as $\PCF{\mathsf{log}}(\PCF{0})$ and $\Score{\PCF{-1}}$ to $\mathsf{fail}$.}
\dw{not used}
\lo{Actually \cref{lemma:construction lemma} does not hold. Take any fail term such as $\Score{\PCF{-1}}$ or $\PCF{\mathit{sqrt}} \, {\PCF{-1}}$.}
\dw{By the definition of redexes above these are redexes. It may however be strange that not all redexes can be reduced.}
\lo{It is indeed strange to call a term a redex when it cannot fire. Should clarify this when ``inlining'' \cref{lemma:construction lemma}.}
\dw{@ALL: Is it strange that not all redexes can reduce (e.g. $\PCF 1 / \PCF 0$)?}
\cm{It is strange. The alternative is to have a $\mathsf{fail}$ construct.}
\fi

We now present the operational semantics of SPCF as a rewrite system of \defn{configurations}, which are triples of the form $\config{\terma}{w}{\trace}$ where $M$ is a closed SPCF term, $w \in \pReal$ is a \defn{weight}, and $\trace \in \traces$ a trace.
(We will sometimes refer to these as the \emph{concrete} configurations, in contrast with the \emph{abstract} configurations of our symbolic operational semantics, see \cref{sec:sred}.)

\begin{figure}

  \noindent\defn{Redex Contractions}:
  \begin{align*}
    \config{(\lambda y\ldotp\terma)\,\valuea}{w}{\trace} & \red
                                                      \config{\terma[\valuea/y]}{w}{\trace} \\
                                                      % \config{\PCF{f}(\seq{\PCF{r}}{1}{n})}{w}{\trace}
    \config{\PCF{f}(\PCF{r_1},\dots,\PCF{r_\ell})}{w}{\trace}
                                                    & \red
                                                      \config{\PCF{f(r_1,\dots,r_\ell)}}{w}{\trace}
                                                      \tag{if $(r_1,\dots,r_\ell) \in \dom{f}$} \\
    \config{\PCF{f}(\PCF{r_1},\dots,\PCF{r_\ell})}{w}{\trace}& \red\Fail\tag{if $(r_1,\dots,r_\ell)\not\in \dom{f}$} \\
    \config{\Y{(\lambda y.\terma)}}{w}{\trace} & \red
                                                 \config{\lambda z.\terma[\Y{(\lambda y.\terma)}/y]\,z}{w}{\trace}
                                                 \tag{for fresh variable $z$} \\
    \config{\Ifleq{\PCF{r}}{\terma}{\termb}}{w}{\trace} & \red
                                                          \config{\terma}{w}{\trace} \tag{if $r \leq 0$} \\
    \config{\Ifleq{\PCF{r}}{\terma}{\termb}}{w}{\trace} & \red
                                                          \config{\termb}{w}{\trace} \tag{if $r > 0$} \\
    \config{\Sample}{w}{\trace} & \red
                                  \config{\PCF{r}}{w}{\trace\concat [r]} \tag{for some $r \in (0,1)$} \\
    \config{\Score{\PCF{r}}}{w}{\trace} & \red
                                          \config{\PCF{r}}{r\cdot w}{\trace} \tag{if $r \geq 0$}\\
    \config{\Score{\PCF{r}}}{w}{\trace}&\red\Fail\tag{if $r < 0$}
  \end{align*}
  \noindent\defn{Evaluation Contexts}:
  \[
    \infer{\config{E[\redexa]}{w}{\trace} \red \config{E[\contra]}{w'}{\trace'}}{\config{\redexa}{w}{\trace} \red \config{\contra}{w'}{\trace'}}
    \qquad\qquad
    \infer{\config{E[\redexa]}{w}{\trace} \red\Fail}{\config{\redexa}{w}{\trace} \red\Fail}
  \]

  \caption{Operational small-step semantics of SPCF}
  \label{fig:opsem}
\end{figure}

The small-step reduction relation $\red$ is defined in \cref{fig:opsem}.
%\lo{TODO: Redefine reduction to $\mathsf{fail}$ rather than a triple containing it.}
In the rule for $\Sample$, a random value $r \in (0,1)$ is generated and recorded in the trace, while the weight remains unchanged: in a uniform distribution on $(0, 1)$ each value is drawn with likelihood 1. In the rule for $\Score{\PCF{r}}$, the current weight is multiplied by non-negative $r\in\reals$: typically this reflects the likelihood of the current execution given some observed data.
Similarly to \cite{DBLP:conf/icfp/BorgstromLGS16} we reduce terms which cannot be reduced in a reasonable way (i.e.\ scoring with negative constants or evaluating functions outside their domain) to $\Fail$.
% By \cref{lemma:construction lemma}, every non-value closed SPCF term $\terma$ has the form $\evalcon[\redexa]$.
% Then we have
% $\config{\evalcon[\redexa]}{w}{\trace} \red \config{\evalcon[\contra]}{w'}{\trace'}$,
% provided $\config{\redexa}{w}{\trace} \red \config{\contra}{w'}{\trace'}$.
\iffalse
Note that the execution process may get stuck, e.g.~on configurations
$\config{\PCF{\mathsf{log}}({\PCF{0}})}{w}{\trace}$ or
$\config{\Score{\PCF{-1}}}{w}{\trace}$,
\changed[cm]{even when they are redexes.}
\fi
% \dw{We need rules for exceptions (scoring with negative numbers, evaluating primitives outside their domain) and the lifting to contexts as it is done in the Notes draft}

% \cm{I would really prefer to not have rules for exception and
% consider the reduction $\red$ as a partial function or a relation.
% This makes for a cleaner set of rules.
% The weight and value function defined with these reduction rules
% are the same as before.
% }

\begin{example}
  \label{example: reduction} We present a possible reduction sequence for the program in \cref{example: SPCF syntax}:
  \begin{align*}
    % &
    % \config{
    %   \begin{pmatrix*}[l]
    %     \letin{x = \Sample \cdot \PCF{3}} \\
    %     \letin{d = \walk\,x} \\
    %     \letin{{\_} = \Score{\PCF{\pdf_{\Normal{1.1}{0.1}}}(d)}} x
    %   \end{pmatrix*}
    % }{1}{\emptytrace} \\
  \config{{\sf Ped}}{1}{\emptytrace}  & \redstar
    \config{
      \begin{pmatrix*}[l]
        \letin{x = \PCF{0.2} \cdot \PCF{3}} \\
        \letin{d = \walk\,x} \\
        \letin{w = \Score{\PCF{\pdf_{\Normal{1.1}{0.1}}}(d)}} x
      \end{pmatrix*}
    }{1}{\traceseq{0.2}} \\
    & \redstar
    \config{
      \begin{pmatrix*}[l]
        \letin{d = \walk\,\PCF{0.6}} \\
        \letin{w = \Score{\PCF{\pdf_{\Normal{1.1}{0.1}}}(d)}} \PCF{0.6}
      \end{pmatrix*}
    }{1}{\traceseq{0.2}} \\
    & \redstar
    \config{
      \letin{w = \Score{\PCF{\pdf_{\Normal{1.1}{0.1}}}(\PCF{0.9})}} \PCF{0.6}
    }{1}{\traceseq{0.2,0.9,0.7}}
    \tag{$\star$}
    \\
    & \redstar
    \config{
      \letin{w = \Score{\PCF{0.54}}} \PCF{0.6}
    }{1}{\traceseq{0.2,0.9,0.7}}\\
    & \redstar
    \config{\PCF{0.6}}{0.54}{\traceseq{0.2,0.9,0.7}}
  \end{align*}
  In this execution, the initial $\Sample$ yields $\PCF{0.2}$, which is appended to the trace.
  At step $(\star)$, we assume given a reduction sequence $\config{\walk\,\PCF{0.6}}{1}{\traceseq{0.2}} \redstar$ \\$\config{\PCF{0.9}}{1}{\traceseq{0.2,0.9,0.7}}$;
  this means that in the call to $\walk$, ${0.9}$ was sampled as the the step size and
  ${0.7}$ as the direction factor; this makes the new location ${-0.3}$, which is negative, so the return value is ${0.9}$.
In the final step, we perform \emph{conditioning} using the likelihood of observing ${0.9}$
  given the data ${1.1}$: the $\Score$ expression updates the current weight using the
  the density of $0.9$ in the normal distribution with parameters $(1.1, 0.1)$.
\end{example}

\subsubsection{Value and Weight Functions.}
%\lo{TODO: replace $\tup{x}{1}{n}$ by $\tupf x$. Use $m$.}
%When describing the reduction, we focus on a particular trace, now we consider all traces.
Using the relation $\red$, we now aim to reason more globally about probabilistic programs in terms of the traces they produce. Let $\terma$ be an SPCF term with free variables amongst $x_1,\dots,x_m$ of type $\PCFReal$. Its \defn{value function} $\valuefn_{\terma} : \Real^m \times \traces \to \closedvalues \cup \{\bot\}$ returns, given values for each free variable and a trace, the output value of the program, if the program terminates in a value. The \defn{weight function} $\weightfn_{\terma} : \Real^m \times \traces \to \pReal$ returns the final weight of the corresponding execution. Formally:
\begin{align*}
\valuefn_{\terma} (\tupf{r}, \trace) &\defeq
\begin{cases}
V & \hbox{if $\config{\terma[\tupf{\PCF{r}}/\tupf{x}]}{1}{\emptytrace} \red^*
    \config{\valuea}{w}{\trace}$}\\
\bot & \tow
\end{cases}
\end{align*}%
\begin{align*}
\weightfn_{\terma} (\tupf{r}, \trace) &\defeq
\begin{cases}
w & \hbox{if $\config{\terma[\tupf{\PCF{r}}/\tupf{x}]}{1}{\emptytrace} \red^*
    \config{\valuea}{w}{\trace}$}\\
0 & \tow
\end{cases}
\end{align*}
For closed SPCF terms $\terma$ we just write $\weightfn_\terma(\tupf s)$ for $\weightfn_\terma(\emptytrace,\tupf s)$ (similarly for $\valuefn_\terma$), and
it follows already from \cite[Lemma 9]{DBLP:conf/icfp/BorgstromLGS16} that the functions $\valuefn_\terma$ and $\weightfn_\terma$ are measurable (see also \cref{sec:bdiff}).

Finally, every closed SPCF term $\terma$ has an associated \defn{value measure}
$${{\oper{\terma}}:{\salgebra_{\closedvalues}}\to{\pReal}}$$ defined by
$\oper{\terma}{(U)} \defeq
{
  \int_{\inv{\valuefn_\terma}(U)}
  \diff \trmeasure \;
  \weightfn_\terma
}$. This corresponds to the denotational semantics of SPCF in the $\omega$-quasi-Borel space model via computational adequacy \cite{DBLP:journals/pacmpl/VakarKS19}.

% The present work can be understood as an extension of that result \changed[dw]{to almost everywhere differentiability}.

% \dw{Wouldn't it make more sense to discuss this in the next subsection, which is devoted to differentiability?}
% \hp{That's true. But we need the measurability in this section, and it I feel like it's more convenient to have them together. I've changed the paragraph/subsection titles to reflect that, what do you think?}

\medskip

\changed[lo]{Returning to \cref{remarkdensities}, what are the connections, if any, between the two types of density of a program? To distinguish them, let's refer to the weight function of the program, $\weightfn_M$, as its \emph{trace density}, and the Radon-Nikodyn derivative of the program's value-measure, $\frac{d \oper{\terma}}{d \nu}$ where $\nu$ is the reference measure of the measurable space $\salgebra_{\closedvalues}$, as the \emph{output density}. \lo{Or \emph{return-value density}?}
Observe that, for any measurable function $f : {\closedvalues} \to [0, \infty]$,
$\int_{\closedvalues} d \, \oper{\terma} \, f
=
\int_{\valuefn_\terma^{-1}(\closedvalues)} d \trmeasure \; \weightfn_\terma \cdot (f \circ \valuefn_{\terma})
=
\int_{\traces} d \trmeasure \; \weightfn_\terma \cdot (f \circ \valuefn_{\terma})
$
(because if $\trace \not\in \valuefn_\terma^{-1}(\salgebra_{\closedvalues})$ then $\weightfn_\terma(\trace) = 0$).
It follows that we can express any expectation w.r.t.~the output density $\frac{d \oper{\terma}}{d \nu}$ as an expectation w.r.t.~the trace density $\weightfn_\terma$.
If our aim is, instead, to generate samples from $\frac{d \oper{\terma}}{d \nu}$ then we can simply generate samples from $\weightfn_\terma$, and deterministically convert each sample to the space $(\closedvalues, \salgebra_{\closedvalues})$ via the value function $\valuefn_M$.
In other words, if our intended output is just a sequence of samples, then our inference engine does not need to concern itself with the consequences of change of variables.
}

\lo{TODO:

- PROVE: For any closed term $\terma$, $\oper{\terma}$ is an s-finite measure (s-finite kernel if open), and not $\sigma$-finite in general.

- QUESTION: What assumptions on the primitive functions / constraints on $\Score{\cdot}$ guarantee finiteness of measure $\oper{\terma}$ (which is equivalent to integrability of $\weightfn_M$), for a.s.~terminating $M$?

- Reference measure for $\salgebra_{\closedvalues}$.

- Trace density, as opposed to ouput density, is important in practice.

- QUESTION: When does $\oper{\terma}$ have a R-N derivative? By contrast, trace density is always well-defined.

- QUESTION: Is it the case that if $\terma$ is a.s.~terminating then $\frac{d \oper{\terma}}{d \nu}$ exists? and if it exists, is it a.e.~differentiable?}

\section{Differentiability of the Weight and Value Functions}
\label{sec:diff}

% !TEX root = ./main.tex

To reason about the differential properties of these functions we place ourselves in a setting in which differentiation makes sense. We start with some preliminaries.

\subsection{Background on Differentiable Functions}
\label{sec:bdiff}

Basic real analysis gives a standard notion of differentiability at a point $\tupf x \in \Real^n$ for functions between Euclidean spaces $\Real^n \to \Real^m$.
In this context a function $f : \Real^n \to \Real^m$ is \defn{smooth} on an open $U \subseteq \Real^n$ if it has derivatives of all orders at every point of $U$.
The theory of \emph{differential geometry} \changed[lo]{(see e.g.~the textbooks %\cite{lang2012fundamentals}
\cite{Tu11,Lee13,Lee09})} abstracts away from Euclidean spaces to \emph{smooth manifolds}.
We recall the formal definitions.
\lo{Lang's book is not an ideal reference here: it treats Banach manifolds, rather than ordinary (smooth) manifolds.}

A topological space $\calM$ is \defn{locally Euclidean at a point $x \in \calM$} if $x$ has a neighbourhood $U$ such that there is a homeomorphism $\phi$ from $U$ onto an open subset of $\Real^n$, for some $n$. The pair $(U, \phi : U \to \Real^n)$ is called a \defn{chart} \changed[lo]{(of dimension $n$)}.
We say $\calM$ is \defn{locally Euclidean} if it is locally Euclidean at every point.
A \defn{manifold} $\calM$ is a Hausdorff, second countable, locally Euclidean space.

Two charts, $(U, \phi : U \to \Real^n)$ and $(V, \psi : V \to \Real^m)$, are \defn{compatible} if
%the transition functions $\psi \circ \phi^{-1}$ and $\phi \circ \psi^{-1}$ are smooth on their domain of definition.
\changed[lo]{the function
\(
\psi \circ \phi^{-1} : \phi(U \cap V) \to \psi(U \cap V)
% \quad
% \phi \circ \psi^{-1} : \psi(U \cap V) \to \phi(U \cap V)
\)
is smooth, with a smooth inverse}.
An \defn{atlas} on $\calM$ is a family $\{(U_\alpha, \phi_\alpha)\}$ of pairwise compatible charts that cover $\calM$.
A \defn{smooth manifold} is a manifold equipped with an atlas.

%We emphasise that, although this might be considered slightly unusual, the dimension $n$ need not be the same at every point $x$; it follows from the definition that it is the same within each connected component.
\changed[lo]{It follows from the topological invariance of dimension that charts that cover a part of the same connected component have the same dimension.
We emphasise that, although this might be considered slightly unusual, distinct connected components need not have the same dimension.}
This is important for our purposes: $\traces$ is easily seen to be a smooth manifold since each connected component $\traces_i$ is \changed[lo]{diffeomorphic} to $\Real^i$.
It is also straightforward to endow the set $\terms$ of SPCF terms with a (smooth) manifold structure.
Following \cite{DBLP:conf/icfp/BorgstromLGS16}
we view $\terms$ as $\bigcup_{m\in\Nat} \big(\skterms_m \times \Real^m \big)$,
where $\skterms_m$ is the set of SPCF terms with exactly $m$ place-holders (a.k.a.~\emph{skeleton terms}) for numerals.
Thus identified, we give $\terms$ the countable disjoint union topology of the product topology of the discrete topology on $\skterms_m$ and the standard topology on $\Real^m$.
Note that the connected components of $\terms$ have the form $\{M\} \times \Real^m$, with $M$ ranging over $\skterms_m$, and $m$ over $\Nat$. So in particular, the subspace $\values \subseteq \terms$ of values inherits the manifold structure. We fix the Borel algebra of this topology to be the $\sigma$-algebra on $\terms$.

%Given manifolds $(\calM, \{U_\alpha, \phi_\alpha\})$ and $(\calM', \{V_\beta, \psi_\beta\})$,
%a function $f: \calM \to \calM'$ is \defn{differentiable} if for every $\alpha, \beta$ the function $\psi_\beta \circ f \circ \phi_\alpha^{-1}$ is differentiable on its domain of definition.

\changed[lo]{Given manifolds $(\calM, \{U_\alpha, \phi_\alpha\})$ and $(\calM', \{V_\beta, \psi_\beta\})$,
a function $f: \calM \to \calM'$ is \defn{differentiable} at a point $x \in \calM$ if there are charts $(U_\alpha, \phi_\alpha)$ about $x$ and $(V_\beta, \psi_\beta)$ about $f(x)$ such that the composite $\psi_\beta \circ f \circ \phi_\alpha^{-1}$ restricted to the open subset $\phi_\alpha(f^{-1}(V_\beta) \cap U_\alpha)$ is differentiable at $\phi_\alpha(x)$.}

The definitions above are useful because they allow for a uniform presentation.
%But in some instances it is helpful to express them in concrete terms.
\changed[lo]{But it is helpful to unpack the definition of differentiability in a few instances, and we see that they boil down to the standard sense in real analysis.}
Take an SPCF term $M$ with free variables amongst $x_1, \ldots, x_m$ (all of type $\Real$), and $\args\in\Real^m\times\traces_n$.
\begin{itemize}
\item The function $\weightfn_M : \Real^m \times \traces \to \pReal$ is differentiable at $\args\in\Real^m\times\traces_n$ just if its restriction $\weightfn_M\vert_{\Real^m\times\traces_n}\from\Real^m\times\traces_n\to\pReal$ is differentiable at $\args$.
%An analogous statement holds for the function $\valuefn_M$, observing that $\values \cup \{ \bot \}$ is a manifold since $\{\bot\} \cong \Real^0$.
%Likewise, the function $\valuefn_M  : \Real^m \times \traces \to \closedvalues \cup \{\bot\}$ is differentiable at $\args$ just if for some open neighbourhood $U\subseteq\Real^m\times\traces_n$ of $\args$, one of the following holds:
\item \changed[lo]{In case $M$ is of type $\PCFReal$, $\valuefn_M  : \Real^m \times \traces \to \closedvalues \cup \{\bot\}$ is in essence a partial function $\Real^m\times\traces\pto\Real$.
Precisely $\valuefn_M$ is differentiable at $\args$ just if for some open neighbourhood $U\subseteq\Real^m\times\traces_n$ of $\args$:}
\begin{enumerate}[noitemsep]
\item $\valuefn_M(\tupf{r'},\traceb)=\bot$ for all $(\tupf{r'},\traceb)\in U$; or
\item $\valuefn_M(\tupf{r'},\traceb)\neq\bot$ for all $(\tupf{r'},\traceb)\in U$, and $\valuefn_M'\from U\to\Real$ is differentiable at $\args$, where we define $\valuefn'_M(\tupf{r'},\tupf{s'})\defeq r''$ whenever $\valuefn_M(\tupf{r'},\tupf{s'})=\PCF{r''}$.
\end{enumerate}
\end{itemize}

\subsection{Why Almost Everywhere Differentiability Can Fail}
\label{subsec:on diff}

\hp{I changed the presentation of 4.2 a bit. The content is the same.}

Conditional statements break differentiability. This is easy to see with an example: the weight function of the term
\[
\If{\Sample \leq \Sample}{\Score{\PCF{1}}}{\Score{\PCF{0}}}
\]
is exactly the characteristic function of $\{[s_1,s_2] \in \traces \mid s_1 \leq s_2\}$,
which is not differentiable on the diagonal $\{[s,s] \in \traces_2 \mid s \in (0,1)\}$.

This function is however differentiable \emph{almost everywhere}: the diagonal is an uncountable set but has $\leb_2$ measure zero in the space $\traces_2$. Unfortunately, this is not true in general. Without sufficient restrictions, conditional statements also break almost everywhere differentiability. This can happen for two reasons.

\subsubsection{Problem 1: Pathological Primitive Functions.}
\label{par:primitive}
Recall that our definition of SPCF is parametrised by a set $\primitives$ of primitive functions. It is tempting in this context to take $\primitives$ to be the set of all differentiable functions, but this is too general, as we show now. Consider that for every $f : \Real \to \Real$ the term
\[
\Ifleq{\PCF{f}(\Sample)}{\Score{\PCF{1}}}{\Score{\PCF{0}}}
\]
has weight function the characteristic function of $\{\traceseq{s_1}\in\traces\mid f(s_1) \leq 0 \}$.
This function is non-differentiable at every $s_1\in\traces_1\cap\boundary f^{-1}(-\infty,0]$:
in every neighbourhood of $s_1$ there are $s_1'$ and $s_1''$ such that $f(s'_1)\leq 0$ and $f(s''_1)>0$.
One can construct a differentiable $f$ for which this is \emph{not} a measure zero set. (For example, there exists a non-negative function $f$ which is zero exactly on a \emph{fat} Cantor set, i.e., a Cantor-like set with strictly positive measure. See \cite[Ex.~5.21]{Rudin1976}.)

\subsubsection{Problem 2: Non-Terminating Runs.}
Our language has recursion, so we can construct a term which samples a random number, halts if this number is in $\mathbb Q \cap [0, 1]$, and diverges otherwise. In pseudo-code:
\begin{lstlisting}
  let rec enumQ p q r =
    if (r = p/q) then (score 1) else
      if (r < p/q) then
        enumQ p (q+1) r
      else
        enumQ (p+1) q r
  in enumQ 0 1 sample
\end{lstlisting}
The induced weight function is the characteristic function of $\{[s_1]\in \traces \mid s_1\in\Rational\}$;
the set of points at which this function is non-differentiable is $\traces_1$, which has measure $1$.

We proceed to overcome Problem 1 by making appropriate assumptions on the set of primitives.
We will then address Problem 2 by
focusing on \emph{almost surely terminating} programs.

\subsection{Admissible Primitive Functions}
One contribution of this work is to identify sufficient conditions for $\pop$. We will show in \Cref{subsec:main result} that our main result holds provided:

\begin{assumption}[Admissible Primitive Functions]
  \label{ass:pop}
  $\pop$ is a set of partial, measurable functions $\Real^\ell\pto\Real$ including all constant and projection functions which satisfies
  \begin{enumerate}
  % \item\label{ass:conproj} $\pop$ contains all constant functions and projections
  \item \label{ass:comp} if $f: \Real^\ell \pto \Real$ and $g_i : \Real^{m} \pto \Real$ are elements of $\pop$ for $i = 1, \dots,\ell$, then $f \circ \langle g_i \rangle_{i = 1}^\ell : \Real^m \pto \Real$ is in $\pop$
  \item\label{ass:diff} if $(f: \Real^\ell \pto \Real) \in\pop$, then $f$ is differentiable in the interior of $\dom f$
  \item \label{ass:boundary} if $(f: \Real^\ell \pto \Real) \in\pop$, then $\leb_\ell(\boundary f^{-1}[0,\infty))=0$.
  \end{enumerate}
\end{assumption}

\begin{example}%[Admissible primitive functions]
\label{eg:primitive-functions}
The following sets of primitive operations satisfy the above sufficient conditions. (See \ifproceedings\cite{MOPW20} \else\Cref{apx:pop} \fi for a proof.)
% Examples that satisfy the above sufficient conditions include the following.
\begin{enumerate}
\item The set $\pop_1$ of analytic functions with co-domain $\Real$.
Recall that a function $f:\Real^\ell \to \Real^n$ is \emph{analytic} if it is infinitely differentiable and its multivariate Taylor expansion at every point $x_0 \in \Real^\ell$ converges pointwise to $f$ in a neighbourhood of $x_0$.
%$\pop_1$ is the set of analytic functions with co-domain $\Real$.

\item
\iffalse
The set $\pop_2$ of differentiable functions $f : U \to \Real$ such that $U$ is a finite union of (possibly unbounded) rectangles
%\lo{By rectangles, I assume you mean measurable rectangles.}
in $\Real^n$ and $f^{-1}(I)$ is a finite union of (potentially unbounded) rectangles for possibly unbounded rectangles $I$.
\fi
The set $\pop_2$ of (partial) functions $f\from\Real^\ell\pto\Real$ such that %$f$ is differentiable in the interior of $\dom{f}$, and
\changed[lo]{$\dom f$ is open\footnote{This requirement is crucial, and cannot be relaxed.}, and $f$ is differentiable everywhere in $\dom{f}$, and}
% \begin{enumerate*}
% \item $\dom f$ is a finite union of (possibly unbounded) rectangles in $\Real^n$ and
% \item
$f^{-1}(I)$ is a finite union of (possibly unbounded) rectangles\changed[dw]{\footnote{i.e. a finite union of
$I_1\times\cdots\times I_\ell$ for (possibly unbounded) intervals $I_i$}} for (possibly unbounded) intervals $I$.
\iffalse
\lo{@Dominik: We need to clarify rectangles.

- I suspect by rectangles (of $\Real^n$) you mean $(a_1, b_1) \times \cdots \times (a_n, b_n)$. But a standard meaning of (measurable) rectangle is $A_1 \times \cdots \times A_n$ where each $A_i \in \Sigma_{\Real}$.

- If we use the latter definition of rectangle, then $\pop_2$ does not satisfy Condition 3 above (of admissible collection of primitive functions) - just consider the complement of the fat Cantor set.
}
\dw{I mean the former. Is the footnote enough clarification?}
\lo{Yes.}
\fi
% \end{enumerate*}
\end{enumerate}
\end{example}

\iffalse
\begin{restatable}{proposition}{examplessatisfyassumption}
  \changed[dw]{\sout{$\pop_1$ and $\pop_2$ satisfy }\cref{ass:pop}.}
\end{restatable}
\dw{It seems a bit redundant to state this explicitly as a proposition. I think it's clear enough that we make the claim from the example.}
\fi

\lo{NOTE. For every rectangle $R$, $\leb(\boundary R) = 0$.
To see this, notice that
\[
\interior{R} =  (a_1, b_1) \times \cdots \times (a_n, b_n)
= \bigcup_{k=1}^\infty \prod_{i=1}^n[a_i + 1/k, b_i-1/k].
\]
Hence $\interior R$ is measurable, and $\leb(\interior{R}) = \leb(\overline R)$ (thanks to \cref{lem:increasing seq limit}).

\begin{lemma}[Limit]\label{lem:increasing seq limit}
If $(A_i)_{i \in \omega}$ is an increasing sequence of measurable sets (i.e.~$A_i \subseteq A_{i+1}$ for all $i \in \omega$) then $\mu(\cup_{i = 1}^\infty A_i) = \lim_{i \to \infty} \mu(A_i)$.
\end{lemma}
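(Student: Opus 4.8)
The plan is to reduce the claim to the countable additivity axiom by ``disjointifying'' the sequence. First I would set $B_1 \defeq A_1$ and $B_i \defeq A_i \setminus A_{i-1}$ for $i \geq 2$. Since $\Sigma_X$ is closed under complements and finite intersections, each $B_i$ lies in $\Sigma_X$; and because $(A_i)$ is increasing, the $B_i$ are pairwise disjoint. A routine induction on $n$ shows $\bigcup_{i=1}^{n} B_i = A_n$, and hence $\bigcup_{i \in \omega} B_i = \bigcup_{i \in \omega} A_i$.

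Next I would record the telescoping identity for the measures. From the disjoint decomposition $A_n = A_{n-1} \sqcup B_n$ and finite additivity (which follows from the measure axioms by taking all but finitely many sets in the family to be empty) we get $\mu(A_n) = \mu(A_{n-1}) + \mu(B_n)$, so by induction $\mu(A_n) = \sum_{i=1}^{n} \mu(B_i)$ for every $n$. This identity holds in $[0,\infty]$ even when some $\mu(A_i) = \infty$, so no separate case analysis for infinite-measure sets is needed.

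Finally, applying countable additivity to the pairwise disjoint family $\{B_i\}_{i \in \omega}$ yields
\[
  \mu\Big(\bigcup_{i \in \omega} A_i\Big)
  = \mu\Big(\bigcup_{i \in \omega} B_i\Big)
  = \sum_{i \in \omega} \mu(B_i)
  = \lim_{n \to \infty} \sum_{i=1}^{n} \mu(B_i)
  = \lim_{n \to \infty} \mu(A_n),
\]
where the third equality is the definition of a sum of non-negative extended reals as the supremum (equivalently, the limit) of its partial sums, and the limit on the right exists in $[0,\infty]$ because $(\mu(A_n))_n$ is monotone non-decreasing by monotonicity of $\mu$. This completes the argument; the only point demanding a little care is the bookkeeping with $\infty$-valued measures, which, as noted above, presents no genuine obstacle.
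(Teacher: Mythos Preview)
Your argument is correct and is exactly the standard textbook proof of continuity from below for measures. The paper states this lemma without proof (it appears only inside an author's note as a well-known fact), so there is nothing further to compare: your disjointification via $B_i \defeq A_i \setminus A_{i-1}$ followed by countable additivity is the canonical route, and your handling of the $[0,\infty]$-valued case is fine.
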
}

\iffalse
\changed[cm]{$\inv{f}(I)$ is finite union of intervals for (possibly unbounded) rectangles $I$ in $\Real$}.
\lo{Is the preceding highlighted phrase correct? Did you mean to say that $I$ is an open interval?}
\dw{I'm also a bit confused, I guess it should be something like ``$f^{-1}(I)$ is a finite union of (potentially unbounded) rectangles for possibly unbounded rectangles $I$''.}
\lo{@DW: I have changed your $U$ to $I$.}
\fi

Note that all primitive functions mentioned in our examples (and in particular the density of the normal distribution) are included in both $\pop_1$ and $\pop_2$.

It is worth noting that both $\pop_1$ and $\pop_2$ satisfy the following stronger (than \cref{ass:pop}.3) property:
$\leb_n(\boundary f^{-1}I)=0$ for every interval $I$, for every primitive function $f$.

\subsection{Almost Sure Termination}

\changed[dw]{To rule out the contrived counterexamples which diverge we restrict attention to}
\emph{almost surely terminating} SPCF terms.
Intuitively, a program $M$ (closed term of ground type) is almost surely terminating if the probability that a run of $M$ terminates is 1.

Take an SPCF term $M$ with variables amongst $x_1, \ldots, x_m$ (all of type $\Real$), and set
\begin{equation}
\trtermM \defeq \big\{
\args \in \Real^m \times \traces \mid \exists V, w \,.\, \config {M[{\tupf r} / {\tupf x}]}1\emptytrace\red^*\config V w\trace
\big\}.
\label{eq:trtermM}
\end{equation}
\changed[lo]{Let us first consider the case of closed $M \in \terms^0$ i.e.~$m = 0$
(notice that the measure $\mu_{\Real^m \times \traces}$ is not finite, for $m \geq 1$).
As $\trtermM$ now coincides with $\valuefn_M^{-1}(\closedvalues)$, $\trtermM$ is a measurable subset of $\traces$.}
Plainly if $M$ is deterministic (i.e.~$\Sample$-free), then $\trmeasure(\trtermM) = 1$ if $M$ converges to a value, and 0 otherwise.
Generally for an arbitrary (stochastic) term $M$ we can regard $\trmeasure(\trtermM)$ as the probability that a run of $M$ converges to a value, because of \cref{lem:termleq1}.

\begin{restatable}{lemma}{terminationsubprob}
\label{lem:termleq1}
If $M \in \terms^0$ then
\(
  \trmeasure(\trtermM)\leq 1.
\)
\end{restatable}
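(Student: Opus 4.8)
The statement asserts that for a closed term $M \in \terms^0$, the total trace measure of the set of terminating traces is at most $1$. The natural approach is to establish a finer, quantitative invariant over the reduction relation and then take a limit. First I would observe that it suffices to bound, for each $n \in \Nat$, the measure of the set $\trtermM^{(n)}$ of terminating traces whose reduction $\config{M}{1}{\emptytrace} \red^* \config{V}{w}{\trace}$ uses at most $n$ steps (actually at most $n$ sampling steps, which is what controls the trace length); since $\trtermM = \bigcup_n \trtermM^{(n)}$ is an increasing union, by \cref{lem:increasing seq limit} we have $\trmeasure(\trtermM) = \lim_{n\to\infty} \trmeasure(\trtermM^{(n)})$, so it is enough to show $\trmeasure(\trtermM^{(n)}) \le 1$ uniformly in $n$.

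The key lemma I would prove, by induction on $n$, is the following invariant on configurations: for every closed term $N$ and every $w \in \pReal$,
\[
  \int_{\traces}\!\diff\trmeasure(\trace)\ \big[\text{weight of a $\le n$-step terminating run of } \config{N}{w}{\emptytrace} \text{ producing } \trace\big]\ \le\ w.
\]
The base case $n = 0$ is immediate: either $N$ is already a value (the integrand is $w\cdot\charfn{\{\emptytrace\}}$, contributing $w$) or it is not (the integrand is $0$). For the inductive step I would do a case analysis on the unique redex $\redexa$ with $N \equiv \evalcon[\redexa]$. The only interesting cases are $\Sample$ and $\Score{\PCF r}$. For $\Sample$: the first step sends $\trace \mapsto [r]\concat\trace'$ with $r$ uniformly distributed on $(0,1)$; by Tonelli's theorem and the inductive hypothesis applied to the continuation with weight still $w$, the integral over $r \in (0,1)$ of a quantity $\le w$ is $\le w$ — here the fact that $\Sample$ draws from the \emph{unit interval}, which has Lebesgue measure $1$, is exactly what makes the bound tight. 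For $\Score{\PCF r}$ with $r \ge 0$: the weight becomes $r\cdot w$ but the trace is unchanged, and a single step can only decrease the step budget, so by the inductive hypothesis the integral is $\le r\cdot w$, which is \emph{not} necessarily $\le w$ — so I would need to be slightly careful: the clean invariant above is actually false in the presence of $\mathsf{score}$ with $r > 1$.

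This is the main obstacle, and the standard way around it (as in Borgström et al.) is to observe that $\trtermM^{(n)}$ is exactly the set of terminating traces, with the weight \emph{replaced by $1$} — i.e. we are integrating $\charfn{\trtermM^{(n)}}$, not $\weightfn_M$. So the right invariant is purely about the \emph{measure of the trace-set}, ignoring weights entirely: define $P_n(N)$ to be $\trmeasure$ of the set of traces $\trace$ such that $\config{N}{1}{\emptytrace}$ reaches a value within $n$ sampling steps producing trace $\trace$. Then the induction shows $P_n(N) \le 1$: the $\Score$ step does not touch the trace and does not branch, so $P_n(\evalcon[\Score{\PCF r}]) \le P_{n}(\evalcon[\PCF r]) \le 1$ by IH; the $\Sample$ step gives $P_n(\evalcon[\Sample]) = \int_{(0,1)} P_{n-1}(\evalcon[\PCF r])\,\diff r \le \int_{(0,1)} 1\,\diff r = 1$; and all deterministic redex contractions ($\beta$, $\Y$, primitive-function application, conditional) neither branch nor extend the trace, so they preserve the bound directly via IH. Determinism of $\red$ on closed terms (stated earlier in the excerpt) ensures these sets are well-defined and that the recursion on evaluation contexts is unambiguous. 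Taking $n \to \infty$ via \cref{lem:increasing seq limit} then gives $\trmeasure(\trtermM) = \lim_n P_n(M) \le 1$, as required. The only technical points to check are measurability of each $\trtermM^{(n)}$ (which follows from the measurability of $\valuefn_M$ and $\weightfn_M$ cited from \cite[Lemma 9]{DBLP:conf/icfp/BorgstromLGS16}) and the applicability of Tonelli in the $\Sample$ case (legitimate since all integrands are non-negative and measurable).
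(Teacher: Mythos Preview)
Your argument is correct in outline but takes a genuinely different route from the paper. The paper's proof (implicit in the text, and invoked again in the proof of \cref{lem:maxterm}) is a direct prefix-freeness argument in the style of the Kraft inequality: writing $\traces'_n \defeq \trtermM \cap \traces_n$, one observes that if $\trace \in \trtermM$ then no proper extension $\trace \concat \traceb$ with $\traceb \neq \emptytrace$ lies in $\trtermM$ (once a value is reached, no further samples are consumed). Hence for each fixed $N$ the cylinders $\traces'_i \times (0,1)^{N-i} \subseteq (0,1)^N$, for $i = 0, \ldots, N$, are pairwise disjoint, giving $\sum_{i=0}^N \leb_i(\traces'_i) = \sum_{i=0}^N \leb_N\big(\traces'_i \times (0,1)^{N-i}\big) \le \leb_N\big((0,1)^N\big) = 1$; letting $N \to \infty$ yields $\trmeasure(\trtermM) = \sum_n \leb_n(\traces'_n) \le 1$. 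This avoids any case analysis on redexes and makes no reference to the reduction relation beyond the single prefix-freeness observation.

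Your operational induction is a legitimate alternative, closer to the style of Borgstr\"om et al., and your diagnosis that the weight-based invariant fails for $\Score{\PCF r}$ with $r > 1$ (so one must bound the trace \emph{measure}, not the integral of $\weightfn$) is exactly right. One small wrinkle: as written you induct on the number of \emph{sampling} steps, so for deterministic contractions (including $\Score{}$) you write $P_n(\evalcon[\redexa]) \le P_n(\evalcon[\contra])$ ``by IH'' --- but $n$ has not decreased, so this is not the inductive hypothesis and the recursion is not well-founded as stated (think of $\Y$). The fix is either to induct on the \emph{total} number of reduction steps (every non-value case then drops to $P_{n-1}$), or to note that any closed non-value either reaches a value, a $\Fail$, or some $\evalcon[\Sample]$ after finitely many deterministic steps, or else diverges deterministically with $P_n = 0$, and to fast-forward accordingly.
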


More generally, if $M$ has free variables amongst $x_1, \ldots, x_m$ (all of type $\PCFReal$), then we say that $M$ is almost surely terminating if for almost every (instantiation of the free variables by) ${\tupf r} \in \Real^m$, $M[{\tupf r} / {\tupf x}]$ terminates with probability 1.

We formalise the notion of almost sure termination as follows.

\begin{definition}\rm
  \label{def:ast}
  Let $M$ be an SPCF term.
  We say that $M$ \defn{terminates almost surely} if
  \begin{enumerate}[noitemsep]
  \item $\terma$ is closed and
  $\mu(\trtermM)=\mu(\valuefn_M^{-1}(\closedvalues))=1$; % \lo{Need to first show that the preceding set $\{\cdots\}$ is measurable.
  % We also need \cref{lem:measure less than 1} to justify / explain the definition of AST.}
or
  \item $\terma$ has free variables amongst $x_1,\ldots,x_m$ (all of which are of type $\PCFReal$), and there exists $T \in \Sigma_{\Real^m}$ such that $\leb_m(\Real^m\setminus T)=0$ and for each $\tupf r\in T$, $M[\tupf{\PCF r}/\tupf x]$ terminates almost surely.
  \end{enumerate}
\end{definition}
% \begin{definition}\rm
%   An SPCF term \defn{terminates almost surely} if $\mu(\trmax\setminus\trterm)=0$.
% \end{definition}
% \dw{Intuition} \dw{add index for $M$}

Suppose that $M$ is a closed term and \changed[lo]{$M^\flat$} is obtained from $M$ by recursively replacing subterms $\Score L$ with the term $\If {L<0}{N_\Fail}{L}$, where $N_\Fail$ is a term that reduces to $\Fail$ such as $\PCF 1/\PCF 0$.
It is easy to see that for all $\tupf s\in\traces$, $\config {M^\flat} 1 \emptytrace\red^*\config V 1{\tupf s}$ iff for some (unique) $w\in\pReal$, $\config M 1 \emptytrace\red^*\config V w {\tupf s}$. Therefore,
  \begin{align*}
    \llbracket M^\flat\rrbracket(\values)&=\int_{\valuefn^{-1}_{M^\flat}(\values)}  \diff \trmeasure \;\weightfn_{M^\flat}\\
                                    &=\trmeasure(\{\tupf s\in\traces\mid\exists V \, . \, \config{M^\flat}1\emptytrace\red^*\config V 1{\tupf s}\})
                                    =\trmeasure(\trtermM% {\tupf s\in\traces\mid\exists V, w \, . \, \config M 1\emptytrace\red^*\config V w{\tupf s}\}
                                      )
  \end{align*}
Consequently, the closed term $M$ terminates almost surely iff $\llbracket M^\flat\rrbracket$ is a probability measure.

\begin{remark}
\begin{itemize}
\item Like many treatments of semantics of probabilistic programs in the literature, we make no distinction between non-terminating runs and aborted runs of a (closed) term $M$: both could result in the value semantics $\llbracket M^\flat\rrbracket$ being a sub-probabilty measure % rather than a probability measure % failing to be a probability measure
(cf.\ \cite{DBLP:conf/esop/BichselGV18}).

%In case $M$ is not a.s.~terminating (equivalently, $\llbracket M^\flat\rrbracket$ is a proper subprobability measure), the value semantics makes no distinction between non-termination and exceptions (cf.\ \cite{DBLP:conf/esop/BichselGV18})

\item
Even so, current probabilistic programming systems do not place any restrictions on the code that users can write:
it is perfectly possible to construct invalid models because catching programs that do not define valid probability distributions can be hard, or even impossible.
This is not surprising, because almost sure termination is hard to decide: it is $\Pi_2^0$-complete in the arithmetic hierarchy \cite{DBLP:journals/acta/KaminskiKM19}.
Nevertheless, because a.s.~termination is an important correctness property of probabilistic programs (not least because of the main result of this paper, \cref{thm:mainres}), the development of methods to prove a.s.~termination is a hot research topic.
\end{itemize}
\end{remark}

%   and define
% $\termi_\terma \defeq
% \{
%   \tupf s\in\traces\mid
%   \config {\terma} 1 \emptytrace\red^*\config V w {\tupf s}
% \}$.
% We say $\terma$ \defn{terminates almost surely} if
% $\mu(\termi_\terma)=1$.

% We say a term $\termb$ with free variables $\seq{x}{1}{m}$ of type $\PCFReal$ \defn{terminates almost surely} if
% there exists measurable $U\subseteq\Real^m$ such that
% $\mu(\Real^m\setminus U)=0$ and
% for every $\tupf r\in U$,
% $\terma[\tupf {\PCF r}/\tupf x]$ terminates almost surely.

% \dw{What do you think of this?}
% \cm{How about this?}

% =======
% We say a term $\termb$ with free variables $\seq{x}{1}{m}$ of type $\PCFReal$ \defn{terminates almost surely} if
% there exists measurable set $U\in \Borel_m$ such that
% $\mu(\Real^m\setminus U)=0$ and
% for every $\tupf r\in U$,
% $\terma[\tupf {\PCF r}/\tupf x]$ terminates almost surely.
% >>>>>>> 939c3e9d5156fd05cbdffbaa1384dd4f6b645e0d

% \cm{How does this look? There is no need to define $\termi_\terma$ for terms with free variables.}

% Hence, \dw{??, this is not completely trivial :)} we have the following theorem.

Accordingly the main theorem of this paper is stated as follows:
% \begin{theorem}
%   The weight function $\weightfn_\terma$ of any almost surely terminating SPCF term $\terma$ is differentiable almost everywhere.
% \end{theorem}
\begin{restatable*}{theorem}{mainres}
  \label{thm:mainres}
  Let $M$ be an SPCF term (possibly with free variables of type $\PCFReal$) which terminates almost surely.
  Then its weight function $\weightfn_M$ and value function $\valuefn_M$ are differentiable almost everywhere.
\end{restatable*}

\section{Stochastic Symbolic Execution}
\label{sec:symbolic}

% !TEX root = ./main.tex

% \lo{Command {\tt \textbackslash mathpzc\{ABCdef123\}} produces $\mathpzc{ABCdef123}$}
% \dw{thanks}

% \lo{NOTATION. If we use $\tup r 1 m$ to denote sequences of length $m$, then, for consistency, we should use $\bf r$ for an arbitrary finite sequence (and not $\overline r$).}
% \dw{Yes, that's good.}

\dw{todo: State somewhere explicitly that we keep track of the set of traces rather than conditions on the traces. This makes it look closer to the probabilistic operational semantics but less similar to standard symbolic execution.}

\iffalse

\dw{QUESTION: Can we assume everywhere differentiable primitive operations in the language? However, I guess we surely want to allow even not everywhere defined functions such as division.}

\dw{Potential answer: it should work out nicely if primitive operators are assumed to be partial functions defined on an open set, where they are differentiable.
  (Requires a bit of care for weight 0 due to undefinedness.)
}

\fi

%To prove our main result it is important to understand the branching behaviour of programs. The standard weight-based operational semantics as in \cite{DBLP:conf/icfp/BorgstromLGS16} and \cref{sec:semantics} does not seem very suited to that aim because it provides no information on the weight and value a term reduces to for ``close'' random samples. \lo{The highlighted phrase is not clear.} \dw{link to example}
%Rather we propose to use a more symbolic operational (weight-based) semantics (similar to the ``compilation scheme'' in \cite{DBLP:conf/aistats/ZhouGKRYW19}), where we do not immediately replace $\Sample$-terms (according to a CBV-strategy) but delay this by introducing a fresh (free) variable.

We have seen that a source of discontinuity is the use of if-statements. Our main result therefore relies on an in-depth understanding of the branching behaviour of programs.
The operational semantics given in \cref{sec:semantics}
is unsatisfactory in this respect: any two execution paths
are treated independently, whether they go through different branches of an if-statement or one is obtained from the other by
using slightly perturbed random samples not affecting the control flow.

More concretely, note that although we have derived $\weightfn_{\sf Ped}\traceseq{0.2,0.9,0.7}=0.54$ and $\valuefn_{\sf Ped}\traceseq{0.2,0.9,0.7}=\PCF{0.6}$ in \cref{example: reduction}, we cannot infer anything about $\weightfn_{\sf Ped}\traceseq{0.21,0.91,0.71}$ and $\valuefn_{\sf Ped}\traceseq{0.21,0.91,0.71}$ unless we perform the corresponding reduction.

So we propose an alternative
\emph{symbolic} operational semantics (similar to the ``compilation scheme'' in \cite{DBLP:conf/aistats/ZhouGKRYW19}), in which no sampling is performed: whenever a $\Sample$ command is encountered,
we simply substitute a fresh variable $\alpha_i$ for it, and continue on with the execution.
We can view this style of semantics as a stochastic form of symbolic execution \cite{DBLP:journals/tse/Clarke76,DBLP:journals/cacm/King76}, i.e., a means of analysing a program so as to determine what \emph{inputs}, and \emph{random draws} (from $\Sample$) cause each part of a program to execute.
%replace $\Sample$-terms (according to a CBV-strategy) but delay this by introducing a fresh (free) variable $\alpha_i$.

Consider the term $ M \defeqs
  \letin{x =\Sample\cdot\PCF 3}{(\walk\,x)}$, defined using the function $\passZero$ of \cref{example: SPCF syntax}.
  We have a reduction path
% \iffalse
% \begin{align*}
%   M\defeqs (\lambda\start\ldotp\passZero\,\start)\,(\Sample\cdot\PCF 3) &\sred (\lambda\start\ldotp\passZero\,\start)\,(\alpha_1\cdot\PCF{3})\sred\passZero\,(\alpha_1\cdot \PCF{3})
% \end{align*}
% \fi
\begin{align*}
  M
  \sred
  \letin{( x =\alpha_1\cdot\PCF 3)}{(\walk\,x)}
  \sred
  \walk\,(\alpha_1\cdot \PCF{3})
\end{align*}
% \cm{Could we use $\equiv$ for syntactic equivalence and $:=$ for set-theoretical definition?}
% \dw{That's fine with me.}
% \lo{Yes for the former. The notation {\tt defeq} denotes definition (not just set-theoretical).}
but at this point we are stuck: the CBV strategy requires a value for $\alpha_1$. We will ``delay'' the evaluation of the multiplication $\alpha_1\cdot\PCF 3$; we signal this by drawing a box around the delayed operation:  $\alpha_1\,\delay\cdot\,\PCF 3$. We continue the execution, inspecting the definition of $\passZero$, and get:
% \iffalse
% \begin{align*}
%    M\sred^*\passZero\,(\alpha_1\,\delay\cdot\, \PCF{3})&\sred^* N\defeqs \Ifleq{\alpha_1\,\delay\cdot\,\PCF{3}}{\PCF{0}}
%      {\\&\hspace*{15mm}(\lambda\dist\ldotp\If{\Sample\leq\PCF{0.5}}{\dist+\passZero\,(\alpha_1\,\delay\cdot\,\PCF 3+\dist)}{\dist+\passZero\,(\alpha_1\,\delay\cdot\,\PCF 3-\dist)})\,\Sample}
% \end{align*}
% \fi
\begin{align*}
  M
  & \sred^*
  \passZero\,(\alpha_1\,\delay\cdot\, \PCF{3})
  \sred^*
  N \defeqs \Ifleq{\alpha_1\,\delay\cdot\,\PCF{3}}{\PCF{0}}{P}
  %\\
  %& \quad
  %  \Big(\letin{s = \Sample}{\big(
  %    \If{\Sample\leq\PCF{0.5}}{\dist+\passZero\,(\alpha_1\,\delay\cdot\,\PCF 3+\dist)}{\dist+\passZero\,(\alpha_1\,\delay\cdot\,\PCF 3-\dist)}
  %  \big)}\Big)
\end{align*}
where
 $$P \defeqs \begin{pmatrix*}[l]
                      \letin{s = \Sample} \\
                      \If{(\Sample \leq \PCF{0.5})}
                      {\big(s + \passZero(\alpha_1\,\delay\cdot\,\PCF{3}+s)\big)}{\big(s + \passZero(\alpha_1\,\delay\cdot\,\PCF{3}-s)\big)}
                    \end{pmatrix*}.$$
\iffalse
\dw{underline $\cdot$}
\cm{Why do we need to make the delay explicit?}
\dw{In the end (to show that $\valuefn$ is differentiable) we need that for every symbolic term and values $(\tupf r,\tupf s)$ there is a unique concrete term which corresponds to that.
  Otherwise it is e.g.\ not clear whether $\lambda y\ldotp\alpha_1+\PCF 1$ should correspond to $\lambda y\ldotp\PCF{0.5}+\PCF 1$ or $\lambda y\ldotp\PCF{1.5}$ for $s_1=0.5$. Does this make sense?}
\cm{Yes, thanks.}
\lo{This is a very natural question. I have slightly rephrased Dominik's answer as \cref{rem:delay}.}

\cm{
  Might I suggest the use of double-underline of
  $f$ for delayed function symbols?
}
\dw{That would match our convention. However, in the example double underlining $\cdot$ is not so visible. Btw.\ we should say that you omit the underlining in your examples.

@ALL: what do you think?}

\lo{Here is an example of double-underline for delayed function symbols:
\begin{align*}
   M\sred^*\passZero\,(\alpha_1\delaynew\cdot \PCF{3})&\sred^* N\defeqs \Ifleq{\alpha_1\delaynew\cdot\PCF{3}}{\PCF{0}}
     {\\&\hspace*{15mm}(\lambda\dist\ldotp\If{\Sample\leq\PCF{0.5}}{\dist+\passZero\,(\alpha_1\delaynew\cdot\PCF 3+\dist)}{\dist+\passZero\,(\alpha_1\delaynew\cdot\PCF 3-\dist)})\,\Sample}
\end{align*}
It is visible enough, but not nearly as clear / striking as boxing.}
\fi
We are stuck again: the value of $\alpha_1$ is needed in order to know which branch to follow. Our approach consists in considering the space $\traces_1 = (0, 1)$ of possible values for $\alpha_1$, and splitting it into $\{s_1\in(0,1)\mid s_1\cdot 3 \leq 0\}=\emptyset$
% $U_1\defeq\{s_1\in\traces_1\mid s_1\cdot 3 \leq 0\}=\emptyset$
and
$\{s_1\in(0,1)\mid s_1\cdot 3> 0\}=(0,1)$.
% $U_2\defeq\{s_1\in\traces_1\mid s_1\cdot 3> 0\}=\traces_1$
Each of the two branches will then yield a weight function restricted to the appropriate subspace.

Formally,
% To remedy this we keep track of all samples which are compatible with previous branchings and the weight associated to each of them.
% Thus, %the symbolic operational semantics acts on
our symbolic operational semantics is a rewrite system of configurations of the form $\sconfig\sterm\sweight U$,
where $\sterm$ is a term with delayed (boxed) operations, and free ``sampling'' variables\footnote{Note that $\sterm$ may be open and contain other free ``non-sampling'' variables, usually denoted $x_1, \dots, x_m$.} $\alpha_1,\ldots,\alpha_n$; $U\subseteq\traces_n$ is the subspace of sampling values compatible with the current branch; and $\sweight\from U\to\pReal$ is a function assigning to each $\tupf s\in U$ a weight $\sweight(\tupf s)$. In particular, for our running example\footnote{We use the meta-lambda-abstraction $\bblambda x\ldotp f(x)$ to denote the set-theoretic function $x \mapsto f(x)$.} %\lo{Note the use of meta-lambda-abstraction.}
% \begin{align*}
%   %\sconfig M{\emp \mapsto 1}{\{\emp\}}&\sred^*\sconfig N{(s_1\mapsto 1)}{\traces_1}
% \sconfig M{\bblambda . 1}{\{\emptytrace\}}&\sred^*\sconfig N{\bblambda s_1 . 1}{\traces_1}
% \end{align*}
\begin{align*}
  %\sconfig M{\emp \mapsto 1}{\{\emp\}}&\sred^*\sconfig N{(s_1\mapsto 1)}{\traces_1}
\sconfig M{\bblambda\emptytrace\ldotp 1}{\traces_0}&\sred^*\sconfig N{\bblambda\traceseq{s_1}\ldotp 1}
{(0, 1)}.
\end{align*}
As explained above, this leads to two branches:
\[
\begin{tikzcd}[row sep=-0.8em,column sep=1.2em]
&  \sconfig {\PCF 0}{\bblambda\traceseq{s_1}\ldotp 1}{\emptyset} \phantom{aaa} \\
\sconfig N{\bblambda\traceseq{s_1}\ldotp 1}{(0,1)} \arrow[phantom]{ur}[description]{\raisebox{1em}{\rotatebox[origin=c]{+25}{$\sred^*$}}}
\arrow[phantom]{dr}{\rotatebox[origin=c]{-25}{$\sred^*$}}
&  \\
& \sconfig P {\bblambda\traceseq{s_1}\ldotp 1}{(0,1)}
\end{tikzcd}
\]
\iffalse
\[
\sconfig N{\bblambda\traceseq{s_1}\ldotp 1}{(0,1)} \sred^* \sconfig 0{\bblambda\traceseq{s_1}\ldotp 1}{\emptyset} \qquad \sconfig N{\bblambda\traceseq{s_1}\ldotp 1}{(0,1)} \sred^* \sconfig P {\bblambda\traceseq{s_1}\ldotp 1}{(0,1)}
\]
\fi
The first branch has reached a value, and the reader can check that the second branch continues as
{\small
\begin{align*}
& \sconfig P {\bblambda\traceseq{s_1}\ldotp 1}{(0,1)} \sred^*  \\
& \
{\sconfig
        {
          \If
            {\alpha_3\leq\PCF{0.5}}
            {\alpha_2+\passZero(\alpha_1\,\delay\cdot\,\PCF 3+\alpha_2)}
            {\alpha_2+\passZero(\alpha_1\,\delay\cdot\,\PCF 3-\alpha_2)}
        }
        {\bblambda\traceseq{s_1,s_2,s_3}\ldotp 1}
        {(0,1)^3}
}
\end{align*}
}
% \begin{equation*}
%   \begin{tikzcd}
%     & \hspace*{-1cm}\sconfig N{\bblambda s_1\ldotp 1}{\traces_1} \arrow[ld,double]\arrow[rd,double] \hspace*{-1cm}&\\
%     \sconfig{\PCF{0}}{\bblambda s_1\ldotp 1}{U_1} &&\sconfig{(\lambda\dist\ldotp\If{\Sample\leq\PCF{0.5}}{\dist+\passZero(\alpha_1\,\delay\cdot\,\PCF 3+\dist)}{\\[-7mm]&&\dist+\passZero(\alpha_1\,\delay\cdot\,\PCF 3-\dist)})\,\Sample} {\bblambda s_1\ldotp 1}{U_2}\arrow[d,double]\\
%     &&\sconfig{\If{\alpha_3\leq\PCF{0.5}}{\alpha_2+\passZero(\alpha_1\,\delay\cdot\,\PCF 3+\alpha_2)}{\\[-7mm]&&\alpha_2+\passZero(\alpha_1\,\delay\cdot\,\PCF 3-\alpha_2)}} {\bblambda (s_1,s_2,s_3)\ldotp 1}{U_2\times(0,1)\times(0,1)}
%   \end{tikzcd}
% \end{equation*}
\iffalse \dw{multiple steps}
The \changed[dw]{top} branch \cm{then-branch} is irreducible because $\PCF 0$ is a value.
\changed[cm]{
The else-branch faces two sampling construct, so
they are replaced by $\alpha_2$ and $\alpha_3$ respectively,
and the corresponding set of compactible samples and weight function are extended.
}
\fi
where $\alpha_2$ and $\alpha_3$ stand for the two $\Sample$ statements in $P$. From here we proceed by splitting $(0,1)^3$ into
$(0,1)\times(0,1)\times (0,0.5]$ and $(0,1)\times(0,1)\times(0.5,1)$
\changed[dw]{and after having branched again (on whether we have passed $0$) the evaluation of $\walk$ can terminate in the configuration
\begin{align*}
  \sconfig {\alpha_2\,\delay +\, 0}{\bblambda\traceseq{s_1,s_2,s_3}\ldotp 1}{U}
\end{align*}
where $U\defeq\{\traceseq{s_1,s_2,s_3}\in\traces_3\mid s_3> 0.5\land s_1\cdot 3-s_2\leq 0\}$.}

% In the right branch we can similarly proceed by splitting $U_2\times(0,1)\times(0,1)$ into %$U_2\times(0,1)\times (0,0.5)$
% $U_2\times(0,1)\times (0,0.5]$ and $U_2\times(0,1)\times(0.5,1)$.

Recall that $M$ appears in the context of our running example $\sf Ped$.
\iffalse \changed[lo]{(see program from the introduction)} \lo{@Dominik: I suppose it does not really matter, but what is $L$ supposed to be?} \dw{I guess it should be $\mathsf{Ped}$}
\fi
Using our calculations above we derive one of its branches:
\begin{align*}
&\sconfig{\sf Ped}{\bblambda\emptytrace\ldotp 1}{\{\emptytrace\}}
\sred^*\sconfig{
  \letin{w = \Score{\PCF{\pdf_{\Normal{1.1}{0.1}}}(\alpha_2)}}
  {\alpha_1\,\delay\cdot\,\PCF 3}
  % (\lambda\sco\ldotp\alpha_1\,\delay\cdot\,\PCF 3)\,(\Score{\PCF{\pdf_{\Normal{1.1}{0.1}}}(\alpha_2)})
}{\bblambda\traceseq{s_1,s_2,s_3}\ldotp 1}U \\
  &\hspace*{7mm}\sred\sconfig{
    \letin{w = \Score{\delay{\pdf_{\Normal{1.1}{0.1}}}(\alpha_2)}}
    {\alpha_1\,\delay\cdot\,\PCF 3}
    % (\lambda\sco\ldotp\alpha_1\,\delay\cdot\,\PCF 3)\,(\Score{\delay{\pdf_{\Normal{1.1}{0.1}}}(\alpha_2)})
  }{\bblambda\traceseq{s_1,s_2,s_3}\ldotp 1}U\\
  &\hspace*{7mm}\sred^*\sconfig{
    \letin{w = \delay{\pdf_{\Normal{1.1}{0.1}}}(\alpha_2)}
    {\alpha_1\,\delay\cdot\,\PCF 3}
    % (\lambda\sco\ldotp\alpha_1\,\delay\cdot\,\PCF 3)\,(\delay{\pdf_{\Normal{1.1}{0.1}}}(\alpha_2))
  }{\bblambda\traceseq{s_1,s_2,s_3}\ldotp\pdf_{\Normal{1.1}{0.1}}(s_2)}U\\
  &\hspace*{7mm}\sred^*\sconfig{\alpha_1\,\delay\cdot\,\PCF 3}{\bblambda\traceseq{s_1,s_2,s_3}\ldotp\pdf_{\Normal{1.1}{0.1}}(s_2)}U
\end{align*}
% $U\defeq\{\traceseq{s_1,s_2,s_3}\in\traces_3\mid s_3>0.5\land s_1\cdot 3-s_2\leq 0\}$
In particular the trace $\traceseq{0.2,0.9,0.7}$ of \cref{example: reduction} lies in the subspace $U$. We can immediately read off the corresponding value and weight functions for \emph{all} $\traceseq{s_1,s_2,s_3}\in U$ simply by evaluating the computation $\alpha_1\cdot\PCF 3$, which we have delayed until now:
\begin{align*}
  \valuefn_{\sf Ped}\traceseq{s_1,s_2,s_3}&=\PCF{s_1\cdot 3}&
  \weightfn_{\sf Ped}\traceseq{s_1,s_2,s_3}&=\pdf_{\Normal{1.1}{0.1}}(s_2)
\end{align*}

%\lo{TODO: Add a paragraph on the background of symbolic execution in verification and program analysis.}

\subsection{Symbolic Terms and Values}
\label{sec:stermval}

% We continue to use $x_1, x_2,\ldots$ to denote arbitrary term variables.
% \dw{Is it somewhat clear what I mean?} \lo{Yes}

% \lo{Sentence to be repositioned somewhere appropriate.}

%We have seen in the example that due to our symbolic approach, we may want to delay the evaluation of primitive operations.
We have just described informally our symbolic execution approach, which involves delaying the evaluation of primitive operations.
We make this formal by introducing an extended notion of terms, which we call \defn{symbolic terms} and define in
\cref{fig:sterm} along with a notion of \defn{symbolic values}. For this we assume fixed denumerable sequences of \defn{distinguished} variables: $\alpha_1, \alpha_2, \ldots$, used to represent sampling, and $x_1,x_2,\ldots$ used for free variables of type $\PCFReal$.
%\lo{\cref{fig:sterm}: Can we rearrange rules so as to avoid overshooting textwidth?}
Symbolic terms are typically denoted $\sterm$, $\stermb$, or $\stermc$.
They contain terms of the form $\delay f(\sval_1,\ldots,\sval_\ell)$ for $f\from\Real^\ell\pto\Real\in\pop$ a primitive function, representing delayed evaluations, and they also contain the sampling variables $\alpha_j$.
The type system is adapted in a straightforward way, see \cref{fig:styping}.
%\lo{\cref{fig:styping}: Can we rearrange rules so as to avoid overshooting textwidth?}

We use $\termss m n$ to refer to the set of well-typed symbolic terms with free variables amongst $x_1,\ldots,x_m$ and $\alpha_1,\ldots,\alpha_n$ (and all are of type $\PCFReal$).
Note that every term in the sense of \cref{fig:SPCF syntax} is also a symbolic term. % We write $\termsr m$ for the \changed[hp]{subset of $\termss m 0$ containing ``regular'' terms, without sampling variables or delayed operations $\delay f$.}
\lo{@Dominik: Why do we highlight certain expressions in \cref{fig:symsem}? I can't remember your answer.}

\dw{The aim is to draw the reader's attention to the most noteworthy items. In \cref{fig:sterm} it is e.g.\ what is added in comparison to regular terms.}

\lo{OK. I have added a footnote at the first reference to \cref{fig:SPCF syntax} to explain the highlighting.}

\begin{figure}[ht]
  \begin{subfigure}{\textwidth}
    \begin{align*}
      \sval &::=\PCF{r}\mid x_i\mid \mathhighlight{\alpha_j } \mid \mathhighlight{\delay f(\sval_1,\ldots,\sval_\ell) } \mid\lambda y\ldotp\sterm \\
      \sterm,\stermb,\stermc&::= \sval \mid y  \mid
                              \PCF{f}(\sterm_1,\dots,\sterm_\ell)  \mid
                              \sterm\,\stermb \mid
                              \Y{\sterm}
                            \mid\If{\stermc \leq 0}{\sterm}{\stermb}
                              \mid
                              \Sample \mid
                              \Score{\sterm}
    \end{align*}
    \caption{Symbolic values (typically $\sval$) and symbolic terms (typically $\sterm$, $\stermb$ or $\stermc$)}
    \label{fig:sterm}
  \end{subfigure}
  \begin{subfigure}{\textwidth}
    $$\begin{array}{c}
      \mathhighlight{\infer{\Gamma\vdash \delay f(\sval_1,\ldots,\sval_\ell)\from\PCFReal}{\Gamma\vdash\sval_1\from\PCFReal\cdots\Gamma\vdash\sval_\ell\from\PCFReal}}\qquad\infer{\Gamma\vdash x_i\from\PCFReal}{}\qquad \mathhighlight{\infer{\Gamma\vdash \alpha_j\from\PCFReal}{}}\\[1em]
      \infer{\Gamma,y\from\typea\vdash y\from\typea}{}\qquad \infer[r\in\Real]{\Gamma\vdash\PCF r\from\PCFReal}{}\qquad
            \infer{\Gamma\vdash\PCF f(\sterm_1,\ldots,\sterm_\ell)\from\PCFReal}{\Gamma\vdash\sterm_1\from\PCFReal\cdots\Gamma\vdash\sterm_\ell\from\PCFReal}\\[1em]
      \infer{\Gamma\vdash\lambda y\ldotp\sterm\from\typea\to\typeb}{\Gamma,y\from\typea\vdash\sterm\from\typeb}\qquad \infer{\Gamma\vdash\sterm\,\stermb\from\typeb}{\Gamma\vdash\sterm\from\typea\to\typeb\Gamma\vdash\stermb\from\typea}\qquad
      \infer{\Gamma\vdash\Y\sterm\from\typea\tyarrow\typeb}{\Gamma\vdash\sterm\from(\typea\tyarrow\typeb)\tyarrow\typea\tyarrow\typeb}\\[1em]
      \infer{\Gamma\vdash\Ifleq\stermc\sterm\stermb\from\typea}{\Gamma\vdash\stermc\from\PCFReal&\Gamma\vdash\sterm\from\typea\Gamma\vdash\stermb\from\typea}\qquad
      \infer{\Gamma\vdash\Sample\from\PCFReal}{}\qquad
      \infer{\Gamma\vdash\Score\sterm\from\PCFReal}{\Gamma\vdash\sterm\from\PCFReal}
    \end{array}$$
    \caption{Type system for symbolic terms}
    \label{fig:styping}
  \end{subfigure}
  \begin{subfigure}{\textwidth}
    \begin{align*}
      \sredex&::=(\lambda y\ldotp\sterm)\,\sval\mid \PCF f(\mathhighlight{\sval_1,\ldots,\sval_\ell})\mid\Y{(\lambda y\ldotp\sterm)}\mid\Ifleq{\mathhighlight{\sval}}\sterm\stermb\mid\Sample\mid\Score{\mathhighlight\sval}\\
      \scon&::=[\,]\mid\scon\,\sterm\mid (\lambda y\ldotp\sterm)\, \scon\mid\PCF f(\mathhighlight{\sval_1,\ldots, \sval_{i-1}},\scon,\sterm_{i+1},\ldots,\sterm_\ell)\mid\Y\scon\mid\\
             &\Ifleq{\scon}\sterm\stermb\mid \Score{\scon}
    \end{align*}
    \caption{Symbolic values (typically $\sval$), redexes ($\sredex$) and  reduction contexts ($\scon$).}
    \label{fig:sconred}
  \end{subfigure}
  \caption{Symbolic terms and values, type system, reduction contexts, and redexes. As usual $f\in\pop$ and $r\in\Real$.}
  \label{fig:symsem}
\end{figure}

\begin{figure}[ht]
  \begin{subfigure}{\textwidth}
    \begin{align*}
      \domnp\conc{\delay f(\sval_1,\ldots,\sval_\ell)}&\defeq\{(\tupf r,\tupf s)\in\domnp\conc{\sval_1}\cap\cdots\cap\domnp\conc{\sval_\ell}\mid(r'_1,\ldots,r'_\ell)\in\dom f,\\
      &\hspace{28mm} \text{where } \PCF{r'_1}=\conc{\sval_1}(\tupf r,\tupf s), \cdots, \PCF{r'_\ell}=\conc{\sval_\ell}(\tupf r,\tupf s) \}\\
      \domnp\conc\Sample&\defeq\domnp\conc{x_i}\defeq\domnp\conc{\alpha_j}\defeq\domnp\conc y\defeq\domnp\conc{\PCF{r'}}\defeq\Real^m\times\traces_n\\
      \domnp{\PCF f(\sterm_1,\ldots,\sterm_\ell)}&\defeq\domnp\conc{\sterm_1}\cap\cdots\cap\domnp\conc{\sterm_\ell}\\
      \domnp\conc{\lambda y\ldotp\sterm}&\defeq\domnp\conc{\Y\sterm}\defeq\domnp\conc{\Score\sterm}\defeq\domnp\conc\sterm\\
      \domnp\conc{\sterm\,\stermb}&\defeq\domnp\conc\sterm\cap\domnp\conc\stermb\\
      \domnp\conc{\Ifleq\stermc\sterm\stermb}&\defeq\domnp\conc\stermc\cap\domnp\conc\sterm\cap\domnp\conc\stermb
    \end{align*}
    \caption{Domain of $\conc\cdot$}
    \label{fig:dominst}
  \end{subfigure}
  \begin{subfigure}{\textwidth}
    \begin{align*}
      \conc{\delay f(\sval_1,\ldots,\sval_\ell)}(\tupf r,\tupf s)&\defeq \mathhighlight{\PCF {f(r'_1,\ldots, r'_\ell)}}\text{, where for $1\leq i\leq\ell$, }\conc{\sval_i}(\tupf r,\tupf s)=\PCF{ r'_i}\\
      \conc{x_i}(\tupf r,\tupf s)&\defeq\PCF{ r_i}\\
      \conc{\alpha_j}(\tupf r,\tupf s)&\defeq\PCF{ s_j}\\
      \conc y (\tupf r,\tupf s)&\defeq y \\
      \conc {\PCF r'}(\tupf r,\tupf s)&\defeq\PCF {r'}\\
      \conc{\PCF f(\sterm_1,\ldots,\sterm_\ell)}(\tupf r,\tupf s)&\defeq \mathhighlight{\PCF f(\conc{\sterm_1}(\tupf r,\tupf s),\ldots,\conc{\sterm_\ell}(\tupf r,\tupf s))}\\
      \conc{\lambda y\ldotp\sterm}(\tupf r,\tupf s)&\defeq\lambda y\ldotp\conc\sterm(\tupf r,\tupf s) \\
      \conc{\sterm\,\stermb}(\tupf r,\tupf s)&\defeq (\conc\sterm(\tupf r,\tupf s))\,(\conc\stermb(\tupf r,\tupf s))\\
      \conc{\Y\sterm}(\tupf r,\tupf s)&\defeq\Y{(\conc\sterm(\tupf r,\tupf s))}\\
      \conc{\Ifleq\stermc\sterm\stermb}(\tupf r,\tupf s)&\defeq\Ifleq{\conc\stermc(\tupf r,\tupf s)}{\conc\sterm(\tupf r,\tupf s)}{\conc\stermb(\tupf r,\tupf s)}\\
      \conc\Sample (\tupf r,\tupf s)&\defeq\Sample\\
      \conc{\Score\sterm}(\tupf r,\tupf s)&\defeq\Score{\conc\sterm(\tupf r,\tupf s)}
    \end{align*}
    \caption{Definition of $\conc\cdot$ on $\domnp\conc\cdot$}
    \label{fig:definst}

  \end{subfigure}
  \caption{Formal definition of the instantiation and partial evaluation function $\conc\cdot$}
  \label{fig:inst}
\end{figure}

Each symbolic term $\sterm\in\termss m n$ has a corresponding set of regular terms, accounting for all possible values for its sampling variables $\alpha_1, \dots, \alpha_n$ % . Suppose $\sterm$ has $m$
and its (other) free variables $x_1, \dots, x_m$. For $\tupf r\in\Real^m$ and \changed[lo]{$\trace\in \traces_n$}, we call \defn{partially evaluated instantiation} of $\sterm$ the term $\conc\sterm(\tupf r,\tupf s)$ obtained from $\sterm[\tupf{\PCF r}/\tupf x,\tupf{\PCF s}/\tupf\alpha]$
by recursively ``evaluating'' subterms of the form $\delay f(\PCF{r_1},\ldots,\PCF{r_\ell})$ to $\PCF{f(r_1,\ldots,r_\ell)}$, provided $(r_1,\ldots,r_\ell)\in\dom f$. In this operation, subterms of the form $\PCF f(\PCF{r_1},\ldots,\PCF{r_\ell})$ are left unchanged, and so are any other redexes. \changed[dw]{$\conc\sterm$ can be viewed as a partial function $\conc\sterm\from\Real^m\times\traces_n\pto\terms$ and a formal definition is presented in \cref{fig:definst}. (To be completely rigorous, we define for \emph{fixed} $m$ and $n$, partial functions ${\conc\sterm}_{m,n}\from\Real^m\times\traces_n\pto\terms$ for symbolic terms $\sterm$ whose distinguished variables are amongst $x_1,\ldots,x_m$ and $\alpha_1,\ldots,\alpha_n$. $\sterm$ may contain other variables $y, z, \dots$ of any type. Since $m$ and $n$ are usually clear from the context, we omit them.)
Observe that for $\sterm\in\termss m n$ and $\args\in\domnp\conc\sterm$, $\conc\sterm\args$ is a closed term.
}

% This induces, for every symbolic term $\sterm$ with \changed[dw]{\emph{distinguished}\footnote{$\sterm$ may contain other variables ($y, z, \dots$).} variables amongst $x_1,\ldots,x_m$ and $\alpha_1,\ldots,\alpha_n$}, a partial function \changed[dw]{$\conc\sterm\from\Real^m\times\traces_n\pto\terms$}, whose formal definition is presented in \cref{fig:definst}.
% \changed[hp]{Note that this definition is slightly ambiguous, as the domain of $\conc \sterm$ depends on $n$ and $m$. In practice this does not cause any problems: $\conc \sterm$ has a canonical extension to a function $\Real^{m'} \times \traces_{n'} \pto \terms$ for any $m' > m$ and $n' > n$, which simply ignores the additional components. We implicitly make use of this extension in various places in \cref{fig:definst}.}
% we first specify the domain of definition, $\dom{\conc\sterm}$, and then the image of $\args \in \dom{\conc\sterm}$, $\conc\sterm \args$.

% \dw{Think about whether nicer to introduce semantic/syntactic versions.}
\begin{example}
  Consider $\sterm\defeqs (\lambda\sco\ldotp\alpha_1\,\delay\cdot\,\PCF 3)\,(\Score{\PCF{\pdf_{\Normal{1.1}{0.1}}}(\alpha_2)})$. Then, for $\tupf{r} = \emptytrace$ and $\tupf{s} = \traceseq{0.2,0.9,0.7}$, we have
$
   \conc\sterm \args =(\lambda\sco\ldotp\PCF {0.6})\,(\Score{\PCF{\pdf_{\Normal{1.1}{0.1}}}(\PCF{0.9})}).
$
\end{example}
More generally, observe that if $\Gamma\vdash\sterm\from\typea$ and $\args\in\domnp\conc\sterm$ then $\Gamma\vdash\conc\sterm\args\from\typea$.
In order to evaluate conditionals $\Ifleq\stermc\sterm\stermb$ we need to reduce $\stermc$ to a real constant, i.e., we need to have $\conc\stermc(\tupf r,\tupf s) = \PCF r$ for some $r \in \reals$. This is the case whenever $\stermc$ is a symbolic value of type $\PCFReal$, since these are built only out of delayed operations, real constants and distinguished variables $x_i$ or $\alpha_j$. Indeed we can show the following:

\changed[dw]{
  \begin{restatable}{lemma}{instval}
    \label{lem:instval}
    Let $\args\in\domnp\conc\sterm$.
    Then $\sterm$ is a symbolic value iff $\conc\sterm\args$ is a value.
  \end{restatable}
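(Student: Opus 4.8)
The plan is to prove the two implications by a case analysis on the shape of $\sterm$, reading off the behaviour of $\conc\cdot$ from its clause-by-clause definition in \cref{fig:definst}. Only one subcase needs an auxiliary (routine) induction.

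\emph{The direction ``$\conc\sterm\args$ a value $\Rightarrow$ $\sterm$ a symbolic value'', by contraposition.} Suppose $\sterm$ is \emph{not} a symbolic value. By the grammar of \cref{fig:sterm} its head constructor is then one of: a non-distinguished variable $y$; a primitive application $\PCF f(\sterm_1,\dots,\sterm_\ell)$; an application $\sterm_1\,\sterm_2$; a fixpoint $\Y{\sterm_1}$; a conditional $\Ifleq{\stermc}{\sterm_1}{\sterm_2}$; $\Sample$; or $\Score{\sterm_1}$. In each of these seven cases the corresponding clause of \cref{fig:definst} shows $\conc\sterm\args$ has the \emph{same} head constructor (e.g.\ $\conc{y}\args = y$, $\conc{\sterm_1\,\sterm_2}\args = (\conc{\sterm_1}\args)\,(\conc{\sterm_2}\args)$, $\conc{\Sample}\args = \Sample$, and similarly for the rest). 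None of these is of the form $\PCF r$ or $\lambda y\ldotp M$, so $\conc\sterm\args$ is not a value. Here membership $\args\in\domnp\conc\sterm$ is used only to ensure $\conc\sterm\args$ is defined; its head constructor is then determined purely syntactically.

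\emph{The direction ``$\sterm$ a symbolic value $\Rightarrow$ $\conc\sterm\args$ a value''.} By \cref{fig:sterm}, $\sterm$ has one of five forms. If $\sterm$ is $\PCF r$, $x_i$, or $\alpha_j$, then $\conc\sterm\args$ is $\PCF r$, $\PCF{r_i}$, or $\PCF{s_j}$ (\cref{fig:definst}), each a value. If $\sterm = \lambda y\ldotp\stermb$, then $\domnp\conc\sterm = \domnp\conc\stermb$ (\cref{fig:dominst}), so $\conc\stermb\args$ is a well-defined term and $\conc\sterm\args = \lambda y\ldotp\conc\stermb\args$ is a value; note this case uses nothing about $\stermb$, since every $\lambda$-abstraction is a value. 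The remaining case is $\sterm = \delay f(\sval_1,\dots,\sval_\ell)$: the very definition of $\domnp\conc{\delay f(\sval_1,\dots,\sval_\ell)}$ in \cref{fig:dominst} records that each $\conc{\sval_i}\args = \PCF{r'_i}$ is a numeral and that $(r'_1,\dots,r'_\ell)\in\dom f$; hence by \cref{fig:definst} we get $\conc\sterm\args = \PCF{f(r'_1,\dots,r'_\ell)}$, a numeral and therefore a value. The fact that a symbolic value of type $\PCFReal$ always instantiates to a numeral (wherever defined) is itself an immediate sub-induction over the five symbolic-value forms, the $\lambda$-form being excluded by its arrow type.

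\emph{Expected difficulty.} There is essentially no obstacle here — the statement is bookkeeping about how $\conc\cdot$ commutes with head constructors. The only point requiring a little care is the delayed-primitive case, where one must know the arguments $\sval_i$ partially evaluate to genuine numerals so that $f$ can be applied; this is exactly what well-typedness of $\delay f(\sval_1,\dots,\sval_\ell)$ together with $\args\in\domnp\conc\sterm$ guarantees, so the subtlety is already absorbed into the definitions in \cref{fig:inst}.
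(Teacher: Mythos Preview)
Your proof is correct and follows essentially the same approach as the paper's proof sketch: a case analysis on the syntactic shape of $\sterm$, reading off how $\conc\cdot$ preserves head constructors. The paper argues the converse direction directly (case-split on whether $\conc\sterm\args$ is an abstraction or a numeral) whereas you take the contrapositive, but this is a purely cosmetic difference; your observation that the delayed-primitive case is already absorbed into the definition of $\domnp\conc{\delay f(\ldots)}$ is exactly right.
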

}

% \begin{lemma}
%   \label{lem:instval}
%   \begin{enumerate}
%   \item If $\sval$ is a symbolic value and $\args\in\domnp\conc\sval$ then $\conc\sval\args$ is a value.
%   \item\label{lem:instval2} If $\conc\sterm\args$ is a value then $\sterm$ is a symbolic value.
%   \end{enumerate}
%   \end{lemma}

%   Thus, for a symbolic value $\sval$ of type $\PCFReal$ and $(\tupf r,\tupf s)\in\domnp\conc\sval$, $\conc\sval(\tupf r,\tupf s)$ is a real constant $\PCF{r'}$ and we can define $\seva\sval(\tupf r,\tupf s)\defeq r'$.

For symbolic values $\sval \from \PCFReal$ and $(\tupf r,\tupf s)\in\domnp\conc\sval$ we employ the notation $\seva{\sval}\args := r'$ provided that $\conc{\sval}\args = \PCF{r'}$.

A simple induction on symbolic terms and values yields the following property, which is crucial for the proof of our main result (\cref{thm:mainres}):
\begin{restatable}{lemma}{technicalprimitives}
  \label{lem:seva}
  Suppose the set $\pop$ of primitives satisfies \changed[dw]{\Cref{ass:comp} of \cref{ass:pop}}.
  % Let $\sval$ be a symbolic value of type $\PCFReal$ and $\sterm$ be a symbolic term. If $\pop$ is closed under composition then
  \begin{enumerate}[noitemsep]
  \item For each symbolic value $\sval$ of type $\PCFReal$, by identifying $\domnp\seva\sval$ with a subset of $\Real^{m+n}$, we have $\seva\sval\in\pop$. %\lo{This is a little odd since the domain of $\seva\sval$ is, prima facie, a subset of $\Real^m \times \traces$.}
  \item \changed[dw]{If $\pop$ also satisfies \cref{ass:diff} of \cref{ass:pop} then} for each symbolic term $\sterm$, $\conc\sterm\from\Real^m\times\traces_n\pto\terms$ is differentiable in the interior of its domain.
  %if the primitive functions from $\pop$ are differentiable in the interior of their domain \changed[dw]{(i.e.\ \cref{ass:diff})}, so is $\conc\sterm$ for symbolic terms $\sterm$.
  \end{enumerate}
\end{restatable}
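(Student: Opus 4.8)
The plan is to prove part~(1) by structural induction on symbolic values of type $\PCFReal$, and then to derive part~(2) from part~(1) together with the observation that the partial evaluation $\conc\cdot$ never changes the \emph{shape} of a term, only the real constants sitting in it.

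For part~(1), note first that a symbolic value of type $\PCFReal$ must have one of the forms $\PCF r$, $x_i$, $\alpha_j$, or $\delay f(\sval_1,\ldots,\sval_\ell)$ with each $\sval_i$ again a symbolic value of type $\PCFReal$ (a $\lambda$-abstraction has arrow type). Under the identification $\Real^m\times\traces_n\subseteq\Real^{m+n}$, the functions $\seva{\PCF r}$, $\seva{x_i}$, $\seva{\alpha_j}$ are, respectively, a constant, the $i$-th projection, and the $(m{+}j)$-th projection, all of which lie in $\pop$ by hypothesis. In the remaining case the induction hypothesis gives $\seva{\sval_i}\in\pop$ for every $i$, and inspecting \cref{fig:inst} shows that $\seva{\delay f(\sval_1,\ldots,\sval_\ell)}$ is exactly the partial function $f\circ\langle\seva{\sval_1},\ldots,\seva{\sval_\ell}\rangle$ --- the domains coincide by construction --- which belongs to $\pop$ by \cref{ass:comp} of \cref{ass:pop}. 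The one thing to check carefully here is the domain bookkeeping: against \cref{fig:dominst} one verifies that $\domnp\seva{\delay f(\sval_1,\ldots,\sval_\ell)}$ is precisely $\{\args\in\bigcap_i\domnp\seva{\sval_i}\mid(\seva{\sval_1}\args,\ldots,\seva{\sval_\ell}\args)\in\dom f\}$, the domain of that composite.

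For part~(2), the key structural fact, to be proved by a routine induction on $\sterm$, is: there is a fixed skeleton $S\in\skterms_k$ (with $k$ determined by $\sterm$) such that for every $\args\in\domnp\conc\sterm$ the term $\conc\sterm\args$ lies in the connected component $\{S\}\times\Real^k$ of $\terms$, with its $k$ numeral slots filled left-to-right by $(\seva{\sval_1}\args,\ldots,\seva{\sval_k}\args)$, where $\sval_1,\ldots,\sval_k$ enumerate the maximal symbolic-value subterms of type $\PCFReal$ occurring in $\sterm$ (each of which is a literal numeral, a distinguished variable $x_i$ or $\alpha_j$, or a maximal delayed operation). The point is that, by \cref{lem:instval} (or directly from the grammar of ground-type symbolic values), every such $\sval_j$ instantiates to a single numeral, so $\conc\cdot$ fires exactly on those positions and leaves the rest of the term's structure --- applications, conditionals, recursion, sampling, scoring, $\lambda$-abstractions, and non-delayed primitive applications $\PCF f(\cdots)$ --- untouched. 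Granting this claim, each coordinate map $\args\mapsto\seva{\sval_j}\args$ belongs to $\pop$ by part~(1), hence is differentiable on the interior of its domain by \cref{ass:diff} of \cref{ass:pop}; since $\domnp\conc\sterm\subseteq\domnp\seva{\sval_j}$ (for $\conc\sterm\args$ to be defined, in particular $\seva{\sval_j}\args$ must be) and taking interiors is monotone, it is differentiable on the interior of $\domnp\conc\sterm$. On that open set $\conc\sterm$ takes values in the single component $\{S\}\times\Real^k$, whose canonical chart is the projection to $\Real^k$, so by the definition of differentiability into a manifold $\conc\sterm$ is differentiable there, as required.

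I expect the only real work to be organisational rather than conceptual: stating and proving the ``fixed skeleton'' claim precisely --- in particular, setting up a clean correspondence between the numeral positions of $\conc\sterm\args$ and the maximal ground-type symbolic-value subterms of $\sterm$, and verifying in each clause of \cref{fig:dominst} the domain inclusions needed for the monotonicity-of-interior step. Everything else reduces to a direct appeal to the closure properties bundled in \cref{ass:pop}, and no idea beyond part~(1) is needed for part~(2).
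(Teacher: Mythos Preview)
Your argument for part~(1) coincides with the paper's: both do the same structural induction on ground-type symbolic values and invoke closure under composition for the $\delay f(\ldots)$ case.

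For part~(2) your route is correct but genuinely different from the paper's. The paper proves differentiability of $\conc\sterm$ by a direct structural induction on $\sterm$, treating each term constructor as a differentiable map on the manifold $\terms$: e.g.\ $\conc{\lambda y\ldotp\sterm}$ is factored as the composite of $\conc\sterm$ (differentiable by the inductive hypothesis) with the map $L\mapsto\lambda y\ldotp L$, which on each connected component $\{S\}\times\Real^k$ is a coproduct injection and hence differentiable; applications, conditionals and $\PCF f(\cdots)$ are handled similarly via diagonals. Your approach instead establishes once and for all that $\conc\sterm$ lands in a single component $\{S\}\times\Real^k$ with coordinates $\seva{\sval_1},\ldots,\seva{\sval_k}$ for the maximal ground-type symbolic-value occurrences, and then reads off differentiability from part~(1) applied to each coordinate. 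This buys a cleaner reduction of part~(2) to part~(1) and avoids reasoning about term-constructor maps on $\terms$; the price is the ``fixed skeleton'' lemma, which is routine but needs to be stated carefully (in particular tracking occurrences under $\lambda$-binders, and that non-delayed $\PCF f(\cdots)$ and bound variables of type $\PCFReal$ are \emph{not} symbolic values and so are left intact). Your domain bookkeeping---that $\domnp\conc\sterm$ is exactly the intersection of the $\domnp\seva{\sval_j}$, hence contained in each---is correct and matches the recursive definition in \cref{fig:dominst}. Either approach is fine; yours makes the semantic content (all the ``real'' data lives in the $\sval_j$) more explicit, while the paper's is more syntax-directed.
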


\iffalse
\lo{The following are natural FAQs.}

\cm{
  Readers might be confused by the definition of symbolic values,
  if we introduce it here.
  Since it seems to be capturing $\stermc$ which should be first-order,
  but it can also be $\lambda y.\sterm$.
  Maybe we should just introduce it alongside the evaluation context and redex.

  Also,
  $\seva{\lambda y.\sterm} : U \to \closedterms$
  cannot be in $\pop$.
  Do you mean all symbolic values of ground type is in $\pop$?
}

\dw{Thanks. Does this clarify your comment?}

\cm{What is the difference between $\conc\sval$ and $\seva\sval$?}
\dw{$\seva\sval(\tupf r,\tupf s)$ is a real number and only defined for symbolic values, whereas $\conc\sterm(\tupf r,\tupf s)$ is a term and defined for all symbolic terms $\sterm$. It holds that $\conc\sval(\tupf r,\tupf s)=\PCF{\seva\sval(\tupf r,\tupf s)}$, if this is defined.}

\cm{
  Thanks.
  Could you define $\seva\sval$ for all $\sval$?
  What would $\seva {\lambda y.\sterm}$ be?
}
\dw{There is no reason to define it because $\conc{\lambda y\ldotp\sterm}(\tupf r,\tupf s)=\lambda y\ldotp\conc\sterm(\tupf r,\tupf s)$ does the job. It's only for symbolic values of type $\PCFReal$ convenient to have a function which strips off the underline of real constants.}
\fi

%\lo{@Carol: Any reason?}
\iffalse
\lo{For predicate $\phi$, we define the Iverson bracket
\[
[\phi] :=
\begin{cases}
1 & \hbox{if $\phi$ holds}\\
0 & \hbox{o/w}
\end{cases}
\]}
\fi

\subsection{Symbolic Operational Semantics}
\label{sec:sred}
% \cm{
%   Maybe it is worth saying that
%   the $U \subseteq \Real^m \times (0,1)^n$ is not ``tight'' with respect to a symbolic term $\sterm$,
%   in the sense that $\sterm$ might have less than $m$ free variables.
% }

% \lo{The operational semantics tells us about both weight and value -- witness items 1 and 2 below. Hence I think we should remove the adjective weight from ``Operational symbolic weight semantics''.}

We aim to develop a symbolic operational semantics that provides a sound and complete abstraction of the (concrete) operational trace semantics.
%It will operate \defn{symbolic configurations}
The symbolic semantics is presented as a rewrite system of \defn{symbolic configurations}, which are defined to be triples of the form $\sconfig\sterm\sweight U$, where for some $m$ and $n$, $\sterm\in\termss m n$, $U\subseteq\domnp\conc\sterm\subseteq\Real^m\times\traces_n$ is measurable, and $\sweight\from\Real^m\times\traces\pto\pReal$ with $\dom \sweight=U$.
%Thus, for $M\in\termsr m$ we would like to have (using $\const$ for the constant function $\bblambda(\tupf r,\tupf s)\ldotp 1$):
\changed[lo]{Thus we aim to prove the following result (writing $\const$ for the constant function $\bblambda(\tupf r,\tupf s)\ldotp 1$):}

% We aim to develop a symbolic operational semantics in which symbolic reductions precisely represent concrete ones (cf.~\cref{lem:lifting,lem:symcoin}).
% In other words, writing $\sred$ for the corresponding symbolic small-step reduction, we want the following to ``commute'':
% \dw{Would it be (more) helpful to say that this means that $\conc\cdot$ establishes a bisimulation? More precisely  $\{(\sconfig\sterm\sweight U,\config{\conc\sterm(\tupf r,\tupf s)}{\sweight(\tupf r,\tupf s)}{\tupf s})\mid(\tupf r,\tupf s)\in U\}$ is a bisimulation, which sumarises \cref{lem:lifting,lem:symcoin} and is another view of the following.}

%\lo{TODO: Rework the following into a theorem, for ease of reference.}
\changed[lo]{
\begin{theorem}
\label{thm:soundness and completeness}
Let $M$ be a term with free variables amongst $x_1,\ldots,x_m$.
\begin{enumerate}
  \item\label{it:sound} \textsf{(Soundness)}. If $\sconfig M\const {\Real^m}\sred^*\sconfig\sval\sweight U$
  then for all $(\tupf r,\tupf s)\in U$ it holds
    $\weightfn_M(\tupf r, \tupf s)=\sweight(\tupf r,\tupf s)$ and
    $\valuefn_M(\tupf r,\tupf s)=\conc\sval(\tupf r,\tupf s)$.
  \item\label{it:compl} \textsf{(Completeness)}. If $\tupf r\in\Real^m$ and $\config {M[\tupf {\PCF r}/\tupf x]} 1\emptytrace\red^*\config V w {\tupf s}$ then there exists % $\stermb$, $\sweight$ and $U$ such that
    $\sconfig M \const{\Real^m}\sred^*\sconfig\sval\sweight U$ such that $(\tupf r,\tupf s)\in U$. % and $\conc\sval(\tupf r,\tupf s)=V$. \dw{need this conjunct}
  \end{enumerate}
\end{theorem}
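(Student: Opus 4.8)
The plan is to derive both parts from a single \emph{simulation invariant} relating each symbolic configuration, together with a point of its constraint set, to a concrete reduction. Fix a term $M$ with free variables amongst $x_1,\dots,x_m$, all of type $\PCFReal$. Since $M$ is in particular a symbolic term with no delayed operations and no sampling variables, $\conc M(\tupf r)=M[\tupf{\PCF r}/\tupf x]$ for every $\tupf r$ and $\domnp\conc M=\Real^m$. The invariant I would prove is: \emph{(Simulation)} if $\sconfig M\const{\Real^m}\sred^*\sconfig\sterm\sweight U$ with $U\subseteq\Real^m\times\traces_n$ and $(\tupf r,\tupf s)\in U$, then $\config{M[\tupf{\PCF r}/\tupf x]}1\emptytrace\red^*\config{\conc\sterm(\tupf r,\tupf s)}{\sweight(\tupf r,\tupf s)}{\tupf s}$, where the trace on the right is exactly the $n$-tuple $\tupf s$ (so the sampling variables $\alpha_1,\dots,\alpha_n$ of $\sterm$ correspond positionally to the entries of the consumed trace). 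Soundness is an immediate corollary, and completeness is proved by a second induction against a slightly stronger version of the same invariant.

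Simulation is proved by induction on the length of the symbolic reduction; the base case is trivial because $\conc M(\tupf r)=M[\tupf{\PCF r}/\tupf x]$ and $\const\equiv 1$. For the inductive step I would case-split on the last symbolic rule $\sconfig\sterm\sweight U\sred\sconfig\stermb{\sweight'}{U'}$ and show that at every $(\tupf r,\tupf s)\in U'$ the concrete configuration supplied by the induction hypothesis (applied at $(\tupf r,\tupf s)$, or at its projection to $\Real^m\times\traces_n$ when the rule introduced a fresh sampling variable) performs exactly one concrete step to $\config{\conc\stermb(\tupf r,\tupf s)}{\sweight'(\tupf r,\tupf s)}{\tupf s}$. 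This uses two routine syntactic facts about $\conc\cdot$, both by induction: it commutes with substitution, $\conc{\sterm[\sval/y]}(\tupf r,\tupf s)=(\conc\sterm(\tupf r,\tupf s))[\conc\sval(\tupf r,\tupf s)/y]$, and with evaluation contexts. Then a $\beta$- or $\Y$-contraction maps to the identically shaped concrete contraction with $U'=U$; a ``delay'' step rewriting $\PCF f(\sval_1,\ldots,\sval_\ell)$ to $\delay f(\sval_1,\ldots,\sval_\ell)$ and restricting $U$ to $\{(\tupf r',\tupf s')\in U\mid(\seva{\sval_1}(\tupf r',\tupf s'),\ldots,\seva{\sval_\ell}(\tupf r',\tupf s'))\in\dom f\}$ maps to the single step $\PCF f(\PCF{r'_1},\ldots,\PCF{r'_\ell})\red\PCF{f(r'_1,\ldots,r'_\ell)}$, using the clauses $\conc{\PCF f(\sval_1,\ldots,\sval_\ell)}(\tupf r,\tupf s)=\PCF f(\PCF{r'_1},\ldots,\PCF{r'_\ell})$ and $\conc{\delay f(\sval_1,\ldots,\sval_\ell)}(\tupf r,\tupf s)=\PCF{f(r'_1,\ldots,r'_\ell)}$ of \cref{fig:inst}; an if-step maps to one concrete if-step on the branch fixed by the sign of $\seva\sval(\tupf r,\tupf s)$, using $\conc\sval(\tupf r,\tupf s)=\PCF{\seva\sval(\tupf r,\tupf s)}$, which holds by \cref{lem:instval} since $\sval$ is a symbolic value; a $\Sample$-step $\scon[\Sample]\mapsto\scon[\alpha_{n+1}]$ with $U'=U\times(0,1)$ and $\sweight'$ ignoring the new coordinate maps to the concrete step appending $s_{n+1}$; and a $\Score$-step maps to the concrete $\Score$-step, multiplying the weight by $\seva\sval(\tupf r,\tupf s)\ge 0$ on the restricted set. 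Each intermediate $U'$ is measurable: the maps $\seva{\sval_i}$ lie in $\pop$ (first part of \cref{lem:seva}), hence are measurable, and $\dom f\in\Sigma_{\Real^\ell}$ because $f$ is a measurable partial function. \textbf{Soundness} then follows: given $\sconfig M\const{\Real^m}\sred^*\sconfig\sval\sweight U$ and $(\tupf r,\tupf s)\in U$, Simulation yields $\config{M[\tupf{\PCF r}/\tupf x]}1\emptytrace\red^*\config{\conc\sval(\tupf r,\tupf s)}{\sweight(\tupf r,\tupf s)}{\tupf s}$, and $\conc\sval(\tupf r,\tupf s)$ is a value by \cref{lem:instval}, so by definition $\valuefn_M(\tupf r,\tupf s)=\conc\sval(\tupf r,\tupf s)$ and $\weightfn_M(\tupf r,\tupf s)=\sweight(\tupf r,\tupf s)$ (well-definedness of $\valuefn_M,\weightfn_M$ being clear since concrete reduction is deterministic once the trace is fixed).

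For \textbf{completeness} I would run the induction the other way, against a version of the invariant that lets the trace grow: \emph{if $\sconfig M\const{\Real^m}\sred^*\sconfig\sterm\sweight U$, $(\tupf r,\tupf u)\in U$, and $\config{\conc\sterm(\tupf r,\tupf u)}{\sweight(\tupf r,\tupf u)}{\tupf u}\red^*\config V w{\tupf u\concat\tupf t}$, then $\sconfig\sterm\sweight U\sred^*\sconfig\sval{\sweight'}{U'}$ with $(\tupf r,\tupf u\concat\tupf t)\in U'$}; instantiating $\sterm=M$, $\sweight=\const$, $U=\Real^m$, $\tupf u=\emptytrace$ gives the statement. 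This is proved by induction on the length of the concrete reduction. In the base case $\conc\sterm(\tupf r,\tupf u)$ is already a value, so $\sterm$ is a symbolic value by \cref{lem:instval} and there is nothing to do. Otherwise, a \emph{symbolic decomposition lemma} --- the symbolic analogue of the unique decomposition of concrete non-values as $\evalcon[\redexa]$, proved by induction on $\sterm$ with \cref{lem:instval} identifying which subterms count as values --- lets us write $\sterm=\scon[\sredex]$, which instantiates (via context-commutation) to the concrete decomposition $\conc\sterm(\tupf r,\tupf u)=\evalcon[\redexa]$. I would case-split on $\sredex$ and select the symbolic rule whose side condition $(\tupf r,\tupf u)$ satisfies: for an if-redex the branch with $\seva\sval(\tupf r,\tupf u)\le 0$ or $>0$; for a primitive redex $\PCF f(\sval_1,\ldots,\sval_\ell)$ the case $(\seva{\sval_1}(\tupf r,\tupf u),\ldots)\in\dom f$, which must hold since the concrete reduction reaches a value rather than $\Fail$; for a $\Score$-redex the case $\seva\sval(\tupf r,\tupf u)\ge 0$, likewise. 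In each case I would check that $(\tupf r,\tupf u)$ --- or $(\tupf r,\tupf u\concat[t_1])$, where $\tupf t=[t_1]\concat\tupf t'$, for a $\Sample$-redex --- lies in the resulting $U'$, that the instantiation of the new symbolic term equals the new concrete term and the weights agree, and then apply the induction hypothesis to the one-step-shorter concrete reduction; measurability of $U'$ is argued exactly as before.

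The main obstacle, I expect, is the bookkeeping around delayed operations and the constraint sets rather than anything conceptually deep: proving that $\conc\cdot$ commutes with substitution and with evaluation contexts in the presence of $\delay f(\cdots)$ subterms; getting the symbolic decomposition lemma exactly right, so that every symbolic non-value factors uniquely through a symbolic redex and no redex is ``hidden'' inside a delayed operation; and ensuring that the domain restriction attached to the delay, if, and $\Score$ rules precisely mirrors the concrete transitions to $\Fail$, so that no trace or weight is lost or double-counted and the positional correspondence between $\alpha_1,\dots,\alpha_n$ and the trace coordinates is preserved by every rule.
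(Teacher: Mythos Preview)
Your proposal is correct and follows essentially the same route as the paper. The paper factors out your inductive step as a separate one-step Correspondence result (\cref{prop:symconc}): each symbolic step is mirrored by a concrete step at every point of $U'$, and conversely each concrete step can be matched by some symbolic step containing the point; your ``Simulation'' invariant and its converse are obtained by iterating the two halves of this proposition, with \cref{lem:instval} used exactly as you do to pass between symbolic values and concrete values at the endpoints. Your ``symbolic decomposition lemma'' is the paper's Subject Construction (\cref{lem:basic0}), and the commutation of $\conc\cdot$ with substitution and with evaluation contexts are the paper's \cref{eq:subst} and \cref{eq:substcon}.
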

}

\changed[lo]{As formalised by \cref{thm:soundness and completeness}, the key intuition behind symbolic configurations $\sconfig\sterm\sweight U$ (that are reachable from a given $\sconfig M\const {\Real^m}$) is that, whenever $\sterm$ is a symbolic value:
\begin{itemize}
\item $\sterm$ gives a correct \emph{local} view of $\valuefn_M$ (restricted to $U$), and
\item $\sweight$ gives a correct \emph{local} view of $\weightfn_M$ (restricted to $U$);
\end{itemize}
moreover, the respective third components $U$ (of the symbolic configurations $\sconfig\sterm\sweight U$) cover $\trtermM$.}

\dw{This isn't quite right. A way to fix it is to restrict this to symbolic values $\sval$.}

To establish \cref{thm:soundness and completeness}, we introduce \defn{symbolic reduction contexts} and \defn{symbolic redexes}.
These are presented in \cref{fig:sconred} and extend the usual notions \changed[dw]{(replacing real constants with arbitrary symbolic values of type $\PCFReal$)}.
  %In particular, $\PCF f(\sval_1,\ldots,\sval_\ell)$ and $\Ifleq\sval\sterm\stermb$ are symbolic redexes for \emph{arbitrary}  symbolic values and terms.

Using \cref{lem:instval} we obtain:
\begin{lemma}
  \label{lem:instred} If $\sredex$ is a symbolic redex and $\args\in\domnp\conc\sredex$ then $\conc\sredex\args$ is a redex.
\end{lemma}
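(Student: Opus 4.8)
The plan is a routine case analysis on the grammar of symbolic redexes (\cref{fig:sconred}), using the explicit clauses defining the instantiation function $\conc{\cdot}$ (\cref{fig:definst}) together with \cref{lem:instval}. Fix a symbolic redex $\sredex$ and $\args\in\domnp\conc\sredex$. By the clauses of \cref{fig:dominst}, $\domnp\conc\sredex$ is contained in $\domnp\conc{\stermb}$ for every immediate subterm $\stermb$ of $\sredex$, so all the instantiations appearing below are defined, and each is a closed term.

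First I would record the elementary fact that for every symbolic value $\sval$ of type $\PCFReal$ and every $\args\in\domnp\conc\sval$, the term $\conc\sval\args$ is a real constant $\PCF{r'}$ (so that $\seva\sval\args = r'$ is well defined): a symbolic value of ground type is one of $\PCF r$, $x_i$, $\alpha_j$, $\delay f(\sval_1,\ldots,\sval_\ell)$, and in each case the relevant clause of \cref{fig:definst} yields a real constant, by an immediate induction. This is a special case of \cref{lem:seva}.1.

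The case analysis then runs as follows. If $\sredex\equiv(\lambda y\ldotp\sterm)\,\sval$, then $\conc\sredex\args\equiv(\lambda y\ldotp\conc\sterm\args)\,(\conc\sval\args)$; the left factor is a $\lambda$-abstraction, hence a value, and $\conc\sval\args$ is a value by \cref{lem:instval}, so $\conc\sredex\args$ is a redex of the form $(\lambda y\ldotp\terma)\,\valuea$. If $\sredex\equiv\PCF f(\sval_1,\ldots,\sval_\ell)$, then (by well-typedness, a standing assumption on symbolic terms) each $\sval_i$ has type $\PCFReal$, so by the fact above $\conc{\sval_i}\args\equiv\PCF{r'_i}$ and $\conc\sredex\args\equiv\PCF f(\PCF{r'_1},\ldots,\PCF{r'_\ell})$. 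If $\sredex\equiv\Y{(\lambda y\ldotp\sterm)}$, then $\conc\sredex\args\equiv\Y{(\lambda y\ldotp\conc\sterm\args)}$. If $\sredex\equiv\Ifleq{\sval}{\sterm}{\stermb}$, then $\sval$ has type $\PCFReal$, so $\conc\sval\args\equiv\PCF{r'}$ and $\conc\sredex\args\equiv\Ifleq{\PCF{r'}}{\conc\sterm\args}{\conc\stermb\args}$. If $\sredex\equiv\Sample$, then $\conc\sredex\args\equiv\Sample$. Finally, if $\sredex\equiv\Score{\sval}$, then again $\conc\sval\args\equiv\PCF{r'}$, so $\conc\sredex\args\equiv\Score{\PCF{r'}}$. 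In every case $\conc\sredex\args$ matches the grammar of (concrete) redexes, which completes the proof.

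There is no genuine obstacle here: the only points that are not pure bookkeeping against the definitions are the appeal to \cref{lem:instval} in the $\beta$-redex case and the appeal to the ground-value fact in the $\PCF f$, $\mathsf{if}$, and $\mathsf{score}$ cases, both of which are already available. (One could note in passing, although it is not needed for this lemma, that the same analysis gives the converse: if $\conc\sredex\args$ is a redex then $\sredex$ is a symbolic redex, so instantiation sets up an exact correspondence.)
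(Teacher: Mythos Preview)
Your proof is correct and is exactly the approach the paper intends: the paper merely states ``Using \cref{lem:instval} we obtain'' and gives no further argument, so your case analysis on the grammar of symbolic redexes, together with the observation that symbolic values of type $\PCFReal$ instantiate to real constants, is precisely the routine verification the paper leaves implicit.
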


%Analogously to \cref{lemma:construction lemma},
The following can be proven by a straightforward induction\ifproceedings\else (see \cref{app:sred})\fi:
\begin{restatable}[Subject Construction]{lemma}{ssubcon}
  Let $\sterm$ be a symbolic term.
  \label{lem:basic0}
  \begin{enumerate}
  \item If $\sterm$ is a symbolic value then for all symbolic contexts $\scon$ and symbolic redexes $ \sredex$, $\sterm\not\equiv\scon[\sredex]$.
  \item If $\sterm\equiv\scon_1[\sredex_1]\equiv\scon_2[\sredex_2]$ then $\scon_1\equiv\scon_2$ and $\sredex_1\equiv\sredex_2$.
  \item If $\sterm$ is not a symbolic value and $\domnp\conc\sterm\neq\emptyset$ then there exist $\scon$ and $\sredex$ such that $\sterm\equiv\scon[\sredex]$.
  \end{enumerate}
\end{restatable}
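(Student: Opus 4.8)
The plan is to prove all three statements simultaneously by structural induction on the symbolic term $\sterm$. Throughout, $\sterm$ is taken to be well-typed and, as holds for every symbolic term occurring in the symbolic semantics, to have free variables only among the distinguished variables $x_1,\dots,x_m,\alpha_1,\dots,\alpha_n$. This restriction is what makes part 3 true: a bare bound variable $y$ is not a symbolic value and has $\domnp\conc{y}=\Real^m\times\traces_n\neq\emptyset$, yet is not of the form $\scon[\sredex]$; under the restriction such a term never arises as a (sub-)term we recurse on, since every sub-term into which the induction descends (an operand of an application, an argument of a primitive application $\PCF f(\cdot)$, a scrutinee of a conditional $\Ifleq\cdot\cdot\cdot$, the argument of $\Score\cdot$, or the argument of a fixpoint $\Y(\cdot)$) inherits this property, and we never descend under a $\lambda$.

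First I would settle part 1. The outermost symbol of any $\scon[\sredex]$ is never one of the symbols $\PCF r$, $x_i$, $\alpha_j$, $\delay f$, $\lambda y.(\cdot)$ that form symbolic values: if $\scon\equiv[\,]$ then $\scon[\sredex]=\sredex$ and one checks the six redex forms directly; otherwise the outermost symbol of $\scon[\sredex]$ is that of $\scon$, which, by \cref{fig:sconred}, is one of application, $\PCF f(\cdot)$, $\Y(\cdot)$, $\Ifleq\cdot\cdot\cdot$, or $\Score\cdot$. In particular the boxed operator $\delay f$ never heads a context or a redex. Hence no $\scon[\sredex]$ is a symbolic value.

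For parts 2 and 3 I would case on the outermost constructor of $\sterm$. If $\sterm$ is $\PCF r$, $x_i$, $\alpha_j$, $\lambda y.\stermb$, or $\delay f(\sval_1,\dots,\sval_\ell)$, then it is a symbolic value, so part 3 is vacuous, and so is part 2 by part 1; and $\sterm\equiv\Sample$ has exactly the decomposition $[\,][\Sample]$. The remaining constructors all follow one pattern; take $\sterm\equiv\stermc\,\stermb$ as representative. By part 1, whether $\stermc$ is a symbolic value, and --- when it is --- whether $\stermb$ is a symbolic value, rigidly determines which of the three shapes $[\,]$, $\scon'\,\stermb$, $(\lambda y.\stermc')\,\scon'$ a context decomposing $\sterm$ can take, which already yields uniqueness of the shape. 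If $\stermc$ is not a value, a decomposition must be $(\scon'\,\stermb)[\sredex]$ with $\stermc\equiv\scon'[\sredex]$; since $\domnp\conc{\stermc}\supseteq\domnp\conc{\sterm}\neq\emptyset$, the induction hypothesis for $\stermc$ gives (part 3) existence and (part 2) uniqueness of such a decomposition. If $\stermc$ is a value, then, being well-typed and of arrow type, $\stermc\equiv\lambda y.\stermc'$: if $\stermb$ is also a value then $\sterm$ is itself the redex $(\lambda y.\stermc')\,\stermb$, decomposed only by $[\,]$; otherwise a decomposition must be $((\lambda y.\stermc')\,\scon')[\sredex]$ with $\stermb\equiv\scon'[\sredex]$, and one applies the induction hypothesis to $\stermb$ (again $\domnp\conc{\stermb}\supseteq\domnp\conc{\sterm}\neq\emptyset$).

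The cases $\sterm\equiv\PCF f(\sterm_1,\dots,\sterm_\ell)$ (descend into the left-most $\sterm_i$ that is not a symbolic value, or, if none exists, $\sterm$ is itself a redex), $\sterm\equiv\Y\stermb$, $\sterm\equiv\Ifleq\stermc{\sterm_1}{\sterm_2}$, and $\sterm\equiv\Score\stermb$ go through in exactly the same way, the non-emptiness hypothesis propagating to the relevant sub-term because, by \cref{fig:dominst}, $\domnp\conc{\sterm}$ is built from the domains of its sub-terms using only intersections and identities. I expect the one point needing genuine care is the exclusivity step of part 2: verifying, constructor by constructor, that the search for the redex is forced to proceed left-to-right and halts precisely on meeting a symbolic value, so that the admissible context shapes are mutually exclusive. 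That step rests on part 1 together with the typing observation that a symbolic value of arrow type must be a $\lambda$-abstraction; everything else is routine bookkeeping.
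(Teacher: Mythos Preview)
Your proposal is correct and follows essentially the same approach as the paper: a simultaneous structural induction on $\sterm$ establishing all three parts, with the same case analysis and the same use of $\domnp\conc\sterm\subseteq\domnp\conc{\sterm_i}$ to propagate the non-emptiness hypothesis to the relevant sub-term. Your explicit treatment of the bare-variable issue (that a free non-distinguished $y$ would otherwise be a counterexample to part~3, and that the induction never descends under a $\lambda$) is in fact more careful than the paper, which lists $y$ among the base cases and simply records $y\not\equiv\scon[\sredex]$ without further comment.
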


% \changed[dw]{\sout{
% We adapt the notions of reduction contexts and redexes to \defn{symbolic reduction contexts} and \defn{symbolic redexes} as follows:}
% \dw{use same variables}
% \begin{align*}
%   \scon&::=[]\mid\scon\,\sterm\mid (\lambda y\ldotp\sterm)\, \scon\mid\PCF f(\sval_1,\ldots,\sval_{\ell-1},\scon,\sterm_{\ell+1},\ldots,\sterm_n)\mid\Y\scon\mid\Ifleq{\scon}\sterm\stermb\mid \Score{\scon}\\
%   \sredex&::=(\lambda y\ldotp\sterm)\,\sval\mid \PCF f(\sval_1,\ldots,\sval_\ell)\mid\Y{(\lambda y\ldotp\sterm)}\mid\Ifleq\sval\sterm\stermb\mid\Sample\mid\Score\sval
% \end{align*}
% }

\iffalse
\dw{check that $\seva\sval$ is consistently $r'\in\Real$ rather than the term $\PCF{r'}$}
\lo{OK. We should then remove {\tt underline} of the definiens in the preceding line.}
\dw{Yes, thanks.}

\cm{
  Add $\PCF{f}(\sval_1,\ldots,\sval_\ell)$ as a redex?
  And $\delay f(\sval_1,\ldots,\sval_{\ell-1},\scon,M_{\ell+1},\ldots,M_n)$ as a context?
}
 \dw{Thanks you're right $\PCF f(\sval_1,\ldots,\sval_\ell)$ needs to be a redex. However, we don't need /want to reduce under delayed evaluations because they are already all symbolic values if we start off with a non-symbolic term.}
\fi
The partial instantiation function also extends to symbolic contexts $\scon$ in the evident way -- we give the full definition in \ifproceedings\cite{MOPW20}\else\cref{app:sred} (\cref{def:conccon})\fi.

%   \lo{I think it is confusing to use both $\rep$ and $\inst{}$. We should stick to the latter.}
% \dw{I agree that we should choose one. Why do you prefer $\inst{}$ rather than $\rep$?}
% \lo{I can offer no rational explanation, other than familiarity with $\inst{}$.}

Now, we introduce the following rules for \defn{symbolic redex contractions}:
\begin{align*}
  \sconfig{(\lambda y\ldotp \sterm)\,\sval}\sweight U&\sred\sconfig{\sterm[\sval/y]}\sweight U\\
  \sconfig{\PCF{f}(\sval_1,\ldots,\sval_\ell)}\sweight U&\sred\sconfig{\delay f(\sval_1,\ldots,\sval_\ell)}\sweight {\mathhighlight{\domnp\seva{\delay f(\sval_1,\ldots,\sval_\ell)}\cap U}} \\
  \sconfig{\Y (\lambda y\ldotp \sterm)}\sweight U&\sred\sconfig{\lambda z\ldotp \sterm\,[\Y (\lambda y\ldotp \sterm) / y] \, z}\sweight U\\
  \sconfig{\Ifleq\sval{\sterm}{\stermb}}\sweight U&\sred\sconfig{\sterm}\sweight {\mathhighlight{\seva\sval^{-1}(-\infty,0]\cap U}}\\
  \sconfig{\Ifleq\sval{\sterm}{\stermb}}\sweight U&\sred\sconfig{\stermb}\sweight {\mathhighlight{\seva\sval^{-1}(0,\infty)\cap U}}\\
  \sconfig{\Sample}\sweight U&\sred\sconfig{\mathhighlight{\alpha_{n+1}}}{\sweight'}{U'}
  %\hfill
  \tag{$U\subseteq\Real^m\times\traces_n$}\\
  \sconfig{\Score{\sval}}\sweight U&\sred\sconfig{\sval}{\mathhighlight{\seva\sval\cdot\sweight}} {\mathhighlight{\seva\sval^{-1}[0,\infty)\cap U}}
\end{align*}
In the rule for $\Sample$, $U'\defeq\{(\tupf r,\tupf s\concat\traceseq{s'})\mid (\tupf r,\tupf s)\in U\land s'\in (0,1)\}$ and $\sweight'(\tupf r,\tupf s\concat\traceseq{s'})\defeq\sweight(\tupf r,\tupf s)$; in the rule for $\Score\sval$,  $(\seva\sval\cdot\sweight)\args\defeq\seva\sval\args\cdot\sweight\args$.

The rules are designed to closely mirror their concrete counterparts.
Crucially, the rule for $\Sample$ introduces a ``fresh'' sampling variable, and the two rules for conditionals split the last component $U\subseteq\Real^m\times\traces_n$ according to whether $\seva\sval(\tupf r,\tupf s)\leq 0$ or $\seva\sval(\tupf r,\tupf s)>0$.
The ``delay'' contraction (second rule) is introduced for a technical reason:
%In order to show that $\valuefn_M$ is differentiable, we need that for every symbolic term and values $(\tupf r,\tupf s)$ there is a unique concrete term which corresponds to that (\cref{lem:lifting,lem:symcoin}).
ultimately, to enable \cref{it:sound} \textsf{(Soundness)}.
Otherwise it is, for example, unclear whether $\lambda y\ldotp\alpha_1+\PCF 1$ should correspond to $\lambda y\ldotp\PCF{0.5}+\PCF 1$ or $\lambda y\ldotp\PCF{1.5}$ for $s_1=0.5$.
% \end{remark}

% \lo{N.B. Strictly speaking we should write $\tupf s\concat [s']$ rather than $\tupf s\concat s'$.}
%   \dw{That's an advantage of using tuple notation. There $(s')=s'$.}
% \lo{I think if we used tuples $(\cdots)$, we would still need to clarify that $(s')=s'$.}

% \lo{NOTATION. By abuse of notation, we identify traces in $\traces_1$ with elements of $(0, 1)$. Thus we identify $[r]$ with $r$, for $r \in (0, 1)$. }

% \dw{To be precise $\seva\sval^{-1}(-\infty,0]$ needs to be changed in something like $\seva\sval^{-1}(-\infty,\PCF 0]$ because $\seva\sval$ returns terms.}

% \dw{fix rule for $\Y\cdot$}
% \lo{Done}
% \dw{thanks}
% \begin{remark}
% \label{rem:delay}

\dw{N.B.\ restrictions of $\sweight$ in the branching-rules are not made explicit}

Finally we lift this to arbitrary symbolic terms using the obvious rule for symbolic evaluation contexts:
% \lo{@ALL: Here is a pair of candidate symbols $\bbDelta$ and $\bbDelta\mkern-7mu\cdot$ for symbolic redex and contractum.
% I would prefer a larger version of the symbols.}
\[
% \frac{
% {\config \sredex \sweight U} \red {\config {\bot} {\sweight'} {U'}}
% }
% {
% \config {\scon[\sredex]} \sweight U \red {\config {\bot} {\sweight'} {U'}}
% }
% \qquad
\frac{
{\sconfig \sredex \sweight U} \sred {\sconfig {\scontra} {\sweight'} {U'}}
}
{
\sconfig {\scon[\sredex]} \sweight U \sred {\sconfig {\scon[\scontra]} {\sweight'} {U'}}
}%\;(\scontra \not= \bot)
\]

% \dw{N.B.: removed $\bot$} \lo{Agreed -- it is cleaner to avoid $\bot$.}
% \dw{Can use $\mathsf{fail}$ to get slightly stronger main result}

% Symbolic small step reduction
% \begin{align*}
%   \sconfig{\scon[\Sample]}\sweight U&\sred\sconfig{\scon[x_{n+1}]}{\sweight\circ\proj n}{U\times\traces_1}&(U\subseteq\Real^n)\\
%   \sconfig{\scon[\Score{\sval}]}\sweight U&\sred\sconfig{\scon[\sval]}{\sweight\cdot\seva\sval(\cdot)} {U\cap\seva\sval^{-1}[0,\infty)}\\
%   \sconfig{\scon[\Score{\sval}]}\sweight U&\sred\sconfig{\bot}\sweight {U\cap\seva\sval^{-1}(-\infty,0)}\\
%   \sconfig{\scon[\Score{\sval}]}\sweight U&\sred\sconfig{\bot}\sweight {U\setminus\dom(\seva\sval)}\\
%   \sconfig{\scon[\Ifgeq\sval{\sterm}{\stermb}]}\sweight U&\sred\sconfig{\scon[\sterm]}\sweight {U\cap \seva\sval^{-1}[0,\infty)}\\
%   \sconfig{\scon[\Ifgeq\sval{\sterm}{\stermb}]}\sweight U&\sred\sconfig{\scon[\stermb]}\sweight {U\cap \seva\sval^{-1}(-\infty,0)}\\
%   \sconfig{\scon[\Ifgeq\sval{\sterm}{\stermb}]}\sweight U&\sred\sconfig{\bot}\sweight {U\setminus\dom(\seva\sval)}\\
%   \sconfig{\scon[(\lambda x\ldotp \sterm)\,\sval]}\sweight U&\sred\sconfig{\scon[\sterm[\sval/x]]}\sweight U\\
%   \sconfig{\scon[\Y \sterm]}\sweight U&\sred\sconfig{\scon[\sterm\,(\Y \sterm)]}\sweight U
% \end{align*}
% Suppose $n\in\mathbb N$ then $\proj n\from\Real^{n+1}\to\Real^n$ is the evident projection to the first $n$ components.

% \changed[dw]{Note that if $U\subseteq\domnp\conc\sterm$, $U$ is measurable and $\sconfig\sterm\sweight U\sred\sconfig{\sterm'}{\sweight'}{U'}$ then $U'\subseteq\domnp\conc\sterm'$ and $U'$ is measurable.}

\changed[dw]{Note that we do not need rules corresponding to reductions to $\Fail$ because the third component of the symbolic configurations ``filters out'' the pairs $\args$ corresponding to undefined behaviour. In particular, the following holds:}
\begin{restatable}{lemma}{reductionvalidconfigs}
  \label{lem:pres}
  Suppose $\sconfig\sterm\sweight U$ is a symbolic configuration and $\sconfig\sterm\sweight U\sred\sconfig\stermb{\sweight'}{U'}$. Then $\sconfig\stermb{\sweight'}{U'}$ is a symbolic configuration.
\end{restatable}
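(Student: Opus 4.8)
The plan is to argue by induction on the derivation of $\sconfig\sterm\sweight U\sred\sconfig\stermb{\sweight'}{U'}$, with the symbolic redex-contraction rules as base cases and the symbolic-evaluation-context rule as the inductive step. Unfolding the definition, I must establish that for suitable indices $m,n'$: (a) $\stermb\in\termss m{n'}$ and is well typed; (b) $U'\subseteq\domnp\conc\stermb$ (so in particular $U'\subseteq\Real^m\times\traces_{n'}$); (c) $U'$ is measurable; and (d) $\sweight'\from\Real^m\times\traces\pto\pReal$ with $\dom{\sweight'}=U'$. Here $m,n$ are the witnesses that $\sconfig\sterm\sweight U$ is a symbolic configuration; one has $n'=n$ in every rule except the contraction for $\Sample$, where $n'=n+1$ --- the sole rule that enlarges the trace dimension.

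Before the case analysis I would record a few routine auxiliary facts, each by an easy induction on symbolic terms (resp.\ contexts) from the definitions in \cref{fig:inst}: (i) symbolic terms are closed under substituting a symbolic term for a variable, and $\domnp\conc{\sterm[\stermc/y]}\supseteq\domnp\conc\sterm\cap\domnp\conc\stermc$ (this is what the $\beta$- and $\Y$-rules need; for the latter one substitutes the non-value $\Y{(\lambda y\ldotp\sterm)}$); (ii) for a symbolic context $\scon$, $\domnp\conc{\scon[\sterm]}=\domnp\conc\scon\cap\domnp\conc\sterm$, so in particular $\domnp\conc{\scon[\sterm]}\subseteq\domnp\conc\sterm$; (iii) viewing a symbolic term (or context) originally over $x_1,\ldots,x_m,\alpha_1,\ldots,\alpha_n$ inside the larger index $(m,n+1)$, its partial-evaluation domain is the old one ``padded'' by a fresh $(0,1)$-coordinate, i.e.\ $\{(\tupf r,\tupf s\concat\traceseq{s'})\mid(\tupf r,\tupf s)\in\domnp\conc\sterm,\ s'\in(0,1)\}$; and (iv) subject reduction, that $\sred$ preserves the type of the term component (the only mildly delicate point being that in the $\Sample$-rule the redex $\Sample$ and the contractum $\alpha_{n+1}$ share the type $\PCFReal$). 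Crucially I would use \cref{lem:seva}(1): for any symbolic value $\sval\from\PCFReal$ the function $\seva\sval$ lies in $\pop$ and is therefore measurable, so $\domnp\seva\sval=\domnp\conc\sval$ is measurable and $\seva\sval^{-1}(B)$ is measurable for every Borel $B\subseteq\Real$; likewise $\domnp\conc{\delay f(\sval_1,\ldots,\sval_\ell)}$ is measurable, being the preimage of the measurable set $\dom f$ under $\args\mapsto(\seva{\sval_1}\args,\ldots,\seva{\sval_\ell}\args)$ intersected with the (inductively measurable) domains of the $\sval_i$.

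With these in hand, the base cases go through mechanically. Condition (a) is subject reduction together with closure of symbolic terms and values under the formers and substitutions involved (for instance $\delay f(\sval_1,\ldots,\sval_\ell)$ is a symbolic value of type $\PCFReal$ when each $\sval_i\from\PCFReal$). For (b): in the $\beta$- and $\Y$-rules $U'=U$ and fact (i) yields $U\subseteq\domnp\conc\stermb$; in the $\PCF f$-, conditional- and $\Score$-rules $U'$ is $U$ intersected with a measurable set, and a glance at the defining clauses of $\domnp\conc\cdot$ gives $U'\subseteq\domnp\conc\stermb$ (for $\PCF f$, $U'=\domnp\conc{\delay f(\ldots)}\cap U\subseteq\domnp\conc{\delay f(\ldots)}=\domnp\conc\stermb$; for the others, $\stermb$ is a subterm of $\sterm$, so $\domnp\conc\stermb\supseteq\domnp\conc\sterm\supseteq U\supseteq U'$); in the $\Sample$-rule $\stermb=\alpha_{n+1}$ and $\domnp\conc{\alpha_{n+1}}=\Real^m\times\traces_{n+1}\supseteq U'$. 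Condition (c) is immediate from the measurability facts above ($U'\cong U\times(0,1)$ in the $\Sample$-case). Condition (d) holds because $\sweight'$ is $\sweight$ restricted to $U'$ (the restriction being left implicit in the branching-style rules, as the paper notes) and reindexed in the $\Sample$-case, the one exception being the $\Score$-rule, where $\sweight'\args=\seva\sval\args\cdot\sweight\args\geq 0$ precisely because $U'\subseteq\seva\sval^{-1}[0,\infty)$. For the inductive evaluation-context case: from $U\subseteq\domnp\conc{\scon[\sredex]}\subseteq\domnp\conc\sredex$ (fact (ii)) and well-typedness of the subterm $\sredex$, $\sconfig\sredex\sweight U$ is a symbolic configuration, hence so is $\sconfig\scontra{\sweight'}{U'}$ by the base cases; then $\scon[\scontra]\in\termss m{n'}$ by (iv) and typability of the context, $U'\subseteq\domnp\conc\scon\cap\domnp\conc\scontra=\domnp\conc{\scon[\scontra]}$ using (ii) together with $U'\subseteq U\subseteq\domnp\conc\scon$ (the subcase $n'=n+1$ needing (iii) to pad $\domnp\conc\scon$), and (c), (d) are inherited verbatim from $\sconfig\scontra{\sweight'}{U'}$.

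I expect the main obstacle to be purely the bookkeeping around the partial-evaluation domains $\domnp\conc\cdot$: pinning down the substitution inclusion (i) for the $\beta$- and $\Y$-contractions, and --- the more delicate point --- checking in the ``$\Sample$ under a context'' case that inserting a freshly sampled variable does not shrink the surrounding context's domain, which is exactly where fact (iii) does the work. By contrast, subject reduction and the domain and non-negativity of $\sweight'$ are entirely routine, and the measurability of $U'$ reduces cleanly to \cref{lem:seva}(1) and the measurability of the primitive functions guaranteed by \cref{ass:pop}.
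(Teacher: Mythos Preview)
Your proof is correct and follows essentially the same approach as the paper's: check by case inspection on the redex-contraction rules that $U'$ is measurable (via \cref{lem:seva} and measurability of the primitives) and that $U'\subseteq\domnp\conc\scontra$ (using the substitution inclusion for the $\beta$- and $\Y$-cases), then lift to evaluation contexts using $\domnp\conc{\scon[\cdot]}=\domnp\conc\scon\cap\domnp\conc\cdot$. Your account is in fact more careful than the paper's terse argument---you make explicit the reindexing needed in the $\Sample$-under-a-context case (your fact~(iii)), and you state the substitution inclusion with the direction $\supseteq$ that is actually required here.
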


A key advantage of the symbolic execution is that the induced computation tree is finitely branching, since branching only arises from conditionals, splitting the trace space into disjoint subsets.
This contrasts with the concrete situation (from \cref{sec:semantics}), in which sampling creates  uncountably many branches.
\begin{restatable}[Basic Properties]{lemma}{basicproperties}
  \label{lem:basic}
  Let $\sconfig\sterm\sweight U$ be a symbolic configuration. Then
  \begin{enumerate}[noitemsep]
  % \item If $U\subseteq\domnp\conc\sterm$ is measurable then $U_1\subseteq U\cap\domnp\conc{\sval_1}$ and $U_1$ is measurable
  \item There are at most countably distinct such $U'$ that $\sconfig\sterm\sweight U\sred^*\sconfig{\stermb}{\sweight'}{U'}$.
  \item\label{it:disj} If $\sconfig\sterm\sweight U\sred^*\sconfig{\sval_i}{\sweight_i}{U_i}$ for $i\in\{1,2\}$ then $U_1=U_2$ or $U_1\cap U_2=\emptyset$.
  \item\label{it:disj2} If $\sconfig\sterm\sweight U\sred^*\sconfig{\scon_i[\Sample]}{\sweight_i}{U_i}$ for $i\in\{1,2\}$ then $U_1=U_2$ or $U_1\cap U_2=\emptyset$.
  \end{enumerate}
\end{restatable}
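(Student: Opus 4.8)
The plan is to exploit the fact that symbolic reduction is deterministic away from conditionals, and that at a conditional the trace space is split into \emph{disjoint} pieces. First I would record two structural observations. \textbf{(a)} From the shape of the redex-contraction rules, the only redex with more than one contraction is $\Ifleq\sval\sterm\stermb$, which has exactly two (to the two branches, with third components $\seva\sval^{-1}(-\infty,0]\cap U$ and $\seva\sval^{-1}(0,\infty)\cap U$, sets which are disjoint since a real cannot be both $\le 0$ and $>0$); combined with uniqueness of the decomposition $\scon[\sredex]$ and the fact that symbolic values do not decompose (\cref{lem:basic0}), every symbolic configuration has at most two one-step $\sred$-successors, with two occurring only when the contracted redex is a conditional. \textbf{(b)} The third component is monotone under the evident projection: if $\sconfig\sterm\sweight U\sred^*\sconfig\stermb{\sweight'}{U'}$ with $U\subseteq\Real^m\times\traces_n$, then $U'\subseteq\Real^m\times\traces_{n'}$ for some $n'\ge n$ and $\pi(U')\subseteq U$, where $\pi\from\Real^m\times\traces_{n'}\to\Real^m\times\traces_n$ forgets the last $n'-n$ sampling coordinates. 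I would prove (b) by induction on the number of reduction steps, checking that each redex-contraction rule either leaves $U$ and $n$ unchanged ($\beta$-reduction, $\Y$-unfolding), shrinks $U$ while fixing $n$ (the delay rule, the two conditional rules, and $\Score\sval$), or — for $\Sample$ — raises $n$ by one and replaces $U$ by $\{(\tupf r,\tupf s\concat\traceseq{s'})\mid(\tupf r,\tupf s)\in U,\ s'\in(0,1)\}$, whose projection back down is exactly $U$; the evaluation-context rule lifts each case, and projections compose along the reduction.

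For the first part, observation (a) shows that the reduction sequences issuing from $\sconfig\sterm\sweight U$ form a tree of branching degree at most $2$, hence with at most countably many nodes, so only countably many configurations — a fortiori only countably many third components $U'$ — are reachable. For \ref{it:disj}, given two reductions to symbolic values $\sconfig{\sval_1}{\sweight_1}{U_1}$ and $\sconfig{\sval_2}{\sweight_2}{U_2}$, I would follow the two sequences until they first differ. If neither is a prefix of the other, then by (a) they must diverge at a configuration $\sconfig{\scon[\Ifleq\sval{\sterm'}{\stermb'}]}{\sweight_0}{U_0}$ with $U_0\subseteq\Real^m\times\traces_k$, one continuing through $A_1\defeq\seva\sval^{-1}(-\infty,0]\cap U_0$ and the other through $A_2\defeq\seva\sval^{-1}(0,\infty)\cap U_0$, where $A_1\cap A_2=\emptyset$; by (b), $U_i\subseteq\Real^m\times\traces_{n_i}$ with $n_i\ge k$ and $\pi(U_i)\subseteq A_i$ for the corresponding projection. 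If $n_1\ne n_2$ then $U_1\cap U_2=\emptyset$ because $\traces_{n_1}\cap\traces_{n_2}=\emptyset$; if $n_1=n_2=n$ then $U_1\cap U_2\subseteq\pi^{-1}(A_1)\cap\pi^{-1}(A_2)=\pi^{-1}(A_1\cap A_2)=\emptyset$. If instead one reduction is a prefix of the other, say the first, then since $\sval_1$ is a symbolic value it cannot reduce (\cref{lem:basic0}), so the prefix is the whole reduction and $U_1=U_2$.

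For \ref{it:disj2} the argument is identical except in the ``prefix'' case: a configuration $\scon[\Sample]$ \emph{does} reduce, but by (a) its unique successor is produced by the $\Sample$ rule, which strictly increases the sampling dimension; so if the first reduction is a \emph{proper} prefix of the second we get $n_1<n_2$, hence $U_1\cap U_2=\emptyset$ again, while if the two reductions coincide then $U_1=U_2$. I expect observation (b) — and in particular keeping the bookkeeping of the changing trace dimension straight, so that the disjointness conclusion holds uniformly whether or not $U_1$ and $U_2$ sit in the same $\traces_n$ — to be the only delicate point; everything else is a routine case analysis over the seven redex-contraction rules.
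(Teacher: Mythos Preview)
Your proposal is correct and follows essentially the same approach as the paper's proof sketch: both rely on the unique decomposition \cref{lem:basic0}, the observation that branching occurs only at conditionals (with disjoint third components), and the monotonicity of $U$ under projection/extension to handle the changing trace dimension. Your treatment is in fact more thorough than the paper's, which leaves the ``prefix'' subcases for parts~\ref{it:disj} and~\ref{it:disj2} implicit; your explicit handling of these (values cannot reduce; $\Sample$ strictly raises the dimension) fills exactly the gap the paper's sketch elides.
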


\lo{From Reviewer 1: Lemma 9 (Basic Properties) seems false. The antecedent ``If...$i\in\{1,2\}$'' is symmetric between $i=1$ and $i=2$, but the $\Rightarrow^*$ in part 2 is directed. So, if $\mathcal{M}_1$ is a value and $\mathcal{M}=\mathcal{M}_2=(\lambda y.y)(\mathcal{M}_1)$ and $U=U_1=U_2\ne\emptyset$, then $\mathcal{M}_2$ reduces to $\mathcal{M}_1$ but not vice versa, so part 2 seems to fail. The sentence about ``at most $2^k$ distinct $U'$'' makes sense, but how is it used below?}

\dw{I think we fixed this, didn't we?

  Regarding the $2^k$: In particular, it ensures that there are only countably many $U'$s, which is crucial later on.}

Crucially, there is a correspondence between the concrete and symbolic semantics in that they can ``simulate'' % \dw{is it ok to use the technical term in a slightly incorrect way?} \lo{We can use quotation marks to indicate that we mean it in a loose sense.} \dw{ok}
each other:
\begin{restatable}[Correspondence]{proposition}{correspondenceprop}
  \label{prop:symconc} Suppose $\sconfig\sterm\sweight U$ is a symbolic configuration, and $\args\in U$. Let $M\equiv\conc\sterm\args$ and $w\defeq\sweight\args$. Then
  \begin{enumerate}
    \item If $\sconfig\sterm\sweight U\sred\sconfig\stermb{\sweight'}{U'}$ and $\argsc\in U'$ then
    \[
    \config M w\trace\red\config{\conc\stermb\argsc}{\sweight\argsb}{\trace\concat\traceb}.
    \]
    \item\label{prop:corcompl} If $\config M w\trace\red\config N{w'}{\tupf{s'}}$ then there exists $\sconfig\sterm\sweight U\sred\sconfig\stermb{\sweight'}{U'}$ such that $\conc\stermb\argsb\equiv N$, $\sweight'\argsb=w'$ and $\argsb\in U'$. %$\conc\stermb\argsb\equiv N$.
  \end{enumerate}
\end{restatable}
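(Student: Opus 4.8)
The plan is to reduce everything to a case analysis on the rule that fires, after recording two routine compatibility properties of the partial–evaluation map $\conc{\cdot}$. First I would observe, by a straightforward induction, that (i) $\conc{\cdot}$ extends to symbolic evaluation contexts (as in \cref{def:conccon}) and sends a symbolic context $\scon$ to a concrete evaluation context $\conc\scon\args$, with $\conc{\scon[\sterm]}\args \equiv \conc\scon\args[\,\conc\sterm\args\,]$ whenever $\args\in\domnp\conc{\scon[\sterm]}$; and (ii) $\conc{\sterm[\sval/y]}\args \equiv (\conc\sterm\args)[\,\conc\sval\args/y\,]$. I would also recall the facts used repeatedly: for a symbolic value $\sval$ of type $\PCFReal$ one has $\conc\sval\args = \PCF{\seva\sval\args}$; $\conc\sterm\args$ is a value iff $\sterm$ is a symbolic value (\cref{lem:instval}); $\conc\sredex\args$ is a concrete redex (\cref{lem:instred}); and every concrete term that is not a value has a \emph{unique} context/redex decomposition.

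For part (1) I would induct on the derivation of $\sconfig\sterm\sweight U\sred\sconfig\stermb{\sweight'}{U'}$. The base cases are the seven redex–contraction rules; for each I compute $\conc\sterm\args$ from \cref{fig:definst} and check it is exactly the concrete redex fired by the matching rule of \cref{fig:opsem}. The $\beta$- and $\Y$-rules use (ii). For the delay rule $\PCF f(\sval_1,\dots,\sval_\ell)\sred\delay f(\sval_1,\dots,\sval_\ell)$ one has $\conc\sterm\args \equiv \PCF f(\PCF{r_1},\dots,\PCF{r_\ell})$ with $r_i=\seva{\sval_i}\args$, and $\argsc\in U'=\domnp\seva{\delay f(\sval_1,\dots,\sval_\ell)}\cap U$ forces $(r_1,\dots,r_\ell)\in\dom f$, so the concrete step does not fail and lands on $\PCF{f(r_1,\dots,r_\ell)}\equiv\conc{\delay f(\sval_1,\dots,\sval_\ell)}\argsc$. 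For the two conditional rules $\argsc\in U'$ forces $\seva\sval\args\le 0$ (resp.\ $>0$), so the concrete conditional takes the corresponding branch; for $\Score{\sval}$, $\argsc\in\seva\sval^{-1}[0,\infty)\cap U$ forces $\seva\sval\args\ge 0$, so the concrete step does not fail and multiplies the weight by $\seva\sval\args$, matching $\sweight'=\seva\sval\cdot\sweight$; for $\Sample$ the concrete rule appends \emph{some} $r\in(0,1)$, and I take $r$ to be the last trace entry of $\argsc$, so that $\conc{\alpha_{n+1}}\argsc\equiv\PCF r$ and $\sweight'\argsc=\sweight\args$. In every case the weight update and trace extension on the two sides agree. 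For the context step, $\sterm\equiv\scon[\sredex]$ and $\stermb\equiv\scon[\scontra]$ with $\sconfig\sredex\sweight U\sred\sconfig\scontra{\sweight'}{U'}$; I apply the base case to this sub-reduction and lift by (i) together with the concrete evaluation-context rule of \cref{fig:opsem}, using that $\scon$ cannot mention the freshly introduced $\alpha_{n+1}$, hence $\conc\scon\args\equiv\conc\scon\argsc$ and $\conc\scon\argsc[\conc\scontra\argsc]\equiv\conc{\scon[\scontra]}\argsc$.

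For part (2), since $\config M w\trace$ reduces to a genuine configuration, $M\equiv\conc\sterm\args$ is not a value, so $\sterm$ is not a symbolic value by \cref{lem:instval}; since $\args\in U\subseteq\domnp\conc\sterm\neq\emptyset$, \cref{lem:basic0} (3) gives $\sterm\equiv\scon[\sredex]$, and by (i) and \cref{lem:instred}, $M\equiv\conc\scon\args[\conc\sredex\args]$ is the unique context/redex decomposition of $M$. Hence the concrete step fires $\conc\sredex\args$, and I case-analyse on its shape (equivalently, on $\sredex$). The hypothesis that the reduct is a configuration rather than $\Fail$ tells me that in the primitive-function and $\Score$ cases the relevant side-condition holds ($(r_1,\dots,r_\ell)\in\dom f$, resp.\ $\seva\sval\args\ge 0$), which is exactly what places $\argsb$ in the restricted component $U'$ of the corresponding symbolic rule (in the $\Sample$ case $\traceb$ has length one and $\argsb$ is $\args$ with the sampled value appended; in the conditional cases the sign of $\seva\sval\args$ determines which branch, and the matching symbolic rule restricts to precisely that sign). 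The identities $\conc\stermb\argsb\equiv N$ and $\sweight'\argsb=w'$ are then read off exactly as in part (1), and the context step lifts it from $\sredex$ to $\sterm$.

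The only genuinely delicate points are (a) making the two set-restrictions in the branching rules, and the $\dom f$-restriction in the delay rule, coincide precisely with the concrete side-conditions, so that ``the concrete step does not reach $\Fail$'' is equivalent to ``$\argsb\in U'$''; and (b) the harmless but easily-mishandled asymmetry in part (1) whereby the surrounding context (and the already-delayed subterms) is instantiated at $\args$ while the contractum is instantiated at the extended $\argsc$ — this is legitimate exactly because evaluation contexts and delayed operations never contain the fresh sampling variable $\alpha_{n+1}$, so their instantiations at $\args$ and at $\argsc$ coincide. Everything else is a mechanical unfolding of \cref{fig:definst}, \cref{fig:dominst}, \cref{fig:opsem} and the symbolic contraction rules.
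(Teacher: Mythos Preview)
Your proposal is correct and follows essentially the same route as the paper: decompose $\sterm$ into $\scon[\sredex]$ via \cref{lem:basic0}, handle the redex by a case analysis on the seven contraction rules (the paper factors these out as the auxiliary \cref{lem:symcon1aux} and \cref{lem:consym1aux}), and then lift through the evaluation context using the compatibility equation $\conc{\scon[\sterm]}\args\equiv\conc\scon\args[\conc\sterm\args]$. Your explicit remark (b) about why $\conc\scon\args\equiv\conc\scon\argsc$ (the context cannot mention the fresh $\alpha_{n+1}$) is a detail the paper leaves implicit when invoking the substitution equation with $\argsc$ rather than $\args$.
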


\iffalse
As a consequence of \cref{lem:instval}, we obtain:
\begin{corollary}
  \label{cor:soundcompl}
  \cref{it:sound} \textsf{(Soundness)} and \cref{it:compl} \textsf{(Completeness)}  above hold.
\end{corollary}
\fi
As a consequence of \cref{lem:instval}, we obtain a proof of \cref{thm:soundness and completeness}.

\section{Densities of Almost Surely Terminating Programs are Differentiable Almost Everywhere}
\label{subsec:main result}

% !TEX root = ./main.tex

%\lo{TODO: Move proofs of the lemmas and theorem from appendix back into the new section.}

So far we have seen that the symbolic execution semantics provides a sound and complete way to reason about the weight and value functions.
In this section we impose further
%\lo{further? \cref{ass:pop} is already a precondition in \cref{lem:seva}.} \dw{It's only \cref{lem:seva} which assumes parts of \cref{ass:pop}. I find it quite important that soundness and completeness \cref{cor:soundcompl} has nothing to do with differentiability or the boundary condition.}
restrictions on the primitive operations and the terms to obtain results about the differentiability of these functions.

Henceforth we assume \cref{ass:pop} and we fix a term $M$ with free variables amongst $x_1,\ldots,x_m$.

\cm{A sentence explaining this lemma?}
From \cref{lem:seva} we immediately obtain the following:
\begin{restatable}{lemma}{lemmavalid}
  \label{lem:valid}
  Let $\sconfig\sterm\sweight U$ be a symbolic configuration such that $\sweight$ is differentiable on $\interior U$ and $\mu(\boundary U)=0$.
  If $\sconfig\sterm\sweight U\sred\sconfig{\sterm'}{\sweight'}{U'}$ then $\sweight'$ is differentiable on $\interior{U'}$ and $\mu(\boundary{U'})=0$.
\end{restatable}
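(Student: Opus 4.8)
The plan is to prove the lemma by a case analysis over the seven symbolic redex-contraction rules. The symbolic evaluation-context rule needs no separate treatment: it transmits the weight function and the trace subspace unchanged from its premise to its conclusion, so it suffices to consider the case where $\sterm$ is itself a symbolic redex. The $\beta$-rule and the rule for $\Y{}$ give $U'=U$ and $\sweight'=\sweight$, so there is nothing to show. The five remaining rules split into three kinds: the delay rule and the two if-rules only shrink $U$ (intersecting it respectively with $\domnp\seva{\delay f(\sval_1,\ldots,\sval_\ell)}$, $\seva\sval^{-1}(-\infty,0]$, and $\seva\sval^{-1}(0,\infty)$); the sampling rule appends a fresh coordinate; and the scoring rule both shrinks $U$ (to $\seva\sval^{-1}[0,\infty)\cap U$) and replaces $\sweight$ by $\seva\sval\cdot\sweight$.

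First I would isolate two facts used throughout. (i) Whenever a symbolic value $\sval\from\PCFReal$ figures in a rule, \cref{lem:seva} gives $\seva\sval\in\pop$ (and likewise $\seva{\delay f(\sval_1,\ldots,\sval_\ell)}\in\pop$), hence by \cref{ass:diff} of \cref{ass:pop} it is differentiable on $\interior{\domnp\seva\sval}$. (ii) Each set that a rule intersects $U$ with has $\mu$-negligible boundary: for the sub-level sets $\seva\sval^{-1}(-\infty,0]$ and $\seva\sval^{-1}[0,\infty)$ this follows from \cref{ass:boundary} of \cref{ass:pop} applied to $\seva\sval\in\pop$; the full domain $\domnp\seva\sval$ equals $(\mathbf 0\circ\langle\seva\sval\rangle)^{-1}[0,\infty)$, where $\mathbf 0$ is the constant-zero primitive and $\mathbf 0\circ\langle\seva\sval\rangle\in\pop$ by \cref{ass:comp} of \cref{ass:pop}, so $\mu(\boundary{\domnp\seva\sval})=0$ again by \cref{ass:boundary}; and $\seva\sval^{-1}(0,\infty)=\domnp\seva\sval\setminus\seva\sval^{-1}(-\infty,0]$, so $\boundary{\seva\sval^{-1}(0,\infty)}\subseteq\boundary{\domnp\seva\sval}\cup\boundary{\seva\sval^{-1}(-\infty,0]}$ is $\mu$-null too.

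With these in place the cases are short. For each shrinking rule the successor configuration is $\sconfig{\sterm'}{\sweight|_{U'}}{U'}$ with $U'=D\cap U$ for one of the sets $D$ of fact (ii); then the restriction $\sweight|_{U'}$ is differentiable on $\interior{U'}\subseteq\interior U$ by monotonicity of $\interior{(-)}$, and $\boundary{U'}\subseteq\boundary D\cup\boundary U$ is $\mu$-null by fact (ii) and the hypothesis $\mu(\boundary U)=0$. For the sampling rule, under the identification $\Real^m\times\traces_{n+1}\cong(\Real^m\times\traces_n)\times(0,1)$ we have $U'=U\times(0,1)$ and $\sweight'=\sweight\circ\pi$ with $\pi$ the projection forgetting the new coordinate, so $\sweight'$ is differentiable on $\interior{U'}=\interior U\times(0,1)$, and $\boundary{U'}\subseteq(\boundary U\times[0,1])\cup(\closure U\times\{0,1\})$ is $\mu$-null since $\mu(\boundary U)=0$ and $\{0,1\}$ is Lebesgue-null. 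For the scoring rule the boundary claim is as in the shrinking case, and for differentiability $U'\subseteq\domnp\seva\sval$ gives $\interior{U'}\subseteq\interior{\domnp\seva\sval}$, where $\seva\sval$ is differentiable by fact (i) and $\sweight$ is differentiable by hypothesis, so the product $\seva\sval\cdot\sweight$ is differentiable on $\interior{U'}$.

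I expect the one genuinely delicate point to be fact (ii), i.e.\ showing that every set $U$ is cut down by in a reduction step has $\mu$-negligible boundary. For the scoring rule this is directly \cref{ass:boundary} of \cref{ass:pop}; for the if-rules and the delay rule it takes a little manipulation to reduce the sub-level sets $\seva\sval^{-1}(-\infty,0]$, $\seva\sval^{-1}(0,\infty)$ and the domain $\domnp\seva\sval$ to an instance of \cref{ass:boundary} — concretely, by exhibiting each of them as a $[0,\infty)$-preimage of a suitable composite primitive (available via \cref{ass:comp}) and complementing within $\domnp\seva\sval$, as sketched in fact (ii), which is where the precise form of \cref{ass:pop} matters. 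Alternatively one may invoke the stronger closure property noted after \cref{eg:primitive-functions}, namely $\leb_\ell(\boundary f^{-1}I)=0$ for every interval $I$, which $\pop_1$ and $\pop_2$ both satisfy and which makes fact (ii) immediate. Once \cref{lem:valid} is established, a routine induction on the length of a symbolic reduction sequence propagates ``$\sweight$ differentiable on $\interior U$ and $\mu(\boundary U)=0$'' along every branch of the computation tree, which is the form in which it feeds into \cref{thm:mainres}.
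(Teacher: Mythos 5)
Your proposal follows essentially the same route as the paper's own proof: a case analysis over the symbolic contraction rules, using \cref{lem:seva} together with \cref{ass:pop} for differentiability (closure of differentiable functions under multiplication in the $\Score$ case) and for nullity of the boundaries of the sets intersected with $U$, combined with $\boundary(U\cap V)\subseteq\boundary U\cup\boundary V$ and the observation that the $\Sample$ rule merely appends a unit interval, so that $\mu(\boundary U')=\mu(\boundary U)$ there. If anything you are more explicit than the paper, whose proof simply asserts $\mu\big(\boundary\seva\sval^{-1}(-\infty,0]\big)=\mu\big(\boundary\seva\sval^{-1}(0,\infty)\big)=\mu\big(\boundary\seva\sval^{-1}[0,\infty)\big)=0$ from \cref{lem:seva} and \cref{ass:pop}; the delicate point you flag in your fact (ii) — that \cref{ass:boundary} of \cref{ass:pop} literally covers only the $[0,\infty)$-preimage, the remaining sets requiring either a composition/complementation manoeuvre (which needs, e.g., negation among the primitives) or the stronger interval property noted after \cref{eg:primitive-functions} — is glossed over in the paper exactly as much as in your sketch, and your fallback via that stronger property is a sound way to close it.
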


  \subsection{Differentiability on Terminating Traces}

As an immediate consequence of the preceding, \cref{lem:seva} and the Soundness (\cref{it:sound} of \cref{thm:soundness and completeness}), whenever $\sconfig M\const{\Real^m}\sred^*\sconfig\sval\sweight U$ then $\weightfn_M$ and $\valuefn_M$ are differentiable everywhere in $\interior U$.

Recall the %\changed[dw]{\sout{(measurable)}}
set $\trtermM$ of $\args\in\Real^m\times\traces$ from \cref{eq:trtermM} for which $M$ terminates. We abbreviate $\trtermM$ to $\trterm$ and define
\begin{align*}
    \trterm&\defeq \trtermM = \{\args \in \Real^m \times \traces \mid \exists V, w \,.\, \config {M[{\tupf r} / {\tupf x}]}1\emptytrace\red^*\config V w\trace\}\\
    \trtermi&\defeq \bigcup\{\interior{\trb} \mid \exists \sval, \sweight \, . \,\sconfig M\const{\Real^m}\sred^*\sconfig\sval\sweight{\trb}\}
  \end{align*}
  By Completeness (\cref{it:compl} of \cref{thm:soundness and completeness}),
  $\trterm=\bigcup\{\trb\mid \exists \sval, \sweight \, . \, \sconfig M\const{\Real^m}\sred^*\sconfig\sval\sweight{\trb}\}$. Therefore, being countable unions of measurable sets (\cref{lem:basic,lem:pres}), $\trterm$ and $\trtermi$ are measurable.

% Abbreviating $\trtermM$ to $\trterm$,
% by \cref{cor:soundcompl}, it holds
% \begin{align*}
%   \trterm\defeq\{\args\in\Real^m\times\traces\mid\trace\in\trtermMr\}=\bigcup\{\trb\mid \exists \sval, \sweight \, . \, \sconfig M\const{\Real^m}\sred^*\sconfig\sval\sweight{\trb}\}
% \end{align*}
% % Thus, being the countable union of measurable sets (\cref{lem:pres,lem:basic}), $\trterm$ is measurable.
% Next, let
% \begin{align*}
%   \trtermi&\defeq \bigcup\{\interior{\trb} \mid \exists \sval, \sweight \, . \,\sconfig M\const{\Real^m}\sred^*\sconfig\sval\sweight{\trb}\}
% \end{align*}
% which is a countable (\cref{lem:basic}) union of open sets, hence measurable.
By what we have said above, $\weightfn_M$ and $\valuefn_M$ are differentiable everywhere on $\trtermi$.  Observe that in general, $\trtermi\subsetneq\interior\trterm$. However,
\begin{align}
  \label{eq:termi}
  \mu\left(\trterm\setminus\trtermi\right) &=\mu\bigg(\displaystyle \bigcup_{{\substack{U : \sconfig M\const{\Real^m}\sred^*\\\sconfig {\sval} \sweight U}}} \big(U \setminus \interior U \big)\bigg)\leq\displaystyle \sum_{{\substack{U : \sconfig M\const{\Real^m}\sred^*\\\sconfig {\sval} \sweight U}}} \;\mu(\boundary U)=0
\end{align}
The first equation holds because the $U$-indexed union is of pairwise disjoint sets. The inequality is due to $(U \setminus \interior{U}) \subseteq \boundary{U}$.
The last equation above holds because each $\mu (\boundary U) = 0$ (\cref{ass:pop} and \cref{lem:valid}).

Thus we conclude:
\begin{theorem}
  \label{thm:diffterm}
  Let $M$ be an SPCF term. Then its weight function $\weightfn_M$ and value function $\valuefn_M$ are differentiable for almost all terminating traces.
\end{theorem}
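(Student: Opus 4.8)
The plan is to obtain \cref{thm:diffterm} by assembling the soundness and completeness of stochastic symbolic execution (\cref{thm:soundness and completeness}) with the propagation lemma \cref{lem:valid} and the closure properties recorded in \cref{lem:seva}. Fix $M$ with free variables amongst $x_1,\dots,x_m$ and recall $\trterm=\trtermM$ and $\trtermi$ as defined just above. The proof reduces to three tasks: (i) $\trterm$ and $\trtermi$ are measurable; (ii) $\weightfn_M$ and $\valuefn_M$ are differentiable at \emph{every} point of $\trtermi$; and (iii) $\mu(\trterm\setminus\trtermi)=0$. Granting these, any terminating trace at which $\weightfn_M$ or $\valuefn_M$ fails to be differentiable must lie in $\trterm\setminus\trtermi$, a $\mu$-null set, so $\weightfn_M$ and $\valuefn_M$ are differentiable for almost all terminating traces, which is the claim.

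For (i) and (ii): Completeness (\cref{it:compl} of \cref{thm:soundness and completeness}) gives $\trterm=\bigcup\{U\mid\exists\sval,\sweight.\ \sconfig M\const{\Real^m}\sred^*\sconfig\sval\sweight U\}$, and Soundness (\cref{it:sound}) shows that each such $U$ contains only terminating pairs, so the two descriptions of $\trterm$ coincide. By \cref{lem:basic} there are at most countably many distinct such $U$, and by \cref{lem:pres} each is measurable, so $\trterm=\bigcup_U U$ and $\trtermi=\bigcup_U\interior U$ are countable unions of measurable sets, hence measurable. For differentiability, fix a reachable value-configuration $\sconfig M\const{\Real^m}\sred^*\sconfig\sval\sweight U$. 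Since the source configuration carries the weight $\const$ (differentiable everywhere) and $\boundary(\Real^m)=\emptyset$, a straightforward induction on the length of the reduction sequence, applying \cref{lem:valid} at each step (which is licensed by \cref{ass:pop}), yields that $\sweight$ is differentiable on $\interior U$ and that $\mu(\boundary U)=0$. By Soundness, $\weightfn_M$ agrees with $\sweight$ on $U$, hence is differentiable on $\interior U$; and $\valuefn_M$ agrees with $\conc\sval$ on $U$, which by \cref{lem:seva}(2) is differentiable in the interior of its domain, a superset of $\interior U$ — when $M:\PCFReal$ this is exactly case (2) of the unpacked definition from \cref{sec:bdiff}, with $\valuefn'_M=\seva\sval\in\pop$ by \cref{lem:seva}(1). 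Taking the union over all value-configurations reachable from $\sconfig M\const{\Real^m}$, $\weightfn_M$ and $\valuefn_M$ are differentiable everywhere on $\trtermi=\bigcup_U\interior U$.

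For (iii): by \cref{it:disj} of \cref{lem:basic}, the sets $U$ indexing value-configurations reachable from $\sconfig M\const{\Real^m}$ are pairwise disjoint (or identical), so $\trterm\setminus\trtermi$ is the disjoint union $\bigcup_U(U\setminus\interior U)$ — this is precisely the first equality in \eqref{eq:termi}. Since $U\setminus\interior U\subseteq\boundary U$ and $\mu(\boundary U)=0$ by the paragraph above, countable additivity gives $\mu(\trterm\setminus\trtermi)\le\sum_U\mu(\boundary U)=0$, which establishes (iii) and hence the theorem.

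The assembly itself is routine; the real content lives in the invoked lemmas, and the step I expect to require the most care is the induction inside (ii): one must verify that the invariant ``$\sweight$ differentiable on $\interior U$ and $\mu(\boundary U)=0$'' is preserved by \emph{every} symbolic contraction — in particular by the delay-contraction (whose soundness rests on closure under composition, \cref{ass:comp}), by the conditional-splitting rules (which intersect $U$ with $\seva\sval^{-1}(-\infty,0]$ and $\seva\sval^{-1}(0,\infty)$, whose boundaries are null by \cref{ass:boundary} applied to $\seva\sval\in\pop$), and by the $\Score$ rule (which multiplies $\sweight$ by $\seva\sval$, differentiable on the interior of its domain by \cref{ass:diff}). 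A secondary subtlety worth flagging is that in general $\trtermi\subsetneq\interior\trterm$, so we only claim differentiability on $\trtermi$, not on all of $\interior\trterm$; this loss is harmless precisely because $\trterm\setminus\trtermi$ is null, and ``almost all terminating traces'' is understood relative to $\mu$ restricted to $\trterm$ (which is meaningful even though $\mu$ is infinite on $\Real^m\times\traces$ when $m\ge 1$).
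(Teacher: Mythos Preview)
Your proposal is correct and follows essentially the same approach as the paper: decompose into measurability of $\trterm$ and $\trtermi$, differentiability of $\weightfn_M$ and $\valuefn_M$ on $\trtermi$ via Soundness combined with \cref{lem:valid} and \cref{lem:seva}, and $\mu(\trterm\setminus\trtermi)=0$ via disjointness (\cref{lem:basic}) and the null-boundary invariant. Your write-up is slightly more explicit than the paper's (e.g.\ spelling out the induction behind the use of \cref{lem:valid} and the unpacking of $\valuefn_M$ for $M:\PCFReal$), but the structure and ingredients are the same.
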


\subsection{Differentiability for Almost Surely Terminating Terms}
\label{sec:diffast}
Next, we would like to extend this insight for almost surely terminating terms to suitable subsets of $\Real^m\times\traces$, the union of which constitutes almost the entirety of $\Real^m\times\traces$.
Therefore, it is worth examining consequences of almost sure termination (see \cref{def:ast}).

We say that $(\tupf r,\tupf s)\in\Real^m\times\traces$ is \defn{maximal} (for $M$) if $\config {M[\tupf{\PCF r}/\tupf x]} 1\emptytrace\red^*\config N w{\tupf s}$ and
for all
\changed[cm]{
$\tupf{s'} \in \traces\setminus \{\emptytrace\}$}
% $\emptytrace\neq\tupf{s'}\in\traces$
and $N'$,
$\config N w{\tupf{s}}\not\red^*\config{N'}{w'}{\tupf{s}\concat\tupf {s'}}$.
Intuitively, $\tupf s$ contains a maximal number of samples to reduce $M[\tupf r/\tupf x]$.
\changed[dw]{Let $\trmax$ be the set of maximal $\args$}.

Note that $\trterm\subseteq\trmax$ and there are terms for which the inclusion is strict (e.g. for the diverging term $M\equiv \Y(\lambda f\ldotp f)$, $\emptytrace\in\trmax$ but $\emptytrace\not\in\trterm$).
% \changed[lo]{and let $\tr_{M, \mathsf{max}}$ be the set of maximal $\args$ for $M$.}
% \lo{Reason: We refer to $\tr_{M, \mathsf{max}}$ in \cref{sec:conclusion}.}
% \changed[lo]{With $M$ fixed, henceforth, we abbreviate $\tr_{M, \mathsf{max}}$ to $\trmax$.}
\dw{I changed this back as $\trmax$ is not explicitly used in the conclusion any more.}
\lo{OK}
Besides, $\trmax$ is measurable because, thanks to \cref{prop:symconc}, for every $n\in\Nat$,
\begin{align*}
  \left\{\args\in\Real^m\times\traces_n \mid \config {M[\tupf{\PCF r}/\tupf x]} 1\emptytrace\red^*\config N w{\tupf s}\right\}=\bigcup_{{\substack{U : \sconfig M\const{\Real^m}\sred^*\\\sconfig {\stermb} \sweight U}}} U\cap(\Real^m\times\traces_n)
\end{align*}
and the RHS is a countable union of measurable sets (\cref{lem:basic,lem:pres}).

The following is a consequnce of the definition of almost sure termination and a corollary of Fubini's theorem (see \ifproceedings\cite{MOPW20} \else\cref{app:diffast} \fi for details):
\begin{restatable}{lemma}{maxtermlemma}
  \label{lem:maxterm}
  If $M$ terminates almost surely then $\mu(\trmax\setminus\trterm)=0$.
\end{restatable}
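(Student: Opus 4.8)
The plan is to reduce the statement to a fibrewise claim over $\Real^m$ by Fubini/Tonelli, and to bound each fibre of $\trmax$ using a prefix-freeness argument in the style of \cref{lem:termleq1}. For $\tupf r \in \Real^m$ write $\trmax^{\tupf r}\defeq\{\tupf s\in\traces\mid\args\in\trmax\}$ and $\trterm^{\tupf r}\defeq\{\tupf s\in\traces\mid\args\in\trterm\}$. Unwinding the definitions, $\trmax^{\tupf r}$ is exactly the set of maximal traces of the closed term $M[\tupf{\PCF r}/\tupf x]$, and $\trterm^{\tupf r}=\trtermMr$ is its set of terminating traces. By \cref{def:ast} there is $T\in\Sigma_{\Real^m}$ with $\leb_m(\Real^m\setminus T)=0$ such that $\trmeasure(\trtermMr)=1$ for every $\tupf r\in T$ (for $m=0$ take $T=\Real^0$).

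The first key step is the fibre bound $\trmeasure(\trmax^{\tupf r})\le 1$, valid for every $\tupf r$. This follows, exactly as in \cref{lem:termleq1} with ``terminating'' replaced by ``maximal'', from the prefix-freeness property: if $\tupf s$ is a maximal trace of a closed term and $\tupf{s'}\neq\emptytrace$, then $\tupf s\concat\tupf{s'}$ is not. To see this, note that because the small-step semantics is deterministic except for the value chosen by the $\Sample$-rule (using the unique-decomposition property of closed terms), any reduction $\config{M[\tupf{\PCF r}/\tupf x]}1\emptytrace\red^*\config{N'}{w'}{\tupf s\concat\tupf{s'}}$ passes through the uniquely determined configuration $\config{N_1}{w_1}{\tupf s}$ reached when the trace first equals $\tupf s$, whereas maximality of $\tupf s$ forces the trace never to grow beyond $\tupf s$ from $N_1$ --- a contradiction. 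Given prefix-freeness, the sets $(\trmax^{\tupf r}\cap\traces_i)\times(0,1)^{n-i}$ for $i=0,\dots,n$ are pairwise disjoint subsets of $(0,1)^n$, so $\sum_{i=0}^n\leb_i(\trmax^{\tupf r}\cap\traces_i)\le 1$ for all $n$, whence $\trmeasure(\trmax^{\tupf r})=\sum_i\leb_i(\trmax^{\tupf r}\cap\traces_i)\le 1$.

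Finally I would assemble the pieces. Both $\trmax$ and $\trterm$ are measurable (as shown above), hence so is $\trmax\setminus\trterm$; since $\mu$ restricted to $\Real^m\times\traces_n$ is $\leb_{m+n}=\leb_m\otimes\leb_n$, Tonelli's theorem gives that $\tupf r\mapsto\trmeasure(\trmax^{\tupf r}\setminus\trterm^{\tupf r})$ is measurable and
\[
\mu(\trmax\setminus\trterm)=\int_{\Real^m}\trmeasure\big(\trmax^{\tupf r}\setminus\trterm^{\tupf r}\big)\;\diff\leb_m(\tupf r).
\]
For $\tupf r\in T$ we have $\trterm^{\tupf r}=\trtermMr\subseteq\trmax^{\tupf r}$ (using $\trterm\subseteq\trmax$) and $\trmeasure(\trtermMr)=1$, so by the fibre bound $\trmeasure(\trmax^{\tupf r}\setminus\trterm^{\tupf r})=\trmeasure(\trmax^{\tupf r})-1\le 0$; being non-negative, it equals $0$. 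Thus the integrand vanishes on $T$, and as $\leb_m(\Real^m\setminus T)=0$ and the integrand is non-negative, the integral --- i.e.\ $\mu(\trmax\setminus\trterm)$ --- is $0$.

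I expect the main obstacle to be the careful justification of prefix-freeness: making precise ``the configuration reached when the trace first equals $\tupf s$'' and arguing that maximality of $\tupf s$ rules out \emph{all} continuations, which rests entirely on determinism of reduction modulo the $\Sample$-rule. The surrounding measure theory (the decomposition of $\mu$ into $\leb_m\otimes\leb_n$ on each $\Real^m\times\traces_n$ and the application of Tonelli) is routine.
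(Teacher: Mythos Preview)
Your proposal is correct and follows essentially the same route as the paper: define the fibres $\trmax^{\tupf r}$ and $\trterm^{\tupf r}$, establish $\trmeasure(\trmax^{\tupf r})\le 1$ for every $\tupf r$ by the prefix-freeness argument of \cref{lem:termleq1}, deduce that the fibre difference has measure $0$ on $T$, and conclude via Fubini/Tonelli. The paper packages the last step into a separate corollary of Fubini (\cref{lem:caval}) rather than invoking Tonelli directly, but the argument is otherwise identical.
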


%Hence, if $M$ terminates almost surely then by \cref{eq:termi}, $\mu(\trmax\setminus\trterm)=0$.

\begin{figure}[t]
  \centering
  \begin{tikzpicture}
    % weight=0
    \draw[pattern color=red!25, pattern=crosshatch dots] (0,3) -- (6,3) -- (6,0) -- (2.65,0) -- (2.65,1.1) -- (1.9,1.5) -- (1.25,2.35) -- (0,2.35) -- cycle;

    % term
    \draw[thick,fill=blue!7] (0,0) -- (2.65,0) -- (2.65,1.1) -- (1.9,1.5) -- (1.25,2.35) -- (0,2.35) -- cycle;
    \node [align=center,fill=blue!7] at (-1,1){$\trmax$};
    \draw[thick] (-0.5,1) -- (0.1,1);

    % termint
    \draw[pattern=north east lines,pattern color=blue!25] (0.15,0.15) -- (2.5,0.15) -- (2.5,1) -- (1.8,1.4) -- (1.15,2.2) -- (0.15,2.2) -- cycle;
    \node [align=center] at (1,1){$\trtermi$};

    % stuck int
    \draw[pattern color=red!40, pattern=north west lines] (5.8,1.4) -- (5.8,2.8) -- (3.6,2.8) -- cycle;
    \node [align=center] at (5.2,2.4){$\trexti$};

    %stuck
    \draw[thick] (3.1,3) -- (6,1.1) -- (6,3) -- cycle;
    \node [align=center,pattern color=red!25, pattern=crosshatch dots] at (7,2.4){$\trext$};
    \draw[thick] (6.5,2.4) -- (5.9,2.4);

    % prefint
    \draw[pattern color=red!40, pattern=north west lines] (0.1,2.9) -- (3,2.9) -- (5.9,1) -- (5.9,0.1) -- (2.75,0.1) -- (2.75,1.2) -- (2,1.6) -- (1.35,2.45) -- (0.1,2.45) -- cycle;
    \node [align=center] at (3.5,1.6){$\trprefi$};

    % pref
    \node [align=center,pattern color=red!25, pattern=crosshatch dots] at (-1,2.7){$\trpref$};
    \draw[thick] (-0.5,2.7) -- (0.05,2.7);

    %T
    \node [align=center,preaction={pattern=north east lines,pattern color=blue!25},pattern color=red!40, pattern=north west lines] at (3,3.5){$\tr$};
    \draw[thick] (3,3.2) -- (3,2.5);
    \draw[thick] (2.95,3.2) -- (1.2,1.7);
    \draw[thick] (3.05,3.2) -- (4.3,2.6);

    % R X S
    \draw[thick] (0,0) -- (6,0) -- (6,3) -- (0,3) -- cycle;
  \end{tikzpicture}

  \caption{\label{fig:newintm0} Illustration of how $\Real^m\times\traces$ -- visualised as the entire rectangle -- is partitioned to prove \cref{thm:mainres}.
  The value function returns $\bot$ in the red dotted area and a closed value elsewhere (i.e.\ in the blue shaded area).}
\end{figure}

Now, observe that for all $(\tupf r,\tupf s)\in\Real^m\times\traces$, exactly one of the following holds:
\begin{enumerate}[noitemsep,nolistsep]
\item $\args$ is maximal
\item for a proper prefix $\tupf{s'}$ of $\tupf s$, $\argsb$ is maximal
\item $\args$ is \emph{stuck}, because $\tupf s$ does not contain enough randomness\dw{N.B.\ undefined function applications and negative scoring is already dealt with as maximal traces.}.
\end{enumerate}
Formally, we say $\args$ is \defn{stuck}  if $\config {M[\tupf{\PCF r}/\tupf x]} 1\emptytrace\red^*\config{E[\Sample]}w{\tupf s}$, and we let $\trext$ be the set of all $\args$ which get stuck. Thus,
\begin{align*}
  \Real^m\times\traces=\trmax\cup\trpref\cup\trext
\end{align*}
where $\trpref\defeq\{\argsc\mid\args\in\trmax\land\traceb\neq\emptytrace\}$,
and the union is disjoint.

 Defining $\trexti\defeq \bigcup\{\interior{\trb}\mid \sconfig M\const{\Real^m}\sred^*\sconfig {\scon[\Sample]}\sweight\trb\}$
we can argue analogously to \cref{eq:termi} that $\mu(\trext\setminus\trexti)=0$.

Moreover, for $\trprefi\defeq\{(\tupf r,\tupf s\concat\tupf{s'})\mid(\tupf r,\tupf s)\in\trtermi\tand\emptytrace\neq\tupf{s'}\in\traces\}$ it holds
\begin{align*}
  \trpref\setminus\trprefi=\bigcup_{n\in\Nat}\left\{(\tupf r,\tupf s\concat\tupf{s'})\mid (\tupf r,\tupf s)\in\trmax\setminus\trtermi\land\tupf{s'}\in\traces_n\right\}
\end{align*}
and hence, $\mu(\trpref\setminus\trprefi)\leq\sum_{n\in\Nat}\mu(\trmax\setminus\trtermi)\leq 0$.

Finally, we define
\begin{align*}
  \tr\defeq\trtermi\cup\trprefi\cup\trexti
\end{align*}
Clearly, this is an open set and the situation is illustrated in \cref{fig:newintm0}. By what we have seen,
\begin{align*}
  \mu\left((\Real^m\times\traces)\setminus\tr\right)=\mu(\trterm\setminus\trtermi)+\mu(\trprefi\setminus\trpref)+\mu(\trext\setminus\trexti)=0
\end{align*}
Moreover, to conclude the proof of our main result \cref{thm:mainres} it suffices to note: %, by \cref{lem:seva}, the Invariance \cref{lem:valid} and Soundness \cref{cor:soundcompl},
\begin{enumerate}[noitemsep,nolistsep]
\item $\weightfn_M$ and $\valuefn_M$ are differentiable everywhere on $\trtermi$ (as for \cref{thm:diffterm}), and
\item $\weightfn_M(\tupf r,\tupf s)=0$ and $\valuefn_M(\tupf r,\tupf s)=\bot$ for $(\tupf r,\tupf s)\in\trprefi\cup\trexti$.
%\item for $(\tupf r,\tupf s)\in\trtermi$ and $\emptytrace\neq\tupf{s'}\in\traces_n$, $\weightfn(\tupf r,\tupf s\concat\tupf{s'})=0$ and $\valuefn(\tupf r,\tupf s\concat\tupf{s'})=\bot$,
%\item for $(\tupf r,\tupf s)\in\trexti$, $\weightfn(\tupf r,\tupf s)=0$ and $\valuefn(\tupf r,\tupf s)=\bot$.
\end{enumerate}

\mainres

\fi

We remark that almost sure termination was not used in our development until the proof of \cref{lem:maxterm}.
For \cref{thm:mainres} we could have instead directly assumed the conclusion of \cref{lem:maxterm}; that is, almost all maximal traces are terminating.
 This is a strictly weaker condition than almost sure termination. The exposition we give is more appropriate: almost sure termination is a standard notion, and the development of methods to prove almost sure termination is a subject of active research.
%\changed[dw]{and approaches to establish it is subject of active research}.

We also note that the technique used in this paper to  establish almost everywhere differentiability  could be used to target another ``almost everywhere'' property instead: one can simply remove the requirement that elements of $\pop$ are differentiable, and replace it with the desired property. A basic example of this is \emph{smoothness}.

\section{Conclusion}

\label{sec:conclusion}

% !TEX root = ./main.tex

We have solved an open problem in the theory of probabilistic programming.
This is mathematically interesting, and motivated the development of stochastic symbolic execution, a more informative form of operational semantics in this context.
The result is also of major practical interest, since almost everywhere differentiability is necessary for correct gradient-based inference.

\subsubsection*{Related Work.}
This problem was partially addressed in the work of Zhou et al.~\cite{DBLP:conf/aistats/ZhouGKRYW19} who prove a restricted form of our theorem for recursion-free first-order programs with analytic primitives.
Our stochastic symbolic execution is related to their \emph{compilation scheme}, which we extend to a more general language.

The idea of considering the possible control paths through a probabilistic programs is fairly natural and not new to this paper; it has been used towards the design of specialised inference algorithms for probabilistic programming, see \cite{chaganty2013efficiently,DBLP:journals/corr/abs-1910-13324}. To our knowledge, this is the first semantic formalisation of the concept, and the first time it is used to reason about whole-program density.

The notions of \emph{weight function} and \emph{value function} in this paper are inspired by the more standard trace-based operational semantics of
Borgstr{\"{o}}m et al.~\cite{DBLP:conf/icfp/BorgstromLGS16} (see also \cite{DBLP:journals/pacmpl/WandCGC18,DBLP:journals/pacmpl/LeeYRY20}).

% <<<<<<< HEAD
% Our work is also connected to recent developments in differentiable programming. Lee et al.~\cite{DBLP:journals/corr/abs-2006-06903} study the family of \emph{piecewise functions under analytic partition}, or just ``PAP'' functions. PAP functions are a well-behaved family of almost-everywhere differentiable functions, which can be used to reason about automatic differentiation in recursion-free first-order programs. An interesting question is whether this can be extended to a more general language, and whether densities of almost-surely terminating SPCF programs are PAP functions. (See also \cite{DBLP:conf/fossacs/HuotSV20,DBLP:journals/pacmpl/BrunelMP20} for work on differentiable programs \emph{without} conditionals.)

% A similar class of functions is also introduced by Bolte and Pauwels \cite{DBLP:journals/corr/abs-2006-02080} in the context of deep learning. They prove a convergence result for stochastic gradient descent, by observing that certain points of non-differentiability are ``artificial'' and can be avoided.
% =======
Mazza and Pagani \cite{MP21} study the correctness of automatic differentiation (AD) of purely \emph{deterministic} programs.
%The respective main results are related although they address orthogonal problems:
% \changed[lo]{Even though the problem they study is orthogonal to ours, there are interesting connections between their results and ours.}
%\changed[hp]{This problem is orthogonal to the work reported on here, but the results are interesting in combination.}
\changed[lo]{This problem is orthogonal to the work reported here, but it is interesting to combine their result with ours.}
Specifically, we show a.e.~differentiability whilst \cite{MP21} proves a.s.~correctness of AD on the \emph{differentiable} domain. Combining both results one concludes that for a deterministic program, AD returns a correct gradient a.s.~on the \emph{entire} domain. \changed[hp]{Going deeper into the comparison, Mazza and Pagani propose a notion of admissible primitive function strikingly similar to ours: given continuity, their condition 2 and our condition \ref{ass:boundary} are equivalent. On the other hand we require admissible functions to be differentiable, when they are merely continuous in \cite{MP21}. Finally, we conjecture that ``stable points'', a central notion in \cite{MP21}, have a clear counterpart within our framework: for a symbolic evaluation path arriving at $\sconfig \sval w U$, for $\sval$ a symbolic value, the points of $\interior U$ are precisely the stable points. }

Our work is also connected to recent developments in differentiable programming. Lee et al.~\cite{DBLP:journals/corr/abs-2006-06903} study the family of \emph{piecewise functions under analytic partition}, or just ``PAP'' functions. PAP functions are a well-behaved family of almost everywhere differentiable functions, which can be used to reason about automatic differentiation in recursion-free first-order programs. An interesting question is whether this can be extended to a more general language, and whether densities of almost surely terminating SPCF programs are PAP functions. (See also \cite{DBLP:conf/fossacs/HuotSV20,DBLP:journals/pacmpl/BrunelMP20} for work on differentiable programs \emph{without} conditionals.)

A similar class of functions is also introduced by Bolte and Pauwels \cite{DBLP:journals/corr/abs-2006-02080} in very recent work; this is used to prove a convergence result for stochastic gradient descent in deep learning. Whether this class of functions can be used to reason about probabilistic program densities remains to be explored.

Finally we note that \emph{open logical relations} \cite{openlogical} are a convenient proof technique for establishing properties of programs which hold at first order, such as almost everywhere differentiability. This approach remains to be investigated in this context, as the connection with probabilistic densities is not immediate.

\subsubsection*{Further Directions.}
This investigation would benefit from a denotational treatment; this is not currently possible as existing models of probabilistic programming do not account for differentiability.

In another direction, it is likely that we can generalise the main result by extending SPCF with recursive types, as in~\cite{DBLP:journals/pacmpl/VakarKS19}, and, more speculatively, first-class differential operators as in~\cite{DBLP:journals/tcs/EhrhardR03}. It would also be useful to add to SPCF a family of \emph{discrete} distributions, and more generally continuous-discrete mixtures, which have practical applications \cite{narayanan2020symbolic}.

Our work will have interesting implications in the correctness of various gradient-based inference algorithms, such as the recent discontinuous HMC \cite{10.1093/biomet/asz083} and reparameterisation gradient for non-differentiable models \cite{DBLP:conf/nips/0001YY18}.
But given the lack of guarantees of correctness properties available until now, these algorithms have not yet been developed in full generality, leaving many perspectives open.

\iffalse
\lo{- {Extend $\hbox{SPCF}_{\pop}$}: differential operator

- {Relax assumptions on primitive functions}; construct a CCC from one such system

- {Applications to gradient-based inference algorithms}: extend algorithms to recursive models; provide convergence and correctness guarantees (e.g.~via types)

- Models with mixed continuous and discrete random variables. Extend base types of SPCF with natural numbers.}

\dw{DISCUSSION: It is actually the case that $\mu(\trmax\setminus\trterm)=0$ is a strictly weaker (but less standard and more difficult to check?)  assumption than a.s.\ termination.

Can use same approach to prove that other properties hold a.e. if $\pop$ satisfies that property (instead of differentiability), e.g.\ smoothness.}
\fi

\paragraph{Acknowledgements.}
\changed[lo]{We thank Wonyeol Lee for spotting an error in an example.}

We gratefully acknowledge support from EPSRC and the Royal Society. %Hugo Paquet is supported by a Royal Society University Research Fellowship grant.

\bibliography{bib}

%%%%% To display Open Access text and logo, Please add below text and copy the 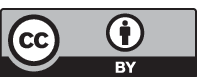 in the manuscript package %%%

\vfill

{\small\medskip\noindent{\bf Open Access} This chapter is licensed under the terms of the Creative Commons\break Attribution 4.0 International License (\url{http://creativecommons.org/licenses/by/4.0/}), which permits use, sharing, adaptation, distribution and reproduction in any medium or format, as long as you give appropriate credit to the original author(s) and the source, provide a link to the Creative Commons license and indicate if changes were made.}

{\small \spaceskip .28em plus .1em minus .1em The images or other third party material in this chapter are included in the chapter's Creative Commons license, unless indicated otherwise in a credit line to the material.~If material is not included in the chapter's Creative Commons license and your intended\break use is not permitted by statutory regulation or exceeds the permitted use, you will need to obtain permission directly from the copyright holder.}

\medskip\noindent\includegraphics{cc_by_4-0.eps}

\ifproceedings
\else
\appendix

% !TEX root = ./main.tex

\section{Supplementary Materials}

\subsection{More Details on \Cref{eg:primitive-functions}}
% \Cref{sec:diff}}
\label{apx:pop}
%\examplessatisfyassumption*
%\begin{proof}
We prove that $\pop_1$ and $\pop_2$ of \Cref{eg:primitive-functions} satisfy \Cref{ass:pop}.

  \begin{enumerate}
  \item
    % \dw{TODO: revert to just the interval $(-\infty,0]$, say that constant functions and projections are in $\pop_1$.}
    % \lo{@Dominik: You mean $[0, \infty)$, as per item 4 of \cref{ass:pop}?}
    % \cm{Why constant functions and projections in particular? Surely if it is closed under pairs and composition, then it satisfies condition 1.}

      Clearly, all constant and projection functions are analytic.
      Since analytic functions are total and differentiable (hence continuous) functions, they are \changed[lo]{Borel-measurable}.
      \lo{Reason: continuous functions are of course not measurable in general.}
      Therefore, due to the fact that analytic functions are closed under pairs and composition \cite[Prop.~2.4]{Conway},
      it remains to check whether the boundary of $f^{-1}([0,\infty))$ has measure zero.

    % Since analytic functions are
    % total (and hence partial) continuous (and hence measurable) functions,
    % includes all constant and projection functions,
    % closed under pairs and composition,
    % and by definition differentiable,
    % it remains to check
    % if the boundary of $f^{-1}([0,\infty))$ has measure zero.

    Since $\inv{f}(\interior{A}) \subseteq \interior{\inv{f}(A)}$ and
    $\closure{\inv{f}(A)} \subseteq \inv{f}{(\closure{A})}$
    for any subset $A \subseteq \Real$,
    we have
    $\boundary f^{-1}(A)
      \subseteq \inv{f}{(\closure{A})}\setminus \inv{f}{(\interior{A})}
      = \inv{f}(\boundary{A}) $.
    Letting $A = [0,\infty)$, we have
    $\boundary \inv{f}({[0,\infty)}) \subseteq \inv{f}(\{0\})$.
    Applying the well-known result \cite{mityagin2015zero} that
    the zero set of all analytic functions, except the zero function, has measure zero
    we conclude that $\boundary \inv{f}({[0,\infty)})$ has measure zero.
    It is easy to see that
    if $f$ is the zero function,
    $\boundary \inv{f}([0,\infty)) = \boundary \Real^n = \varnothing$ has measure zero.

    % First we show
    % $\boundary (f^{-1}[0,\infty)) \subseteq \inv{f}\{0\}$.
    % For any $x \in \boundary (f^{-1}[0,\infty)) =
    % \closure{f^{-1}[0,\infty)} \setminus \interior{f^{-1}[0,\infty)}$,
    % since $[0,\infty)$ is closed and $f$ is continuous,
    % $x \in f^{-1}[0,\infty)$, i.e.~$f(x) \geq 0$.
    % Assume for contradiction that $f(x) \not= 0$.
    % But this implies $f(x) > 0$,
    % i.e.~$x \in \inv{f}(\interior{[0,\infty)})
    % \subseteq \interior{\inv{f}[0,\infty)}$,
    % giving us a contradiction.
    % Hence
    % $\boundary (f^{-1}[0,\infty)) \subseteq \inv{f}\{0\}$.
    % Applying the well known fact \cite{mityagin2015zero}
    % that the zero set of all analytic functions has measure zero,
    % $\boundary (f^{-1}[0,\infty))$ must have zero measure.

  \item Clearly all constant functions and projections are in $\pop_2$.

    Note that the set of finite unions of (possibly unbounded) rectangles forms an algebra $\mathcal A$ (i.e.~a collection of subsets of $\Real^n$ closed under complements and finite unions, hence finite intersections).
    %\lo{Sorry - I meant to say finite set operations.}
  Then $\dom f=f^{-1}(-\infty,0]\cup f^{-1}(0,\infty)\in\mathcal A$.
 %\lo{Is $\dom f$ an open subset of $\Real^n$? I don't think it follows from the preceding that this is the case. Since $f$ is continuous, $f^{-1} (0, \infty)$ and $f^{-1} (-\infty, 1)$ are open subsets of $\dom f$, but not necessarily open subsets of $\Real^n$.}
  Besides, for every $U\in\mathcal A$, $\leb(\boundary U)=0$ (because $\leb(\boundary R) = 0$ for every rectangle $R$).

 \changed[lo]{It remains to prove that $\pop_2$ is closed under composition.
 Suppose that $f\from\Real^\ell\pto\Real\in\pop_2$ and $g_1,\ldots,g_\ell\from\Real^m\pto\Real\in\pop_2$.
 It is straightforward to see that $\dom{f \circ \langle g_i \rangle_{i = 1}^\ell}$ remains open (note that $\circ$ is relational composition).
 Moreover, for every $x$ in $\dom{f \circ \langle g_i \rangle_{i = 1}^\ell}$, $\langle g_i \rangle_{i = 1}^\ell (x)$ is an interior point in $\dom f$ which is open.
 It follows that the (standard) chain rule is applicable, and we have $f \circ \langle g_i \rangle_{i = 1}^\ell$ is differentiable at $x$.}
 %Clearly, $f \circ \prod_{i=1}^\ell g_i$ is differentiable by the chain rule.
 Besides, suppose $I$ is a (possibly unbounded) interval. By assumption there are $m\in\Nat$ and (potentially unbounded) intervals $I_{i,j}$, where $1\leq i\leq m$ and $1\leq j\leq\ell$ such that $f^{-1}(I)=\bigcup_{i=1}^m I_{i,1}\times\cdots\times I_{i,\ell}$. Observe that
    \begin{align*}
      %\left(f\circ\prod_{i=1}^\ell g_i\right)^{-1}(I)&=\left\{\tupf r\in\dom{g_1}\cap\cdots\cap\dom{g_\ell}\mid (g_1(\tupf r),\ldots,g_\ell(\tupf r))\in f^{-1}(I)\right\}\\
      & \left(f\circ \langle g_i \rangle_{i = 1}^\ell \right)^{-1}(I)\\
      &=\left\{\tupf r\in\dom{g_1}\cap\cdots\cap\dom{g_\ell}\mid (g_1(\tupf r),\ldots,g_\ell(\tupf r))\in f^{-1}(I)\right\}\\
      &=\bigcup_{i=1}^m\left\{\tupf r\in\dom{g_1}\cap\cdots\cap\dom{g_\ell}\mid g_1(\tupf r)\in I_{i,1}\land\cdots\land g_\ell(\tupf r)\in I_{i,\ell}\right\}\\
      &=\bigcup_{i=1}^m g_1^{-1}(I_{i,1})\cap\cdots\cap g^{-1}_\ell(I_{i,\ell})
    \end{align*}
    and this is in $\mathcal A$ because algebras are closed under finite unions and intersections.
  \end{enumerate}

\subsection{Supplementary Materials for \Cref{sec:stermval}}
\label{app:stermval}
\instval*
\begin{proof}[Proof sketch]
  First, suppose $\sterm$ is a symbolic value $\sval$. It is easy to prove inductively that for a sybolic values $\sval$ of type $\PCFReal$, $\conc\sval(\tupf r,\tupf s)$ is a real constant. Otherwise both $\sval$ and $\conc\sval(\tupf r,\tupf s)$ are abstractions.

  Conversely, suppose $\conc\sval(\tupf r,\tupf s)$ is a value. If it is an abstraction then so is $\sval$. Otherwise it is a real constant. By the definition of $\conc\cdot$ and a case inspection of $\sterm$ this is only possible if $\sterm$ is a symbolic value.
\end{proof}

The following substitution property holds for symbolic terms $\sterm$ and $\stermb$:
\begin{align}
  \label{eq:subst}
  \domnp\conc{\sterm[\stermb/y]}&\subseteq\domnp\conc\sterm\cap\domnp\conc\stermb\\
  \conc\sterm\args[\conc\stermb\args/y]&\equiv\conc{\sterm[\stermb/y]}(\tupf r,\tupf s)
\end{align}

\technicalprimitives*
\begin{proof}
  \begin{enumerate}
  \item %We identify $\domnp\seva\sval$ with a subset of $\Real^{m+n}$ and
  We prove the first part by induction on symbolic values.
    \begin{itemize}
    \item For $r'\in\Real$, $\seva{\PCF {r'}}$ is a constant function and $\seva{x_i}$ and $\seva{\alpha_j}$ are projections, which are in $\pop$ by assumption.
    \item Next, suppose $\sval$ is a symbolic value $\delay f(\sval_1,\ldots,\sval_\ell)$. By the inductive hypothesis, each $\seva{\sval_i}\in\pop$.
      It suffices to note that $\seva{\delay f(\sval_1,\ldots,\sval_\ell)}(\tupf r,\tupf s)=f(\seva{\sval_1}(\tupf r,\tupf s),\ldots,\seva{\sval_\ell}(\tupf r,\tupf s))$ for $(\tupf r,\tupf s)\in\domnp \delay f(\sval_1,\ldots,\sval_\ell)$. Therefore, $\seva{\delay f(\sval_1,\ldots,\sval_\ell)}$ because $\pop$ is assumed to be closed under composition.
    \item Finally, note that we do not need to consider abstractions because they do not have type $\PCFReal$.
    \end{itemize}
  \item
    \begin{itemize}
    \item     Note that $\conc{x_i}$ and $\conc{\alpha_j}$ are projection functions and $\conc{\PCF r}$ are constant functions, which are (everywhere) differentiable functions.
    \item Besides, $\delay f(\sval_1,\ldots,\sval_\ell)$ is a symbolic value and on its domain,
    \begin{align*}
      \conc{\delay f(\sval_1,\ldots,\sval_\ell)}=\bblambda (\tupf r,\tupf s)\ldotp\PCF{\seva{\delay f(\sval_1,\ldots,\sval_\ell)}(\tupf r,\tupf s)}
    \end{align*}
    $\seva{\delay f(\sval_1,\ldots,\sval_\ell)}$ is in $\pop$ by the first part and by assumption this implies differentiability.
    \item The function $\conc {\lambda y. \sterm}$ is obtained by composing $\conc \sterm$ with the function $\terms \to \terms : \termc \mapsto \lambda y. \termc$. The latter is easily seen to be differentiable: recall that $\terms =\bigcup_{n \in \mathbb N}
      \bigcup_{M} \{M\} \times \Real^n$, where $M$ ranges over skeleton terms with $n$ place-holders. On each component $\{M\} \times \Real^n$ the function acts as $(M, \tupf x) \mapsto (\lambda y. M, \tupf x)$; it is simply one of the coproduct injections, hence differentiable.
    \item  The cases of $\conc {\Y \sterm}$ and $\conc{\Score \sterm}$ are analogous.
    \item  The function $\conc {M N}$ is obtained by composing $\conc M \times \conc N$ with the diagonal map $\args \mapsto (\args, \args)$; both are differentiable.
    \item The cases of $\conc{\PCF f (\sterm_1, \dots, \sterm_\ell)}$ and $\conc {\Ifleq\stermc\sterm\stermb}$ are similar, using diagonal maps of different arities.
    \item  The function $\conc\Sample$ is a constant function, so it is differentiable. This covers all cases. \qedhere
    \end{itemize}
  \end{enumerate}
\end{proof}

\subsection{Supplementary Materials for \Cref{sec:sred}}
\label{app:sred}

\ssubcon*
\begin{proof}
  We prove all parts of the lemma simultaneously by structural induction on $\sterm$.
  \begin{itemize}[noitemsep]
  \item First, note that for every $\scon$ and $\sredex$ all of the following holds
\[
      x_i\not\equiv\scon[\sredex]\qquad\alpha_j\not\equiv\scon[\sredex]\qquad y\not\equiv\scon[\sredex]\qquad\PCF r\not\equiv\scon[\sredex]\qquad\lambda y\ldotp\sterm\not\equiv\scon[\sredex]
\]
    and the left hand sides are symbolic values.
    % Furthermore, $x_i$, $\alpha_j$, $\PCF r$ and $\lambda y\ldotp\sterm$ are symbolic values, and $\delay f(\sterm_1,\ldots,\sterm_\ell)$ is a symbolic value only if all $\sterm_1$
  \item Note that $\domnp\conc{\delay f(\sterm_1,\ldots,\sterm_\ell)}=\emptyset$ unless it is a symbolic value. Besides, for every $\scon$ and $\sredex$, $\delay f(\sterm_1,\ldots,\sterm_\ell)\not\equiv\scon[\sredex]$.
  \item If $\sterm\equiv\stermb_1\,\stermb_2$ then $\sterm$ is not a symbolic value.

    Suppose that $\stermb_1$ is an abstraction. If $\stermb_2$ is a symbolic value then $\sterm$ is a symbolic redex and by the first part of the inductive hypothesis, $\sterm\equiv\scon[\sredex]$ implies $\scon\equiv[]$ and $\sredex\equiv\sterm$.

    If $\sterm_2$ is not a symbolic value then $\sterm$ is not a symbolic redex.
    Note that $\sterm\equiv\scon[\sredex]$ implies $\scon\equiv\stermb_1\,\scon'$.
    By the second part of the inductive hypothesis $\scon'$ and $\sredex$ are unique if they exist.
    Besides, due to $\domnp\conc\sterm\subseteq\domnp\conc{\stermb_2}$ and the third part of the inductive hypothesis, such $\scon'$ and $\sredex$ exist if $\domnp\conc\sterm\neq\emptyset$.

    If $\stermb_1$ is not an abstraction it cannot be a symbolic value and $\sterm\equiv\scon[\sredex]$ implies $\scon\equiv\scon'\,\stermb_2$.
    By the second part of the inductive hypothesis, $\scon'$ and $\sredex$ are unique if they exist.
    Besides, because of $\domnp\conc\sterm\subseteq\domnp\conc{\sterm_1}$ and the third part of the inductive hypothesis, such $\scon'$ and $\sredex$ exists if $\domnp\conc\sterm\neq\emptyset$.
  \item Next, suppose $\sterm\equiv\PCF f(\stermb_1,\ldots,\stermb_\ell)$, which is clearly not a symbolic value.

    If all $\stermb_i$ are symbolic values, $\sterm$ is a symbolic redex and by the first part of the inductive hypothesis, $\scon\equiv[]$ and $\sredex\equiv\sterm$ are unique such that $\sterm\equiv\scon[\sredex]$.

    Otherwise, suppose $i$ is minimal such that $\stermb_i$ is not a symbolic value. Clearly, $\sterm\equiv\scon[\sredex]$ implies $\scon\equiv\PCF f(\stermb_1,\ldots,\stermb_{i-1},\scon',\stermb_{i+1},\ldots,\stermb_\ell)$ and $\stermb_i\equiv\scon'[\sredex]$. By the second part of the inductive hypothesis $\scon'$ and $\sredex$ are unique if they exist.
    Besides due to $\domnp\conc\sterm\subseteq\domnp\conc{\stermb_i}$ and the third part of the inductive hypothesis such $\scon'$ and $\sredex$ exist if $\domnp\conc\sterm\neq\emptyset$.
  \item If $\sterm\equiv\Y\stermb$, $\sterm\equiv\Sample$ or $\sterm\equiv\Score\stermb$, which are not symbolic values, then this is obvious (using the inductive hypothesis).
  \item Finally, suppose $\sterm\equiv(\Ifleq\stermc{\stermb_1}{\stermb_2})\equiv\scon_1[\sredex_1]\equiv\scon_2[\sredex_2]$.
    If $\stermc$ is a symbolic value then $\sterm$ is a symbolic redex and by the first part of the inductive hypothesis, $\sterm\equiv\scon[\sredex]$ implies $\scon\equiv[]$ and $\sredex\equiv\sterm$.

    If $\stermc$ is not a symbolic value then $\sterm\equiv\scon[\sredex]$ implies $\scon\equiv\Ifleq{\scon'}{\stermb_1}{\stermb_2}$ and $\stermc\equiv\scon'[\sredex]$.
    By the second part of the inductive $\scon'$ and $\sredex$ are unique if they exist. Due to $\domnp\conc\sterm\subseteq\domnp\conc\stermc$ and the third part of the inductive hypothesis such $\scon'$ and $\sredex$ exist provided that $\domnp\conc\sterm\neq\emptyset$.
  \end{itemize}
\end{proof}

We obtain that for all $\scon$, $\sterm$ and $\args\in\domnp\conc{\scon[\sterm]}$:
\begin{align}
  \label{eq:substcon}
  \conc\scon\args\left[\conc\sterm\args\right]\equiv\conc{\scon[\sterm]}\args
\end{align}

\reductionvalidconfigs*
\begin{proof}
  Suppose that $\sterm\equiv\scon[\sredex]$ and $\stermb\equiv\scon[\scontra]$.
  Because of \cref{lem:seva} and the assumption that the functions in $\pop$ are measurable, $U'$ is measurable again. Furthermore, the rules ensure that $U'\subseteq\domnp\conc{\scontra}$. (For the first rule this is because of the Substitution \cref{eq:subst}.)
  By the Substitution \cref{eq:substcon}, $U'\subseteq\domnp\conc{\scon[\sredex]}\cap\domnp\conc\scontra\subseteq\domnp\conc\scon\cap\domnp\conc\scontra=\domnp\conc{\scon[\scontra]}$.
\end{proof}

\basicproperties*
\begin{proof}[Proof sketch]
  By subject construction (\cref{lem:basic0}), there is at most one $\scon$ and $\sredex$ such that $\sterm\equiv\scon[\sredex]$.
  An inspection of the rules shows that $U'$ such that $\sconfig\sredex\sweight U\sred\sconfig\scontra{\sweight'}{U'}$ is unique unless $\sredex$ is a conditional, in which case there are two distinct such $U'$. Hence there are at most two distinct $U'$ such that $\sconfig{\scon[\sredex]}\sweight U\sred\sconfig\stermb{\sweight'}{U'}$. The first part follows by induction on the number of reduction steps.

  For the other two parts note that if $\sconfig\sterm\sweight U\sred\sconfig\stermb{\sweight'}{U'}$ either $U'\subseteq U$ or $U'=\{(\tupf r,\tupf s\concat[r'])\mid\args\in U\land r'\in(0,1)\}$.
  In particular, if $\sconfig\sterm\sweight U\sred^*\sconfig\stermb{\sweight'}{U'}$, $U'\subseteq\{(\tupf r,\trace\concat\traceb)\mid\args\in U\land\traceb\in\traces_n\}$ for some $n\in\Nat$.

  % Besides, if $\sconfig\sterm\sweight U\sred\sconfig{\sterm_1}{\sweight_1}{U_1}$ and $\sconfig\sterm\sweight U\sred\sconfig{\sterm_2}{\sweight_2}{U_2}$ then $U_1\cap U_2=\emptyset$ (if $\sredex$ is a conditional) or $\sterm_1\equiv\sterm_2$, $\sweight_1=\sweight_2$ and $U_1=U_2$ (otherwise).

  By the discussion for the first part of the lemma, if $\config\sterm\sweight U\sred^*\sconfig{\stermb_i}{\sweight_i}{U_i}$ for $i\in\{1,2\}$, then \emph{w.l.o.g.},  either
  \begin{enumerate}
  \item $\sconfig{\stermb_1}{\sweight_1}{U_1}\sred^*\sconfig{\stermb_2}{\sweight_2}{U_2}$ or
  \item $\sconfig\sterm\sweight U\sred^*\sconfig{\scon[\Ifleq\stermc{\sterm_1}{\sterm_2}]}{\sweight'}{U'}$ and
    \[
      \begin{tikzcd}[row sep=-0.8em,column sep=1.2em]
        & \sconfig{\scon[\sterm_1]}{\sweight'}{U'\cap\seva\stermc^{-1}(-\infty,0]} \\[.2em]
        & \sred^*\sconfig{\stermb_1}{\sweight_1}{U_1}
        \\
        \sconfig{\scon[\Ifleq\stermc{\sterm_1}{\sterm_2}]}{\sweight'}{U'}
        \arrow[phantom]{uur}{{\rotatebox[origin=c]{+25}{$\sred^*$}}}
        \arrow[phantom]{dr}{\rotatebox[origin=c]{-25}{$\sred^*$}}
        &  \\
        & \sconfig{\scon[\sterm_2]}{\sweight'}{U'\cap\seva\stermc^{-1}(0,\infty)} \\[.2em]
        & \sred^*\sconfig{\stermb_2}{\sweight_2}{U_2}
      \end{tikzcd}
    \]
    \iffalse
    \[
      \begin{tikzcd}[row sep=-0.8em,column sep=1.2em]
        & \sconfig{\scon[\sterm_1]}{\sweight'}{U'\cap\seva\stermc^{-1}(-\infty,0]}\sred^*\sconfig{\stermb_1}{\sweight_1}{U_1}
        \\
        \sconfig{\scon[\Ifleq\stermc{\sterm_1}{\sterm_2}]}{\sweight'}{U'}
        \arrow[phantom]{ur}[description]{\raisebox{1em}{\rotatebox[origin=c]{+25}{$\sred^*$}}}
        \arrow[phantom]{dr}{\rotatebox[origin=c]{-25}{$\sred^*$}}
        &  \\
        & \sconfig{\scon[\sterm_2]}{\sweight'}{U'\cap\seva\stermc^{-1}(0,\infty)}\sred^*\sconfig{\stermb_2}{\sweight_2}{U_2}
      \end{tikzcd}
    \]
    \fi
    for suitable $\stermb$, $\scon$, $\stermc$, $\sterm_1$, $\sterm_2$, $\sweight'$ and $U'$.
  \end{enumerate}
  In the latter case in particular $U_1\cap U_2=\emptyset$ holds.

  This implies the second and third part of the lemma.
\end{proof}

\begin{lemma}
  \label{lem:symcon1aux}
  Suppose $\sconfig\sredex\sweight U\sred\sconfig{\scontra}{\sweight'} {U'}$, $\args\in U$ and $\argsc\in U'$. Then $\config{\conc\sredex\args}{\sweight\args}\trace\red\config{\conc\scontra\argsc}{\sweight'\argsc}{\trace\concat\traceb}$.
\end{lemma}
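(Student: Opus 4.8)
The plan is to prove the lemma by a case analysis on the symbolic redex contraction rule witnessing $\sconfig\sredex\sweight U\sred\sconfig\scontra{\sweight'}{U'}$; there are seven such rules, each deliberately designed as a symbolic counterpart of exactly one concrete contraction from \cref{fig:opsem}. Two facts apply uniformly and should be noted up front. First, by \cref{lem:instred} the term $\conc\sredex\args$ is a concrete redex, so the relevant concrete rule is applicable. Second, $\args\in U\subseteq\Real^m\times\traces_n$ and $\argsc\in U'$; for every rule other than the one for $\Sample$ one has $U'\subseteq\Real^m\times\traces_n$ with the same $n$, so the suffix $\traceb$ is forced to be $\emptytrace$, which matches the fact that those concrete contractions leave the trace unchanged, whereas for the $\Sample$ rule $U'\subseteq\Real^m\times\traces_{n+1}$, forcing $\traceb$ to be a one-element list $\traceseq{s'}$ with $s'\in(0,1)$.

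Then I would dispatch the cases. For $(\lambda y\ldotp\sterm)\,\sval$, we have $\conc\sredex\args\equiv(\lambda y\ldotp\conc\sterm\args)\,(\conc\sval\args)$ with $\conc\sval\args$ a value by \cref{lem:instval}, so the concrete $\beta$-rule fires to give $\conc\sterm\args[\conc\sval\args/y]$, which is $\conc{\sterm[\sval/y]}\args$ by the substitution property \cref{eq:subst}; weight and trace are unchanged, matching $\sweight'=\sweight$ and $\traceb=\emptytrace$. The $\Y$ case is identical in spirit, again appealing to \cref{eq:subst} for the unfolding. For $\PCF f(\sval_1,\ldots,\sval_\ell)$, each $\conc{\sval_i}\args$ is a real constant $\PCF{r_i}$ (by \cref{lem:instval} and the definitions of \cref{fig:definst}), and $\argsc=\args\in U'$ means precisely $\args\in\domnp\seva{\delay f(\sval_1,\ldots,\sval_\ell)}$, i.e.\ $(r_1,\ldots,r_\ell)\in\dom f$; hence the concrete rule contracts to $\PCF{f(r_1,\ldots,r_\ell)}$, which is exactly $\conc{\delay f(\sval_1,\ldots,\sval_\ell)}\argsc$ by \cref{fig:definst}. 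For the two conditional rules, $\conc\sval\args\equiv\PCF r$ with $r=\seva\sval\args$ (a symbolic value of type $\PCFReal$ instantiates to a constant), and membership $\argsc\in U'$ forces $r\leq0$ (resp.\ $r>0$), so the concrete conditional selects the branch ending on $\conc\sterm\args$ (resp.\ $\conc\stermb\args$), with weight and trace unchanged. For $\Sample$, $\conc\sredex\args\equiv\Sample$, the concrete rule gives $\config\Sample{\sweight\args}{\tupf s}\red\config{\PCF{s'}}{\sweight\args}{\tupf s\concat\traceseq{s'}}$, and indeed $\conc{\alpha_{n+1}}\argsc=\PCF{s'}$ and $\sweight'\argsc=\sweight\args$ by the definitions attached to the symbolic $\Sample$ rule. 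For $\Score{\sval}$, $\conc\sredex\args\equiv\Score{\PCF r}$ with $r=\seva\sval\args$, membership in $U'$ forces $r\geq0$, so $\config{\Score{\PCF r}}{\sweight\args}{\tupf s}\red\config{\PCF r}{r\cdot\sweight\args}{\tupf s}$, and here $\conc\scontra\argsc=\conc\sval\argsc=\PCF r$ and $(\seva\sval\cdot\sweight)\argsc=r\cdot\sweight\args$, as required.

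The argument is essentially mechanical, so I do not anticipate a real obstacle. The only points needing a little care are: (i) extracting from $\args\in U$ and $\argsc\in U'$ the exact shape of $\traceb$ (empty, or a singleton in the $\Sample$ case) so that the trace bookkeeping lines up with the concrete rule; (ii) the uses of the substitution lemma \cref{eq:subst} to rewrite $\conc\sterm\args[\,\cdot\,/y]$ as $\conc{\sterm[\,\cdot\,/y]}\args$ in the $\beta$- and $\Y$-cases; and (iii) the repeated appeal to \cref{lem:instval}, together with the fact that symbolic values of ground type instantiate (via $\seva{\cdot}$) to real constants, to guarantee that the instantiated immediate subterms are in the syntactic form demanded by the concrete redex rule, and — for the primitive and conditional cases — that the filtering of $U'$ by a domain or a sign condition is exactly the side condition of the corresponding concrete rule.
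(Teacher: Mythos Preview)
The proposal is correct and takes essentially the same approach as the paper: a case analysis on the seven symbolic redex contraction rules, using \cref{lem:instval} to ensure instantiated symbolic values are values, the substitution property \cref{eq:subst} for the $\beta$- and $\Y$-cases, and reading the concrete side conditions (domain membership, sign of the guard or score argument) directly from membership in $U'$. Your upfront observation that $\traceb=\emptytrace$ in all but the $\Sample$ case (by a dimension count on $U$ and $U'$) is exactly how the paper's proof establishes $\args=\argsc$ in those cases.
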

\begin{proof}
  We prove the lemma by case analysis on the symbolic redex contractions.
  \begin{itemize}
    \item
      First, suppose
      $\sconfig\Sample\sweight U\sred\sconfig{\alpha_{n+1}}{\sweight'}{U'}$. Note that $s'=[r']$ for some $0<r'<1$.
      Then
      $\config\Sample{\sweight(\tupf r,\tupf s)}{\tupf s}\red
      \config{\PCF{r'}}{\sweight\args}{\tupf s\concat[r']}$ and $\sweight\args=\sweight'\argsc$ and
      $\conc{\alpha_{n+1}}\argsc\equiv \PCF{r'}$.

    \item
      Suppose $\sconfig{\Score\sval}\sweight U\sred
      \sconfig\sval{\sweight\cdot\seva\sval}{U\cap\seva\sval^{-1}[0,\infty)}$.
      Then $\args=\argsc\in U\cap\seva\sval^{-1}[0,\infty)$.
      Hence, there must exist $r'\geq 0$ such that $\conc\sval\args\equiv\conc\sval\argsc\equiv\PCF{r'}$.
      Besides,
      $\config{\Score{\conc\sval(\tupf r,\tupf s)}}{\sweight(\tupf r,\tupf s)}{\tupf s}\red
        \config{\PCF{r'}}{\sweight\args\cdot r'}{\tupf s}
      $
      and $(\sweight\cdot\seva\sval)\argsc=\sweight\args\cdot r'$.
    \item
      Suppose $\sconfig{\Ifleq\sval\sterm\stermb}\sweight U\sred\sconfig\sterm\sweight{U\cap\seva\sval^{-1}(-\infty,0]}$. Note that $\args=\argsc\in U\cap\seva\sval^{-1}(-\infty,0]$. Thus, $\seva\sval\args\leq 0$.
      Therefore,
      \begin{align*}
        &\config{\Ifleq{\PCF{\seva\sval(\tupf r,\tupf s)}}{\conc\sterm(\tupf r,\tupf s)}{\conc\stermb(\tupf r,\tupf s)}}{\sweight(\tupf r,\tupf s)}{\tupf s}\\
        &\red\config{\conc\sterm(\tupf r,\tupf s)}{\sweight(\tupf r,\tupf s)}{\tupf s}
      \end{align*}
    \item
      Similar for the else-branch.
    \item
      Suppose
      $\sconfig{(\lambda y\ldotp\sterm)\,\sval}\sweight U\red
      \sconfig{\sterm[\sval/y]}\sweight U$. Then $\args=\argsc\in U$.
      By \cref{lem:instval}, $\conc\sval\args$ is a value. Hence,
      \begin{align*}
        &\config{(\lambda y\ldotp\conc\sterm(\tupf r,\tupf s))\,\conc\sval(\tupf r,\tupf s)}{\sweight(\tupf r,\tupf s)}{\tupf s}\\
        &\red\config{(\conc\sterm(\tupf r,\tupf s))[\conc\sval(\tupf r,\tupf s)/y]}{\sweight(\tupf r,\tupf s)}{\tupf s}.
      \end{align*}
      Besides, by \cref{eq:subst},
      $(\conc\sterm(\tupf r,\tupf s))[\conc\sval(\tupf r,\tupf s)/y]\equiv
      \conc{\sterm[\sval/y]}\argsc$.
    \item Suppose $\sconfig{\PCF f(\sval_1,\ldots,\sval_\ell)}\sweight U\sred
      \sconfig{\delay f(\sval_1,\ldots,\sval_\ell)}\sweight{U\cap\domnp\seva{\PCF f(\sval_1,\ldots,\sval_\ell)}}$. Then $\args=\argsc\in U\cap\domnp\seva{\PCF f(\sval_1,\ldots,\sval_\ell)}$. In particular, $\args\in\domnp\seva{\sval_i}$ for each $1\leq i\leq\ell$ and $(\seva{\sval_1}\args,\ldots\seva{\sval_\ell}\args)\in\dom f$. Therefore,
      \begin{align*}
        &\config{\PCF f(\PCF{\seva{\sval_1}(\tupf r,\tupf s)},\ldots, \PCF{\seva{\sval_\ell}\args})}{\sweight\args}\trace\\
        &\red \config{\PCF{f(\seva{\sval_1}\args,\ldots,\seva{\sval_\ell}\args)}}{\sweight\args}\trace
      \end{align*}
      because $\PCF{ f(\seva{\sval_1}(\tupf r,\tupf s),\ldots, \seva{\sval_\ell}(\tupf r,\tupf s))}\equiv\conc{\delay f(\sval_1,\ldots,\sval_\ell)}\argsc$.

    \item Finally, suppose
      $\sconfig{\Y (\lambda x\ldotp \sterm)}\sweight U\sred
      \sconfig{\lambda y\ldotp \sterm\,[\Y (\lambda x\ldotp\sterm)/x]\, y}\sweight U$.
      Then $\args=\argsc\in U$. It holds
      \begin{align*}
        &\config{\Y{(\lambda x\ldotp\conc{\sterm}(\tupf r,\tupf s))}}
          {\sweight(\tupf r,\tupf s)}
          {\tupf s} \\
        &\red
        \config{
          \lambda y\ldotp
          \conc{\sterm}(\tupf r,\tupf s)
          [\Y{(\lambda x\ldotp\conc{\sterm}(\tupf r,\tupf s))}/x]\,y
        }{\sweight(\tupf r,\tupf s)}{\tupf s}
      \end{align*}
      and by the Substitution \cref{eq:subst},
\[
        \lambda y\ldotp
        \conc{\sterm}(\tupf r,\tupf s)
        [\Y{(\lambda x\ldotp\conc{\sterm}(\tupf r,\tupf s))}/x]\,y\equiv        \conc{
          \lambda y\ldotp\sterm[\Y{(\lambda x.\sterm)}/x]\,y
        }\argsc
\]
  \end{itemize}
\end{proof}

% \lo{In the preceding, $\tupf s$ and ${\tupf s}'$ are (implicitly) universally quantified, which is incorrect. See the following reformulation.}

% \dw{Are you happy with the lemma now?} \lo{Yes - thanks.}
% \begin{lemma}
% \changed[lo]{%
%   \label{lem:symcon1aux2}
%   Suppose $\sconfig\sredex\sweight U\sred\sconfig{\scontra}{\sweight'} {U'}$ and $(\tupf r, \tupf{s}'')\in U'$. Then, there exists $\tupf{s}' \in {\traces_0 \cup \traces_1}$ such that $\tupf{s}'' = \tupf s \concat {\tupf s}'$, and
%   \[
%   \config{\conc\sredex(\tupf r,\tupf s)}{\sweight(\tupf r,\tupf s)}{\tupf s}\red\config{\conc{\scontra}(\tupf r,\tupf s\concat \tupf{s'})}{\sweight'(\tupf r,\tupf s\concat \tupf{s'})}{\tupf s\concat \tupf{s'}}.
%   \]}
% \end{lemma}

\begin{lemma}
  \label{lem:consym1aux}
  Suppose
  $\conc\sterm(\tupf r,\tupf s)\equiv\redexa$,
  $(\tupf r,\tupf s)\in U$,
  $\sweight\args=w$ and % $\config\redexa w {\tupf s}\inst{\tupf r\concat\tupf s}\sconfig N\sweight U$ and
  $\config\redexa w {\tupf s}\red\config\contra{w'}{\tupf {s'}}$. Then there exists $\sconfig\sterm\sweight U\sred\sconfig{\scontra}{\sweight'}{U'}$ such that %$\conc{\scontra}(\tupf r,\tupf s)\equiv\contra$
  $\conc{\scontra}(\tupf r,\tupf{s'})\equiv\contra$,
  $\sweight'\argsb=w'$
  and $(\tupf r,\tupf{s'})\in U'$.
\end{lemma}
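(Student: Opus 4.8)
The plan is to prove \cref{lem:consym1aux} by a case analysis on the symbolic term $\sterm$, dual to the one carried out for \cref{lem:symcon1aux}. The first step is to observe that $\sterm$ must be a symbolic redex whose shape matches $\redexa$. Since $\conc\sterm\args\equiv\redexa$ is a concrete redex it is not a value, so by \cref{lem:instval} the term $\sterm$ is not a symbolic value. Running through the grammar of symbolic terms and using \cref{lem:instval} to recognise which immediate subterms instantiate to values --- a ground-type symbolic subterm instantiates to a real constant, and an arrow-type one to a $\lambda$-abstraction, exactly when it is a symbolic value --- one checks that $\conc\sterm\args$ can be a concrete redex only when $\sterm$ is already a symbolic redex, i.e.\ $\sterm\equiv(\lambda y\ldotp\stermb)\,\sval$, $\sterm\equiv\PCF f(\sval_1,\ldots,\sval_\ell)$, $\sterm\equiv\Y(\lambda y\ldotp\stermb)$, $\sterm\equiv\Ifleq\sval{\stermb_1}{\stermb_2}$, $\sterm\equiv\Sample$, or $\sterm\equiv\Score\sval$. (For instance, if $\sterm\equiv\PCF f(\sterm_1,\ldots,\sterm_\ell)$ with some $\sterm_i$ not a symbolic value, then $\conc{\sterm_i}\args$ is not a value and $\conc\sterm\args$ is not a redex.)

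For each of these shapes the required symbolic contraction $\sconfig\sterm\sweight U\sred\sconfig\scontra{\sweight'}{U'}$ is read off directly from the given reduction $\config\redexa w{\tupf s}\red\config\contra{w'}{\tupf{s'}}$. If $\sterm\equiv(\lambda y\ldotp\stermb)\,\sval$ we take $\sconfig\sterm\sweight U\sred\sconfig{\stermb[\sval/y]}\sweight U$; then $\tupf{s'}=\tupf s$, $\sweight'=\sweight$ so $\sweight'\argsb=\sweight\args=w=w'$, and $\conc{\stermb[\sval/y]}\args\equiv(\conc\stermb\args)[\conc\sval\args/y]\equiv\contra$ by the substitution property \cref{eq:subst}; the $\Y$-case is identical (with $\sval$ replaced by $\Y(\lambda y\ldotp\stermb)$). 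If $\sterm\equiv\PCF f(\sval_1,\ldots,\sval_\ell)$, the fact that the concrete step does \emph{not} go to $\Fail$ gives $(\seva{\sval_1}\args,\ldots,\seva{\sval_\ell}\args)\in\dom f$, hence $\args\in\domnp\seva{\delay f(\sval_1,\ldots,\sval_\ell)}\cap U=U'$, and the delay-contraction produces $\conc{\delay f(\sval_1,\ldots,\sval_\ell)}\args\equiv\PCF{f(\seva{\sval_1}\args,\ldots,\seva{\sval_\ell}\args)}\equiv\contra$ with weight unchanged. The cases $\sterm\equiv\Ifleq\sval{\stermb_1}{\stermb_2}$ and $\sterm\equiv\Score\sval$ are analogous: once more because the concrete step went through, $\seva\sval\args$ has the sign dictated by the concrete rule used, so $\args$ lies in the correspondingly filtered third component ($\seva\sval^{-1}(-\infty,0]\cap U$, $\seva\sval^{-1}(0,\infty)\cap U$, or $\seva\sval^{-1}[0,\infty)\cap U$), and the value and weight updates match, using $(\sweight\cdot\seva\sval)\args=\sweight\args\cdot\seva\sval\args$ in the scoring case.

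The only case with genuine content is $\sterm\equiv\Sample$: here the concrete step is $\config\Sample w{\tupf s}\red\config{\PCF{r'}}w{\tupf s\concat\traceseq{r'}}$ for some $r'\in(0,1)$, so $\tupf{s'}=\tupf s\concat\traceseq{r'}$ and $w'=w$, and we take the symbolic $\Sample$-contraction with $U'=\{(\tupf r,\tupf s\concat\traceseq{t})\mid\args\in U,\ t\in(0,1)\}$ and $\sweight'(\tupf r,\tupf s\concat\traceseq{t})\defeq\sweight(\tupf r,\tupf s)$; then $\argsb=(\tupf r,\tupf s\concat\traceseq{r'})\in U'$, $\conc{\alpha_{n+1}}\argsb=\PCF{r'}\equiv\contra$, and $\sweight'\argsb=\sweight\args=w=w'$. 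In every case the exhibited $U'$ satisfies $\argsb\in U'\subseteq\domnp\conc\scontra$, so that $\conc\scontra\argsb$ is defined and, by \cref{lem:pres}, $\sconfig\scontra{\sweight'}{U'}$ is again a symbolic configuration. I do not anticipate a real obstacle: the content is purely the bookkeeping that the filters applied to the third component and the multiplicative update of $\sweight$ faithfully track each concrete contraction, with \cref{lem:instval} (to pin down $\sterm$ and its value-typed subterms) and the substitution lemma \cref{eq:subst} doing the essential work.
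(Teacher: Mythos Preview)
Your proposal is correct and follows essentially the same approach as the paper: a case analysis on the shape of the redex, using \cref{lem:instval} to identify $\sterm$ as a symbolic redex of the matching form and the substitution property \cref{eq:subst} for the $\beta$- and $\Y$-cases. The only cosmetic difference is that you front-load the argument that $\sterm$ must be a symbolic redex and then split on its shape, whereas the paper splits on the concrete contraction rule and infers the shape of $\sterm$ in each case; the content is the same.
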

\begin{proof}
  We prove the lemma by a case distinction on the redex contractions.
  \begin{itemize}
    \item
      First, suppose
      $\config\Sample w{\tupf s}\red\config{\PCF {r}} w {\tupf s\concat[r]}$,
      where $0<r<1$ and
      $(\tupf r,\tupf s)\in U\subseteq\Real^m\times\traces_n$.
      Then $\sconfig\Sample\sweight U\sred\sconfig {\alpha_{n+1}}{\sweight'} {U'}$,
      where
      $U'=\{(\tupf r,\tupf s\concat[r'])\mid\args\in U\land 0<r'<1\}\ni(\tupf r,\tupf s\concat[r])$ and
      $\sweight'(\tupf r,\trace\concat[r])=\sweight\args=w$.
      By definition,
      $\conc{\alpha_{n+1}}(\tupf r,\tupf s\concat[r])\equiv\PCF{r}$.

    \item
      Suppose
      $\config{\Score{\PCF{r'}}} w {\tupf s}\red\config{\PCF{r'}}{w'}{\tupf s}$,
      where $r'\geq 0$, $(\tupf r,\tupf s)\in U$.
      Then $\sterm\equiv\Score\sval$ for some $\sval$ satisfying
      $\conc\sval(\tupf r,\tupf s)\equiv\PCF{r'}$.
      Hence,
      $\seva\sval(\tupf r,\tupf s) = r' \geq 0$ and
      $\sconfig{\Score{\sval}}\sweight U\sred\sconfig{\sval}{\sweight\cdot\seva\sval} {U'}$ where
      $U' := {U\cap\seva\sval^{-1}[0,\infty)} \ni (\tupf r,\tupf s)$ and $(\sweight\cdot\seva\sval)\args=w\times r'$.

    \item
      Suppose
      $\config{\Ifleq{\PCF{r'}} M N} w{\tupf s}\red\config {M} w{\tupf s}$
      because $r'\leq 0$ and $(\tupf r,\tupf s)\in U$.
      Suppose that $\sval,\sterm,\stermb$ are such that
      $\conc\sval(\tupf r,\tupf s)\equiv\PCF{r'}$,
      $\conc\sterm(\tupf r,\tupf s)\equiv M$ and
      $\conc\stermb(\tupf r,\tupf s)\equiv N$. Then
      $\seva\sval(\tupf r,\tupf s)=r'\leq 0$ and
      $\sconfig{\Ifleq\sval\sterm\stermb}\sweight U\sred\sconfig\sterm\sweight{U'}$, where $U'\defeq (U\cap\seva\sval^{-1}(-\infty,0])\ni(\tupf r,\tupf s)$ by assumption.

    \item
      Similar for the else-branch.

    \item
      Suppose
      $\config{\PCF f(r'_1,\ldots,r'_\ell)} w{\tupf s}\red
      \config{\PCF{f(r'_1,\ldots,r'_\ell)}} w{\tupf s}$,
      where $(r'_1,\ldots,r'_\ell)\in\dom f$.
      Suppose further that $\sval_1,\ldots,\sval_\ell$ are such that
      for each $1\leq i\leq\ell$,
      $\conc{\sval_i}(\tupf r,\tupf s)\equiv\PCF{r'_i}$.
      Since
      $\domnp\seva{\PCF f(\sval_1,\ldots,\sval_\ell)} =
      \domnp\conc{\PCF f(\sval_1,\ldots,\sval_\ell)}$, then we have
      $\sconfig{\PCF f(\sval_1,\ldots,\sval_\ell)}\sweight U
      \sred
      \sconfig{\delay f(\sval_1,\ldots,\sval_\ell)}\sweight {U'}$,
      where $U'\defeq(U\cap\domnp\seva{\PCF f(\sval_1,\ldots,\sval_\ell)})
      \ni(\tupf r,\tupf s)$.

    \item
      Suppose
      $\config{(\lambda y\ldotp M)\,V}w{\tupf s}\red
      \config{M[V/y]}w{\tupf s}$ and
      $(\tupf r,\tupf s)\in U$.
      Let $\sterm,\stermb$ be such that
      $\conc\sterm(\tupf r,\tupf s)\equiv M$ and
      $\conc\stermb(\tupf r,\tupf s)\equiv V$.
      By the Substitution \cref{eq:subst},
      $\conc{\sterm[\stermb/y]}(\tupf r,\tupf s)\equiv M[N/y]$ and
      by \cref{lem:instval}, $\stermb$ must be a symbolic value.
      Thus,
      $\sconfig{(\lambda y\ldotp\sterm)\,\stermb}\sweight U\sred
      \sconfig{\sterm[\stermb/y]}\sweight U$.

    \item
      Suppose
      $\config{\Y{(\lambda y.\terma)}} w{\tupf s}\red
      \config{\lambda z.\terma[\Y{(\lambda y.\terma)}/y]z} w{\tupf s}$ and
      $(\tupf r,\tupf s)\in U$.
      Let $\sterm$ be such that
      $\conc\sterm (\tupf r,\tupf s) \equiv \terma$,
      then by Substitution \cref{eq:subst}, we have
      $\conc{
        \lambda z.\sterm[\Y{(\lambda y.\sterm)}/y]z
      }(\tupf r,\tupf s) \equiv
      \lambda z.\terma[\Y{(\lambda y.\terma)}/y]z
      $. Thus we conclude that
      $\sconfig{\Y{(\lambda y.\sterm)}}\sweight U\sred
      \sconfig{\lambda z.\sterm[\Y{(\lambda y.\sterm)}/y]z}\sweight U$.
  \end{itemize}
\end{proof}

\begin{definition}
  \label{def:conccon}
We extend $\conc\cdot$ to symbolic contexts with domains
\begin{align*}
  \domnp\conc{[]}&\defeq\Real^m\times\traces_n\\
  \domnp\conc{\scon\,\sterm}&\defeq\domnp\conc{(\lambda y\ldotp\sterm)\,\scon}\defeq\domnp\conc\scon\cap\domnp\conc\sterm\hspace{3em} \\
  \omit\rlap{$\domnp\conc{\PCF f(\sval_1,\ldots,\sval_{\ell-1},\scon,\sterm_{\ell+1},\ldots,\sterm_n)}$}\\
  \omit\rlap{$\defeq\domnp\conc{\sval_1}\cap\cdots\cap\domnp\conc{\sval_{\ell-1}}\cap\domnp\conc\scon\cap\domnp\conc{\sterm_{\ell+1}}\cap\cdots\cap\domnp\conc{\sterm_n}$}\\
  \domnp\conc{\Y\scon}&\defeq\domnp\conc{\Score\scon}\defeq\domnp\conc\scon\\
  \domnp\conc{\Ifleq\scon\sterm\stermb}&\defeq\domnp\conc\scon\cap\domnp\conc\sterm\cap\domnp\conc\stermb
\end{align*}
by
\iffalse
\lo{NOTATION. A small point. I have replaced $\equiv$ by $\defeq$ in the following clauses: it may confuse readers to see $\defeq$ above and $\equiv$ below in one definition.
More generally, I think we agree to use $\defeq$ for definitional equality.
Some authors (notably Barendregt) use $\equiv$ to distinguish syntactic equality.
Personally I would prefer to use only $\defeq$ and $=$, and to clarify in words (in the small number of situations) where we need to emphasise equality in syntax.
}
\fi
\begin{align*}
  \conc{[]}(\tupf r,\tupf s)&\defeq[]\\
  \conc{\scon\,\sterm}(\tupf r,\tupf s)&\defeq(\conc\scon(\tupf r,\tupf s))\,(\conc\sterm(\tupf r,\tupf s))\\
  \conc{(\lambda y\ldotp\sterm)\,\scon}(\tupf r,\tupf s)&\defeq(\lambda y\ldotp\conc\sterm(\tupf r,\tupf s))\,(\conc\scon(\tupf r,\tupf s))\\
  \omit\rlap{$\conc{\PCF f(\sval_1,\ldots,\sval_{\ell-1},\scon,\sterm_{\ell+1},\ldots,\sterm_n)}(\tupf r,\tupf s)$}\\
  \omit\rlap{$\defeq\PCF f(\conc{\sval_1}(\tupf r,\tupf s),\ldots, \conc{\sval_{\ell-1}}(\tupf r,\tupf s),\conc{\scon}(\tupf r,\tupf s), \conc{\sterm_{\ell+1}}(\tupf r,\tupf s),\ldots, \conc{\sterm_n}(\tupf r,\tupf s))$}\\
  \conc{\Y\scon}(\tupf r,\tupf s)&\defeq\Y(\conc\scon(\tupf r,\tupf s))\\
  \conc{\Ifleq\scon\sterm\stermb}(\tupf r,\tupf s)&\defeq\Ifleq {(\conc\scon(\tupf r,\tupf s))}{(\conc\sterm(\tupf r,\tupf s))}{(\conc\stermb(\tupf r,\tupf s))}\\
  \conc{\Score\scon}(\tupf r,\tupf s)&\defeq\Score(\conc\scon(\tupf r,\tupf s))
\end{align*}
\end{definition}

\correspondenceprop*
\begin{proof}
  Suppose that $\sconfig\sterm\sweight U$ is a symbolic configuration and $(\tupf r,\tupf s)\in U$.

  If $\sterm$ is a symbolic value then $\conc\sterm\args$ is a value \cref{lem:instval} and there is nothing to prove.

  Otherwise, by \cref{lem:basic0},  there exists unique $\scon$ and $\sredex$ such that $\sterm\equiv\scon[\sredex]$. Thus we can define the context $E\equiv\conc\scon\args$ and redex $\redexa\equiv\conc\sredex\args$ (see\ \cref{lem:instred}), and it holds by \cref{eq:substcon}, $\conc\sterm\args\equiv E[\redexa]$.
  \begin{enumerate}
  \item If
  $\sconfig\sredex\sweight U\sred\sconfig\scontra{\sweight'}{U'}$ and $\argsc\in U'$ then by case inspection (see \cref{lem:symcon1aux}), $\config\redexa{\sweight\args}\trace\red\config\contra{\sweight'\argsc}{\trace\concat\traceb}$ such that $\contra\equiv\conc\scontra\argsc$.
  So,
  $\config{E[\redexa]}{\sweight\args}\trace\red\config{E[\contra]}{\sweight'\argsc}{\trace\concat\traceb}$
  and by the substitution \cref{eq:substcon}, $E[\contra]\equiv\conc{\scon[\scontra]}\argsc$.
  \item Conversely, if $\config\redexa{\sweight\args}\trace\red\config\contra{w'}{\tupf{s'}}$ then a simple case analysis (see \cref{lem:consym1aux}) shows that for some $\scontra$, $\sweight'$ and $U'$, $\sconfig\sredex\sweight U\sred\sconfig\scontra{\sweight'}{U'}$ such that $\conc\scontra(\tupf r,\tupf{s'})\equiv\contra$, $\sweight(\tupf r,\tupf{s'})=w'$ and $(\tupf r,\tupf{s'})\in U'$.
    Thus also $\sconfig{\scon[\sredex]}\sweight U\sred^*\sconfig{\scon[\scontra]}{\sweight'}{U'}$ and by the Substitution \cref{eq:substcon}, we have $\conc{\scon[\scontra]}(\tupf r,\tupf s)\equiv E[\redexa]$.
  \end{enumerate}
\end{proof}

\subsection{Supplementary Materials for \Cref{subsec:main result}}
\label{app:diffast}

\lemmavalid*
\begin{proof}
  For the $\mathsf{Score}$-rule this is due to \cref{lem:seva} and the fact that differentiable functions are closed under multiplication. For the other rules differentiability of $\sweight'$ is obvious.

  Furthermore, note that $\mu(\boundary\{(\tupf r,\tupf s\concat[s'])\mid (\tupf r,\tupf s)\in U\land s'\in(0,1)\})=\mu(\boundary U)$ and
  for symbolic values $\sval$,
  \begin{align*}
    \mu\left(\boundary\left(\seva\sval^{-1}(-\infty,0]\right)\right)=\mu\left(\boundary\left(\seva\sval^{-1}(0,\infty)\right)\right)=\mu\left(\boundary\left(\seva\sval^{-1}[0,\infty)\right)\right)=0
  \end{align*}
  because of \cref{lem:seva,ass:pop}.
  % Finally, note that again by \cref{lem:seva,ass:pop},
  % \begin{align*}
  %   \mu\left(\boundary\left(\domnp\seva\sval\right)\right)\leq\mu\left(\boundary\left(\seva\sval^{-1}(-\infty,0]\right)\right)=0
  % \end{align*}
  \dw{Is this clear enough?}
  Consequently, due to the general fact that ${\boundary(U\cap V)} \subseteq {{\boundary U} \cup {\boundary V}}$ \dw{reference}, in any case, $\mu(\boundary{U'})=0$.
  \end{proof}

\maxtermlemma*
\begin{proof}
  Let $T\in\Borel^m$ be such that $\mu(\Real^m\setminus T)=0$ and for every $\tupf r\in T$, $M[\tupf{\PCF r}/\tupf x]$ terminates almost surely.
  For $\tupf r\in\Real^m$ we use the abbreviations
\[
   \trmaxr\defeq\{\trace\in\traces\mid\args\in\trmax\}\qquad\trtermr\defeq\{\trace\in\traces\mid\args\in\trterm\}
\]
  and we can argue analogously to $\trmax$ and $\trterm$ that they are measurable.
  Similarly to \cref{lem:termleq1}, for all $\tupf r\in\Real^m$, $\mu(\trmaxr)\leq 1$ because $(\trace\concat\traceb)\in\trmaxr$ and $\traceb\neq\emptytrace$ implies $\trace\notin\trmaxr$.

  Therefore, for every $\tupf r\in T$, $\mu(\trmaxr\setminus\trtermr)=0$.
  Finally, due to a consequence of Fubini's theorem (\cref{lem:caval}) and the fact that the Lebesgue measure is $\sigma$-finite,
  \[
    \mu(\trmax\setminus\trterm)=\mu(\{\args\in\Real^m\times\traces\mid \tupf s\in\trmaxr\setminus\trtermr)=0
   \]
\end{proof}

\begin{lemma}
  \label{lem:caval}
  Let $(X,\Sigma_X,\mu)$ and $(Y,\Sigma_Y,\nu)$ be $\sigma$-finite measure spaces. Suppose that $U\in\Sigma_X$ and that for every $r\in X$, $V_r\in\Sigma_Y$, and $W\defeq\{(r,s)\in X\times Y\mid s\in V_r\}$ is measurable.

  If  $\mu(X\setminus U)=0$ and for every $r\in U$, $\mu(V_r)=0$ then $\mu(W)=0$.
\end{lemma}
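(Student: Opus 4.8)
The plan is a direct application of Tonelli's theorem, so I would keep the argument short. Since $(X,\Sigma_X,\mu)$ and $(Y,\Sigma_Y,\nu)$ are $\sigma$-finite and $W\in\Sigma_{X\times Y}$ by hypothesis, Tonelli's theorem applies to the non-negative measurable function $\charfn{W}$ on $X\times Y$. For each $r\in X$ the section $W_r\defeq\{s\in Y\mid (r,s)\in W\}$ is by definition of $W$ exactly $V_r$, and Tonelli guarantees both that $r\mapsto \nu(W_r)=\nu(V_r)$ is $\Sigma_X$-measurable and that
\[
(\mu\otimes\nu)(W)=\int_{X\times Y}\charfn{W}\,\diff(\mu\otimes\nu)=\int_X \nu(V_r)\,\diff\mu(r).
\]

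Next I would bound the integrand. By hypothesis $\nu(V_r)=0$ for every $r\in U$, and $\mu(X\setminus U)=0$, so the function $r\mapsto\nu(V_r)$ vanishes $\mu$-almost everywhere. The integral of a non-negative measurable function that is zero $\mu$-a.e.\ is zero, whence $(\mu\otimes\nu)(W)=0$, which is the claim (the ambient measure on $X\times Y$ being the product measure, as in the rest of the paper).

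There is essentially no obstacle: the only bookkeeping is that measurability of $r\mapsto\nu(W_r)$ need not be assumed separately, as it is part of the conclusion of Tonelli's theorem in the $\sigma$-finite setting, and $W$ itself is assumed measurable. The instantiation used in \cref{lem:maxterm} is $X=\Real^m$, $Y=\traces$, $U=T$ and $V_r=\trmaxr\setminus\trtermr$; here $\Real^m$ is $\sigma$-finite and $\traces=\bigcup_n\traces_n$ is $\sigma$-finite since each $\traces_n=(0,1)^n$ has finite measure, so the hypotheses are met and the conclusion $\mu(\trmax\setminus\trterm)=0$ follows.
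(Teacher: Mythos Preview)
Your proof is correct and takes essentially the same approach as the paper: both compute $(\mu\otimes\nu)(W)$ via Fubini/Tonelli as $\int_X \nu(V_r)\,\diff\mu(r)$ and observe the integrand vanishes $\mu$-a.e. The only difference is cosmetic: the paper first decomposes $X$ and $Y$ into finite-measure pieces $X_n,Y_n$ and applies Fubini on each $W_n=W\cap(X_n\times Y_n)$, whereas you invoke Tonelli directly in the $\sigma$-finite setting, which is slightly more streamlined.
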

\begin{proof}
  Let $X_n\in\Sigma_X$ and $Y_n\in\Sigma_Y$ (for $n\in\Nat$) be such that $X=\bigcup_{n\in\Nat}X_n=X$, $Y=\bigcup_{n\in\Nat} Y_n$ and $\mu(X_n)=\nu(Y_n)<\infty$ for every $n\in\Nat$. Define $W_n\defeq W\cap (X_n\times Y_n)$. Clearly $(\mu\times\nu)(W_n)$ is finite.

  By assumption the characteristic function $\charfn W\from X\times Y\to\pReal$ is measurable. By Fubini's theorem \cite[Thm.~1.27]{K02}, for every $n\in\Nat$,
  \begin{align*}
    \mu(W_n)=\int_{{X_n}\times {Y_n}}(\diff(\mu\times\nu))\charfn W=\int_{X_n}(\diff\mu)\int_{Y_n}(\diff\nu)\charfn W% &\text{Fubini's theorem}\\
    =\int_{U\cap X_n}(\diff\mu)\bblambda r\ldotp\nu(V_r)% &\mu({X_n}\setminus U)=0\\
    % &=\int_{U\cap X_n}(\diff\mu)0=
        =0
  \end{align*}
  % \dw{notation?} \lo{OK: it is consistent with the notational convention in \cref{subsec:measures and densities}.}
  The third equation is due to $\mu({X_n}\setminus U)=0$.
  The claim is immediate by $W=\bigcup_{n\in\Nat}W_n$.
\end{proof}

\fi

%\newpage

%\tableofcontents

\end{document}